\newtheorem{claim}{Claim}
\begin{document}

\title{Continuous Non-monotone DR-submodular Maximization with Down-closed Convex Constraint}

\author{\name Shengminjie Chen \email chenshengminjie19@mails.ucas.ac.cn \\
       \addr School of Mathematical Sciences\\
       University of Chinese Academy of Science\\
       Beijing, 100049, P.R. China
       \AND
       \name Donglei Du \email ddu@unb.ca \\
       \addr Faculty of Management\\
       University of New Brunswick\\
       Fredericton, E3B 5A3, New Brunswick, Canada
       \AND Wenguo Yang \thanks{Corresponding Author}\email yangwg@ucas.ac.cn \\
       \addr School of Mathematical Sciences\\
       University of Chinese Academy of Science\\
       Beijing, 100049, P.R. China
       \AND
       Dachuan Xu \email xudc@bjut.edu.cn \\
       \addr Beijing Institute for Scientific and Engineering Computing\\
       Beijing University of Technology\\
       Beijing, 100124, P.R. China
       \AND
       Suixiang Gao \email sxgao@ucas.ac.cn \\
       \addr School of Mathematical Sciences\\
       University of Chinese Academy of Science\\
       Beijing, 100049, P.R. China
       }

\editor{}

\maketitle

\begin{abstract}
We investigate the continuous non-monotone DR-submodular maximization problem subject to a down-closed convex solvable constraint. Our first contribution is to construct an example to demonstrate that (first-order) stationary points can have arbitrarily bad approximation ratios, and they are usually on the boundary of the feasible domain. These findings are in contrast with the monotone case where any stationary point yields a $1/2$-approximation (\cite{Hassani}). Moreover, this example offers insights on how to design improved algorithms by avoiding bad stationary points, such as the restricted continuous local search algorithm (\cite{Chekuri}) and the aided measured continuous greedy (\cite{Buchbinder2}). However, the analyses in the last two algorithms only work for the discrete domain because both need to invoke the inequality that the multilinear extension of any submodular set function is bounded from below by its Lovasz extension. Our second contribution, therefore, is to remove this restriction and show that both algorithms can be extended to the continuous domain while retaining the same approximation ratios, and hence offering improved approximation ratios over those in \cite{Bian1} for the same problem. At last, we also include numerical experiments to demonstrate our algorithms on problems arising from machine learning  and artificial intelligence.
\end{abstract}

\begin{keywords}
  Non-monotone DR-Submodular Maximization, Down-Closed Convex Constraints, Approximation Algorithms
\end{keywords}

\tableofcontents

\section{Introduction}\label{section 1}
Submodular optimization is an important topic in discrete optimization. A set function $f : 2^V \rightarrow \mathbb{R}_+$ is submodular if for every $X,Y\subseteq V$, $f(X)+f(Y) \ge f(X \cap Y) + f(X\cup Y)$. Equivalently, submodularity can be expressed by the diminishing return property in set functions, that is, for every $X \subseteq Y \subseteq V$ and for every $u \in V \setminus Y$, $f(\{u\}\cup X) - f(X) \ge f(\{u\} \cup Y) - f(Y)$. 

Submodular functions arise naturally in numerous theoretical studies and practical applications. Submodularity captures one important commonality of many functions investigated in the literature, such as entropy functions, cut functions, coverage functions, etc. Although it can be minimized in polynomial time (\cite{Iwata}), maximizing submodular set functions is an NP-hard problem and can be approximately solved in polynomial time. Seminal works were proposed by \cite{Nemhauser} ($1-e^{-1}$, monotone, cardinality constraints) and \cite{Fisher} ($(1+p)^{-1}$, monotone, intersection of $p$ matroids). Following these work, further results also appeared, such as \cite{Sviridenko} ($1-e^{-1}$, monotone, knapsack constraints) and \cite{Buchbinder} ($1/2$, non-monotone, unconstrained). The above approximation algorithms have been adopted in some application fields such as influence maximization (\cite{Kempe}), active learning (\cite{Golovin}), sensor placement (\cite{Krause}), submodular point processes (\cite{Iyer}).

The multilinear extension has played an important role in maximizing submodular set functions. For example, the continuous greedy ($1-e^{-1}$, monotone, down-closed) (\cite{Calinescu}), the restricted local search ($0.309$, non-monotone, down-closed) (\cite{Chekuri}), the repeated local search ($1/4$, non-monotone, down-closed) (\cite{Chekuri}), the random local search ($2/5$, non-monotone, down-closed) (\cite{Feige}), the measured continuous greedy ($1/e$, non-monotone, down-closed) (\cite{Feldman}), the measured continuous greedy + the continuous double greedy ($0.372$, non-monotone, down-closed) (\cite{Ene}), and the local search + the measured continuous greedy ($0.385$, non-monotone, down-closed) (\cite{Buchbinder2}). These algorithms return a fractional solution, which then can be rounded to an integral solution via such techniques as pipage rounding (\cite{Calinescu}), swap rounding (\cite{Chekuri1}), and contention resolution scheme (\cite{Chekuri}).

However, the aforementioned results concentrate on combinatorial optimization problems; that is, feasible solutions belong to a discrete domain (binary or integer vectors). Recently, submodular optimization in the continuous domain (namely real vectors) becomes an attractive research field due to its vast applications, in particular, machine learning and artificial intelligence. \cite{Bach} extended the submodular set function minimization problem to the continuous domain. \cite{marinacci2005ultramodular} and \cite{Bian} characterize the difference between submodular functions and DR-submodular functions in the continuous domain. \cite{Bian} proposed a $(1-e^{-1})$ approximation algorithm to maximize monotone continuous DR-submodular functions under down-closed convex constraints. \cite{Hassani} adopted the projected gradient ascent method to maximize monotone continuous DR-submodular functions over general convex constraints with a $1/2$ approximation guarantee. For non-monotone cases, \cite{Bian1} adopted a two-phase Frank-Wolfe algorithm with a $1/4$ approximation ratio and extended the measured continuous greedy from the multilinear extension of submodular set functions to continuous DR-submodular functions subject to down-closed convex constraints with a $1/e$ approximation ratio. \cite{Niazadeh} proposed a $1/2$ approximation algorithm subject to box constraints, which improves prior works in \cite{Bian}. If the feasible domain is a general convex set containing the origin, \cite{du2022improved} proposed a Frank-Wolfe algorithm with a $1/4$ approximation ratio, which is proved to be tight by \cite{Mualem}. In addition, \cite{Du1} proposed a systematic framework, namely the Lyapunov approach, for approximation algorithms design and analyses in the continuous domain and then a systematic approach to discretize the algorithm. These results have been applied to many application problems, such as revenue maximization (\cite{Bian3}), mean-field inference (\cite{Bian2}), robust budget allocation (\cite{Staib}), determinantal point processes (\cite{Gillenwater}), etc.

Although many advances have been made for DR-submodular maximization in both discrete and continuous domains, there are still many important open problems to be addressed.

\begin{itemize}
    \item Is the approximation ratio 1/2 in \cite{Hassani}  still valid for the stationary points in the non-monotone case? If not, what is the approximation ratio in the non-monotone case?
    \item The restricted continuous local search in \cite{Chekuri} and the aided measured continuous greedy in \cite{Buchbinder2} only work for the discrete domain because their proofs strongly depend on the multilinear extension and the Lovasz extension. Can we extend these algorithms to the continuous domain without worsening the approximation ratios? if yes, can the continuous domain offers us more flexibility in tuning the parameters used in the algorithms, and hence potentially improving the approximation ratios?
\end{itemize}

The main focus of this work is to answer these questions on the following problem.
\begin{equation}\label{eq:main_prob}
    \max_{\mathbf{x} \in \mathcal{P}} F(\mathbf{x})
\end{equation}
where $\mathcal{P}\subseteq [0,1]^n$ is down-closed, convex, and solvable, and $F(\mathbf{x}): [0,1]^n\mapsto \mathbb{R}_{\ge 0}$ is continuous, non-negative, non-monotone, and DR-submodular.

{\bf Contribution.} To answer the first question, we construct an example from the multilinear extension of regular coverage set functions, which is non-negative, non-monotone and DR-submodular. With this concrete example, we propose a minimization problem to show that the ratio between the values of the worst stationary point and the optimal solution approaches $0$, implying that the stationary point can have arbitrarily bad approximation ratio for non-monotone DR-submodular objective functions. Moreover, the worst stationary point tends to be close to the boundary of the feasible domain. These fundings may explain why the restricted continuous local search (\cite{Chekuri}) and the aided measured continuous greedy (\cite{Buchbinder2}) produce better approximation ratios. 

For the second question, we offer an essentially different analysis to show the same upper bounds on the optimal value used in the $0.309$ (\cite{Chekuri}) and the $0.385$  (\cite{Buchbinder2}) approximation algorithm for the continuous domain. The key new ingredient in our analysis is to eliminate the multilinear and the Lovasz extensions and utilize solely the DR-submodularity. The aforementioned second upper bound, together with a tailor-made  Lyapunov function, allow us to adopt the Lyapunov framework introduced in \cite{Du-submodular-book2022} to present an $0.385$ approximation algorithm for the continuous domain problem in (\ref{eq:main_prob}), and hence extending the \emph{aided measured continuous greedy algorithm} from the discrete domain to the continuous domain. Moreover, we show that this $0.385$ approximation algorithm is `optimal' in the sense that the algorithm is the optimal solution of certain variation problem. This perspective offers insights on which directions to devote (or avoid) our efforts to potentially design improved approximation algorithms for the problem.

At last, we also include numerical experiments to demonstrate our algorithms on problems arising from machine learning  and artificial intelligence.
\begin{table}
    \centering
    \caption{Summary of DR-submodular maximization with down-closed convex constraints}
    \begin{tabular}{m{6.5cm} m{5cm} m{3cm}}
        \makecell[c]{Reference} & \makecell[c]{Setting} & \makecell[c]{Guarantee} \\
         \hline
         \makecell[c]{Repeated Local Search\\ (\cite{Chekuri})}& \makecell[c]{discrete, non-monotone\\ DR-submodular}  &\makecell[c]{$1/4 - O(\varepsilon)$} \\ 
         \hline
         \makecell[c]{Two Stage Frank-Wolfe \\ (\cite{Bian1})} & \makecell[c]{continuous, non-monotone\\ DR-submodular} & \makecell[c]{$1/4 - O(\varepsilon)$} \\
         \hline 
         \makecell[c]{Continuous Greedy \\ (\cite{Calinescu})}& \makecell[c]{discrete, monotone\\ DR-submodular}  & \makecell[c]{$1-e^{-1} - O(\varepsilon)$}  \\
         \hline
         \makecell[c]{Frank-Wolfe Variant\\ (\cite{Bian})}& \makecell[c]{continuous, monotone\\ DR-submodular}  & \makecell[c]{$1-e^{-1} - O(\varepsilon)$} \\
 		\hline
         \makecell[c]{Measured Continuous Greedy \\ (\cite{Feldman})} & \makecell[c]{discrete, non-monotone\\ DR-submodular}  & \makecell[c]{$e^{-1} - O(\varepsilon)$}\\
         \hline 
         \makecell[c]{Frank-Wolfe Variant\\ (\cite{Bian1})} & \makecell[c]{continuous, non-monotone\\ DR-submodular}  & \makecell[c]{$e^{-1} - O(\varepsilon)$}\\
         \hline
         \makecell[c]{Restricted Local Search \\ (\cite{Chekuri})} & \makecell[c]{discrete, non-monotone\\ DR-submodular} & \makecell[c]{$0.309 - O(\varepsilon)$} \\
         \hline 
         \makecell[c]{Aided Measured Continuous Greedy \\ (\cite{Buchbinder2})} & \makecell[c]{discrete, non-monotone\\ DR-submodular} & \makecell[c]{$0.385-O(\varepsilon)$} \\ 
         \hline 
         \makecell[c]{Stationary Points \\ \cite{Hassani}}& \makecell[c]{monotone \\ DR-submodular}  &\makecell[c]{$1/2$} \\ 
         \hline
         \makecell[c]{Stationary Points \\ \textbf{This work}}& \makecell[c]{non-monotone \\ DR-submodular}  &\makecell[c]{$0$} \\ 
         \hline
         \makecell[c]{Stationary Points \\ (reduced feasible domain)\\ \textbf{This work} } & \makecell[c]{non-monotone \\ DR-submodular}  & \makecell[c]{$0.309 - O(\varepsilon)$} \\
         \hline 
         \makecell[c]{Aided Frank-Wolfe Variant \\ \textbf{This work}} & \makecell[c]{continuous, non-monotone\\ DR-submodular} & \makecell[c]{$0.385 - O(\varepsilon)$ } \\
         \hline
    \end{tabular}
    \label{summary results}
\end{table}

{\bf Applications.} Continuous DR-submodular functions are ubiquitous in many applications. Below are some examples that are common in discrete optimization, machine learning, and data mining. 
\begin{itemize}
    \item Multilinear extensions of submodular set functions (ME): The multilinear extension of each submodular set function is a special DR-submodular function that are infinitely differentiable. The multilinear extension of the set function $f : 2^{[n]} \rightarrow \mathbb{R}$ is defined as the function $F(\mathbf{x}): [0,1]^{n}\mapsto \mathbb{R}$ with 
    \[
    F(\mathbf{x}) = \sum_{S \subseteq [n]} f(S) \prod_{i \in S}x_i \prod_{j \in [n]\setminus S} (1-x_j),
    \] 
    where $[n]:=\{1,\ldots, n\}$.
        \item Non-convex/concave quadratic programming (NQP): The goal is to maximize $F(\mathbf{x}) = \frac{1}{2} \mathbf{x}^T \mathbf{H} \mathbf{x} + \mathbf{h}^T \mathbf{x} +c$ subject to linear constraints, where all entries of Hessian $H$ are non-positive. This type of problems arises in many applications such as scheduling (\cite{Skutella}), and demand forecasting (\cite{Ito}), among many others. 
    
    \item Softmax extension of determinantal point processes (DPPs): DPPs is an important probabilistic model in machine learning, which can be adopted as generative models in applications such as document summarization (\cite{Gillenwater,Bian1}), news threading tasks (\cite{Kulesza}), etc. The formal description is as follows:
    \begin{equation}\label{dpps}
        F(\mathbf{x}) = \log \det \left[\text{diag} (\mathbf{x}) (\mathbf{L}-\mathbf{I}) + \mathbf{I} \right]
    \end{equation}
    where $\text{diag}(\mathbf{x})$ is the diagonal matrix with the vector $\mathbf{x} \in [0,1]^n$ and $\mathbf{I}$ is the identity matrix.
    
    \item Revenue/Profit maximization in online social networks  (RM/PM): This problem comes from the ``word-of-mouth" effects in online social networks. Sellers provide a subset of users in online social networks with some free samples, such as discounts, free products, etc, to maximize the total revenue/profit when other users buy their product. In \cite{Bian,Tang}, the objective function has two parts, the revenue gain and the revenue loss/cost. In this work, we adopt the form in \cite{Bian}, where $x_t$ represents the free sample of user $t$, $R_s(\mathbf{x})$ is the non-negative, non-decreasing, DR-submodular revenue gain, $\bar{R}_t(\mathbf{x})$ is the non-positive, non-increasing DR-submodular revenue loss, and $\phi(x_t)$ is the modular revenue gain from user $t$ who received the free sample.
    \begin{equation}\label{revenue function}
        F(\mathbf{x}) = \alpha \sum_{s:x_s = 0} R_s(\mathbf{x}) + \beta \sum_{t:x_t \neq 0} \phi (x_t) + \gamma \sum_{t:x_t \neq 0} \bar{R}_t(\mathbf{x})
    \end{equation}
\end{itemize}

{\bf Organization.} The rest of this work is organized as follows. First, we present some basic definitions and lemmas in Section 2. In Section 3, we illustrate the negative result of stationary points for non-monotone scenarios by an example and prove the $0.309$ approximation guarantee of stationary points in the reduced feasible domain. In Section 4, we propose unified and novel analyses of the upper bound on the optimal value for the continuous domain. Then we use a tailor-make Lyapunov function to our problem to present a $0.385$ approximation algorithm, which is shown to be the optimal solution of a variation problem. Numerical experiments are presented in Section 5. We provide concluding remarks in Section 6.

\section{Preliminary}\label{section 2}
We first introduce some relevant definitions and lemmas. The first concept is the submodular function on the real lattice $\mathbb{R}^n$, which is a special case of the submodular function on arbitrary lattice (see e.g.,~\cite{Topkis1998}). 
\begin{definition}
    A function $F(\mathbf{x}): \mathbb{R}^n\mapsto \mathbb{R}$ is submodular if 
    \begin{equation}
        F(\mathbf{x}) +F(\mathbf{y}) \ge F(\mathbf{x} \vee \mathbf{y}) + F(\mathbf{x} \wedge \mathbf{y}), \forall \mathbf{x},\mathbf{y} \in \mathbb{R}^n,
    \end{equation}
    where $\vee$ and $\wedge$ are the coordinate-wise maximum and minimum: ${x}_i \vee {y}_i = \max \{x_i,y_i\}$ and ${x}_i \wedge {y}_i= \min \{x_i,y_i\}$, $\forall i \in [n]$. 
\end{definition}
Note that a twice-differentiable function $F(\mathbf{x})$ is submodular iff
    \begin{equation}
     \cfrac{\partial ^2 F(\mathbf{x})}{\partial x_i \partial x_j} \le 0, \text{ for }\forall \mathbf{x} \in \mathcal{X}, \forall i \neq j.
        \end{equation}

The above definition implies that submodular functions are generally non-convex and non-concave. In addition, the above definition has no restriction on the diagonal elements of the Hessian matrix, i.e., $\frac{\partial^2 F(\mathbf{x})}{\partial x^2_i} \in \mathbb{R}$. Naturally, if $\frac{\partial^2 F(\mathbf{x})}{\partial x^2_i}$ is further required to be non-positive, then it implies a stronger submodularity, namely DR-submodularity, which is coined in \cite{Bian}. However, the same concept (under a different name) was already investigated systematically in \cite{marinacci2005ultramodular} in more general settings.

\begin{definition}[\cite{marinacci2005ultramodular,Bian}]
    A function $F(\mathbf{x})$ is defined on subsets of $\mathbb{R}^n: \mathcal{X}  = \prod_{i =1} ^n \mathcal{X}_i$, where each $\mathcal{X}_i$ is a compact subset of $\mathbb{R}$. If $F(\mathbf{x}) : \mathcal{X} \rightarrow \mathbb{R}$ is DR-submodular, for $\forall \mathbf{x} \le \mathbf{y} \in \mathcal{X}$, $\forall i \in [n]$, $\forall k \in \mathbb{R}_+$ such that $k\mathbf{e}_i + \mathbf{x}$ and $k \mathbf{e}_i + \mathbf{y}$ are still in $\mathcal{X}$, the following inequality holds
    \begin{equation}
        F(\mathbf{x} + k \mathbf{e}_i) - F(\mathbf{x}) \ge F(\mathbf{y} + k \mathbf{e}_i) - F(\mathbf{y})
    \end{equation}
    where $\mathbf{e}_i$ represents the unit vector and $\mathbf{x}\le \mathbf{y}$ means $x_i \le y_i, \forall i \in [n]$. A twice-differentiable function $F(\mathbf{x})$ is DR-submodular iff
    \begin{equation}
     \cfrac{\partial ^2 F(\mathbf{x})}{\partial x_i \partial x_j} \le 0, \text{ for }\forall \mathbf{x} \in \mathcal{X}
    \end{equation}
\end{definition}

It is easy to show that the multilinear extension of a submodular set function is infinitely differentiable and DR-submodular. In addition, a differentiable function is DR-submodular iff  $\forall \mathbf{x}\le \mathbf{y} \in \mathcal{X}$, $\nabla F(\mathbf{x}) \ge \nabla F(\mathbf{y})$. Based on this observation, naturally, there is the following lemma.

\begin{lemma}[\cite{marinacci2005ultramodular,Calinescu,Bian}]\label{lemma 3}
    If $F(\mathbf{x})$ is DR-submodular, then $F(\mathbf{x})$ is concave along any non-negative directions. 
\end{lemma}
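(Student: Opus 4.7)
The plan is to show that for any $\mathbf{x} \in \mathcal{X}$ and any non-negative direction $\mathbf{v} \ge \mathbf{0}$ along which $\mathbf{x} + t\mathbf{v}$ stays in $\mathcal{X}$, the univariate restriction $g(t) := F(\mathbf{x} + t\mathbf{v})$ is concave in $t$. I would give the argument in two flavors: a short Hessian-based proof valid when $F$ is twice differentiable (which covers the main applications listed in the paper, such as the multilinear extension, NQP, and the softmax DPP extension), followed by a purely combinatorial argument that uses only the increment form of DR-submodularity and thus applies in full generality.

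For the twice-differentiable case, the chain rule gives
\begin{equation*}
g''(t) \;=\; \mathbf{v}^\top \nabla^2 F(\mathbf{x}+t\mathbf{v})\,\mathbf{v} \;=\; \sum_{i,j} v_i v_j \,\frac{\partial^2 F}{\partial x_i \partial x_j}(\mathbf{x}+t\mathbf{v}).
\end{equation*}
By the twice-differentiable characterization of DR-submodularity stated just before the lemma, every Hessian entry satisfies $\partial^2 F/\partial x_i \partial x_j \le 0$ (including diagonal entries, which is where DR-submodularity goes beyond mere submodularity). Since $\mathbf{v}\ge\mathbf{0}$ forces $v_i v_j \ge 0$ for all $i,j$, every summand is non-positive, so $g''(t)\le 0$ and $g$ is concave.

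For the general (not necessarily differentiable) case, I would prove the discrete concavity inequality $g(t+h)-g(t)\le g(s+h)-g(s)$ for all $s\le t$ and all $h\ge 0$. Because $\mathbf{v}\ge\mathbf{0}$ we have $\mathbf{x}+s\mathbf{v}\le \mathbf{x}+t\mathbf{v}$, so I would decompose the step $h\mathbf{v}=\sum_{i=1}^n (hv_i)\mathbf{e}_i$ via the partial sums $\mathbf{z}_j := \sum_{i\le j} hv_i \mathbf{e}_i$ with $\mathbf{z}_0=\mathbf{0}$ and $\mathbf{z}_n=h\mathbf{v}$. Applying the DR-submodular increment inequality one coordinate at a time to the ordered pair $\mathbf{x}+s\mathbf{v}+\mathbf{z}_{j-1}\le \mathbf{x}+t\mathbf{v}+\mathbf{z}_{j-1}$ with scalar step $hv_j\ge 0$ along $\mathbf{e}_j$, and then telescoping the resulting $n$ inequalities, yields exactly $g(t+h)-g(t)\le g(s+h)-g(s)$. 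Combined with continuity of $F$ on the line segment, this midpoint/discrete concavity inequality upgrades to concavity of $g$ in the usual way.

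The main obstacle is essentially bookkeeping in the telescoping step, not a conceptual difficulty: one must verify that each intermediate base point $\mathbf{x}+s\mathbf{v}+\mathbf{z}_{j-1}$ stays dominated by $\mathbf{x}+t\mathbf{v}+\mathbf{z}_{j-1}$ (immediate from $s\le t$ and $\mathbf{v}\ge \mathbf{0}$) and that both points together with their coordinate-$j$ increments remain in $\mathcal{X}$, which follows from the box structure $\mathcal{X}=\prod_i \mathcal{X}_i$ assumed in the DR-submodularity definition. No new estimate beyond the defining inequality is needed.
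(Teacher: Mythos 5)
The paper does not prove this lemma at all --- it is stated with citations to \cite{marinacci2005ultramodular,Calinescu,Bian} and used as a known fact --- so there is no in-paper argument to compare against. Your proof is correct and is the standard one: the Hessian computation $g''(t)=\sum_{i,j}v_iv_j\,\partial^2 F/\partial x_i\partial x_j\le 0$ settles the twice-differentiable case (and this is where the non-positive diagonal entries, i.e.\ DR-submodularity rather than plain submodularity, are genuinely needed together with $v_iv_j\ge 0$), while the coordinate-by-coordinate telescoping of the increment inequality over the partial sums $\mathbf{z}_j$ correctly yields the decreasing-increments property $g(s+h)-g(s)\ge g(t+h)-g(t)$ for $s\le t$, $h\ge 0$, which with the continuity of $F$ assumed in problem (\ref{eq:main_prob}) upgrades to concavity of $g$. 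The only point worth making explicit is that the intermediate points $\mathbf{x}+s\mathbf{v}+\mathbf{z}_{j-1}$ lie in $\mathcal{X}$ because each factor $\mathcal{X}_i$ is an interval $[l_i,u_i]$ in this paper's setting (a general compact $\mathcal{X}_i\subseteq\mathbb{R}$ would not guarantee this, but then the lemma's statement about line segments would not make sense either), so your appeal to the box structure is adequate here.
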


Lemma \ref{lemma 3} shows that DR-submodular functions possess certain `concavity' to some extent, implying the following result.
\begin{lemma}[A rephrased version of Lemma 3 in \cite{Bian1}]\label{lemma 4}
    If $F(\mathbf{x})$ is DR-submodular and non-negative, then for any $\mathbf{x},\mathbf{y} \in \mathcal{X}$, we have:
    \begin{equation}
        \begin{aligned}
            F(\mathbf{x}\vee \mathbf{y}) & \ge (1-\alpha) F(\mathbf{y}) \\
            F(\mathbf{x} \wedge \mathbf{y}) & \ge \beta F(\mathbf{y})
        \end{aligned}
    \end{equation} 
    where $\alpha \ge \max_{i \in [n]} \frac{x_i-l_i}{u_i-l_i}$, $\beta \le \min_{i \in [n]} \frac{x_i -l_i}{u_i -l_i}$, and $l_i,u_i$ are the lower bound and the upper bound of $\mathcal{X}_i$, respectively.
\end{lemma}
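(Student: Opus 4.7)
The plan is to prove both inequalities by expressing $\mathbf{x}\vee\mathbf{y}$ and $\mathbf{x}\wedge\mathbf{y}$ as convex combinations of $\mathbf{y}$ with auxiliary feasible points along coordinate-monotone segments, and then invoking Lemma~\ref{lemma 3} together with the non-negativity of $F$.

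For the first inequality, I would produce a point $\mathbf{p}\in\mathcal{X}$ with $\mathbf{p}\ge\mathbf{y}$ satisfying
\begin{equation*}
    \mathbf{x}\vee\mathbf{y} = (1-\alpha)\mathbf{y}+\alpha\mathbf{p}.
\end{equation*}
A coordinatewise inversion suggests taking $p_i=y_i$ whenever $y_i\ge x_i$, and $p_i=\frac{x_i-(1-\alpha)y_i}{\alpha}$ otherwise. The only place the hypothesis on $\alpha$ enters is verifying $\mathbf{p}\in\mathcal{X}$: in the non-trivial case $y_i<x_i$, the bounds $p_i\ge y_i$ and $p_i\ge l_i$ reduce respectively to $x_i\ge y_i$ and $y_i\ge l_i$, while $p_i\le u_i$ rearranges to $\alpha\ge\frac{x_i-y_i}{u_i-y_i}$. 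I would obtain the latter from $\alpha\ge\frac{x_i-l_i}{u_i-l_i}$ by noting that $t\mapsto\frac{x_i-t}{u_i-t}$ is non-increasing on $[l_i,x_i]$, since its derivative equals $\frac{x_i-u_i}{(u_i-t)^2}\le 0$, so evaluating at $t=y_i\ge l_i$ only decreases the ratio. Once $\mathbf{p}$ is in hand, concavity of $F$ along the non-negative direction $\mathbf{p}-\mathbf{y}$ (Lemma~\ref{lemma 3}) yields
\begin{equation*}
    F(\mathbf{x}\vee\mathbf{y}) \ge (1-\alpha)F(\mathbf{y})+\alpha F(\mathbf{p}) \ge (1-\alpha)F(\mathbf{y}),
\end{equation*}
where the last step uses $F(\mathbf{p})\ge 0$.

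For the second inequality I would mirror the construction: build $\mathbf{r}\in\mathcal{X}$ with $\mathbf{r}\le\mathbf{y}$ such that $\mathbf{x}\wedge\mathbf{y} = \beta\mathbf{y}+(1-\beta)\mathbf{r}$, taking $r_i=y_i$ when $y_i\le x_i$ and $r_i=\frac{x_i-\beta y_i}{1-\beta}$ otherwise. The feasibility check $r_i\le y_i$ reduces to $x_i\le y_i$, while $r_i\ge l_i$ reduces to $\beta\le\frac{x_i-l_i}{y_i-l_i}$, which is implied by the hypothesis $\beta\le\frac{x_i-l_i}{u_i-l_i}$ together with $y_i\le u_i$. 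Applying Lemma~\ref{lemma 3} along the non-negative direction $\mathbf{y}-\mathbf{r}$ then gives
\begin{equation*}
    F(\mathbf{x}\wedge\mathbf{y}) \ge (1-\beta)F(\mathbf{r})+\beta F(\mathbf{y}) \ge \beta F(\mathbf{y}),
\end{equation*}
using $F(\mathbf{r})\ge 0$.

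The main, though mild, obstacle is the algebraic bookkeeping to check that the constructed $\mathbf{p}$ and $\mathbf{r}$ really lie in $\mathcal{X}$; once that is done, each inequality collapses to a single application of concavity plus non-negativity. The degenerate cases $\alpha=0$ and $\beta=1$, in which the formulas above would divide by zero, force $\mathbf{x}\vee\mathbf{y}=\mathbf{y}$ and $\mathbf{x}\wedge\mathbf{y}=\mathbf{y}$ respectively, so the claimed bounds hold trivially and can be disposed of at the outset.
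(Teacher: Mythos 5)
Your proof is correct, and the paper itself does not prove this lemma (it defers to Lemma~3 of \cite{Bian1}); your argument is the standard one from that reference. The decomposition you use --- writing $\mathbf{x}\vee\mathbf{y}$ as $(1-\alpha)\mathbf{y}+\alpha\bigl(\mathbf{y}+\tfrac{1}{\alpha}(\mathbf{x}\vee\mathbf{y}-\mathbf{y})\bigr)$ and invoking concavity along non-negative directions plus non-negativity --- is exactly the construction the paper redeploys in its proof of Lemma~\ref{lemma 14}, so your route matches the intended one, with the feasibility bookkeeping and degenerate cases handled correctly.
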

\begin{lemma}\label{lemma 5}
    For a DR-submodular function $F(\mathbf{x})$ defined on $\mathcal{X} = [\mathbf{l},\mathbf{u}]$, where $\mathbf{l}=(l_1,l_2,...,l_n)$, $\mathbf{u}=(u_1,u_2,...,u_n)$ are the lower bound and the upper bound of $\mathcal{X}$, we denote $\alpha_i = \frac{x_i - l_i}{u_i-l_i}$ for all $i \in [n]$ and $\mathbf{l}_S=(l_{s_1},l_{s_2},...,l_{s_n})$ by $l_{s_i}=l_i$ if $i\in S$ otherwise $l_{s_i} = 0$, similarly, $\mathbf{u}_{[n]\setminus S}$. Then, we have:
    \begin{equation}\label{concave}
        F(\mathbf{x}) \ge \sum_{S \subseteq [n]} F(\mathbf{l}_{[n]\setminus S} \vee \mathbf{u}_{S}) \prod_{i \in S} \alpha_i \prod_{j \in [n] \setminus S } (1-\alpha_j)
    \end{equation}
\end{lemma}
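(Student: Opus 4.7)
The plan is to recognize the right-hand side as an expectation of $F$ under a product distribution on the $2^n$ vertices of the box $[\mathbf{l},\mathbf{u}]$, and then reduce the inequality to $n$ applications of coordinate-wise concavity via a hybrid argument. Define a random vector $\mathbf{Z}$ by independently setting $Z_i = u_i$ with probability $\alpha_i$ and $Z_i = l_i$ with probability $1-\alpha_i$, for each $i \in [n]$. For every $S \subseteq [n]$, the event $\{Z_i = u_i \text{ for } i \in S,\; Z_i = l_i \text{ for } i \notin S\}$ has probability $\prod_{i \in S}\alpha_i \prod_{j \notin S}(1-\alpha_j)$, and the corresponding realization is exactly $\mathbf{l}_{[n]\setminus S} \vee \mathbf{u}_S$ (since the coordinate-wise max pads coordinates in $S$ with $u_i$ and coordinates outside $S$ with $l_i$). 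Hence the right-hand side is precisely $\mathbb{E}[F(\mathbf{Z})]$, and moreover $\mathbb{E}[Z_i] = \alpha_i u_i + (1-\alpha_i) l_i = x_i$ for every $i$, so $\mathbb{E}[\mathbf{Z}] = \mathbf{x}$.

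With this reformulation, it suffices to show $F(\mathbf{x}) \ge \mathbb{E}[F(\mathbf{Z})]$. I would interpolate between the deterministic $\mathbf{x}$ and the random $\mathbf{Z}$ one coordinate at a time. Define hybrid random vectors $\mathbf{W}^{(k)}$ for $k = 0,1,\ldots,n$ by $W^{(k)}_i = Z_i$ if $i \le k$ and $W^{(k)}_i = x_i$ if $i > k$, so that $\mathbf{W}^{(0)} = \mathbf{x}$ and $\mathbf{W}^{(n)} = \mathbf{Z}$. A telescoping argument reduces the goal to proving the single-step inequality $\mathbb{E}[F(\mathbf{W}^{(k-1)})] \ge \mathbb{E}[F(\mathbf{W}^{(k)})]$ for each $k \in [n]$.

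To establish this, condition on $\mathbf{W}^{(k-1)}$. The vectors $\mathbf{W}^{(k-1)}$ and $\mathbf{W}^{(k)}$ agree on all coordinates except the $k$-th: in $\mathbf{W}^{(k-1)}$ the $k$-th coordinate is $x_k = \alpha_k u_k + (1-\alpha_k) l_k$, a convex combination of the endpoints, whereas in $\mathbf{W}^{(k)}$ it is $u_k$ with probability $\alpha_k$ and $l_k$ with probability $1-\alpha_k$. By Lemma \ref{lemma 3}, $F$ is concave along the non-negative direction $\mathbf{e}_k$, hence concave when restricted to the line segment from the point with $l_k$ in coordinate $k$ to the point with $u_k$, all other coordinates fixed to those of $\mathbf{W}^{(k-1)}$. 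Applying the definition of concavity to the two-point mixture at the mean $x_k$ gives $F(\mathbf{W}^{(k-1)}) \ge \alpha_k F(\mathbf{W}^{(k-1)} + (u_k - x_k)\mathbf{e}_k) + (1-\alpha_k) F(\mathbf{W}^{(k-1)} - (x_k - l_k)\mathbf{e}_k) = \mathbb{E}[F(\mathbf{W}^{(k)}) \mid \mathbf{W}^{(k-1)}]$. Taking outer expectations finishes the induction.

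The argument is essentially bookkeeping once the probabilistic interpretation is in place; the only subtle point is verifying that the weights $\alpha_k$ and $1-\alpha_k$ really produce $x_k$ as the mean of the two-point distribution on $\{l_k,u_k\}$ so that Jensen's inequality applies with the correct mass. There is no genuine obstacle, since Lemma \ref{lemma 3} supplies exactly the coordinate-wise concavity we need. Conceptually, the lemma is a DR-submodular analogue of the classical multilinear-extension inequality: $F(\mathbf{x})$ dominates the "multilinear extension through the $2^n$ box vertices" of $[\mathbf{l},\mathbf{u}]$ evaluated at the mixing coefficients $\alpha_i$.
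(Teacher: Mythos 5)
Your proof is correct and is essentially the paper's argument in probabilistic clothing: the paper proves the inequality by induction on $n$, peeling off one coordinate at a time via the concavity of $F$ along $\mathbf{e}_k$ (Lemma \ref{lemma 3}), which is exactly your telescoping hybrid step, while your product-measure interpretation of the right-hand side replaces the paper's explicit combinatorial expansion over subsets $S$. No gap; the two presentations are interchangeable.
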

\begin{proof}
    If $n=1$, $F(\mathbf{x})= F(\alpha_1 u_1  + (1-\alpha_1) l_1) \ge \alpha_1 F(u_1) + (1-\alpha_1) F(l_1)$. By induction, we assume that Eq.(\ref{concave}) holds on $n=k-1$. Therefore, when $n=k$, we have: 
    \begin{equation*}
        \begin{aligned}
            F(\mathbf{x}) & = F(\alpha_k(u_k \mathbf{e}_k \vee \mathbf{x}_{[k-1]}) + (1- \alpha_k) (l_k \mathbf{e}_k \vee \mathbf{x}_{[k-1]}) ) \\
            &  \ge \alpha_k F(u_k \mathbf{e}_k \vee \mathbf{x}_{[k-1]}) + (1-\alpha_k) F(l_k \mathbf{e}_k \vee \mathbf{x}_{[k-1]}) \\
            & \ge \alpha_k \sum_{S \subseteq [k-1]} F(\mathbf{l}_{[k-1]\setminus S } \vee \mathbf{u}_{S \cup \{k\} }) \prod_{i \in S} \alpha_i \prod_{j \in [k-1] \setminus S } (1-\alpha_j) \\
            & \quad + (1-\alpha_k) \sum_{S \subseteq [k-1]} F(\mathbf{l}_{[k]\setminus S} \vee \mathbf{u}_{S}) \prod_{i \in S} \alpha_i \prod_{j \in [k-1] \setminus S } (1-\alpha_j)\\
            & = \sum_{S \subseteq [k]} F(\mathbf{l}_S \vee \mathbf{u}_{[k]\setminus S}) \prod_{i \in S} \alpha_i \prod_{j \in [k] \setminus S } (1-\alpha_j)
        \end{aligned}
    \end{equation*}
\end{proof}

We need the concept of stationary point for problem (\ref{eq:main_prob}), where we only need to assume that $F$ is differentiable.
\begin{definition}\label{definition 6}
   A feasible solution $\mathbf{x} \in \mathcal{P}$ for problem (\ref{eq:main_prob}) is a stationary point, when $F$ is differentiable, if it satisfies the first-order necessary condition, 
    \begin{equation}
        \max_{\mathbf{y} \in \mathcal{P}} \langle \nabla F(\mathbf{x}), \mathbf{y}-\mathbf{x} \rangle \le 0
    \end{equation}
\end{definition}

When $F$ is further assumed to be DR-submodular, as assumed in problem (\ref{eq:main_prob}), we have the following inequality.
\begin{lemma}[A rephrased version of Corollary 3.3 in \cite{Chekuri}]\label{lemma 7}
    If $\mathbf{x}$ is a stationary point for problem (\ref{eq:main_prob}), then 
    \begin{equation}\label{eq:stat_ineq}
        2 F(\mathbf{x}) \ge F(\mathbf{x} \vee \mathbf{y}) + F(\mathbf{x} \wedge \mathbf{y}),  \forall \mathbf{y} \in \mathcal{X} 
    \end{equation}
\end{lemma}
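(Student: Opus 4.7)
The plan is to decompose the displacement $\mathbf{y}-\mathbf{x}$ into its non-negative and non-positive coordinate parts and then use Lemma~\ref{lemma 3} (concavity of $F$ along non-negative directions) to translate the two function-value differences on the left-hand side into linear expressions in $\nabla F(\mathbf{x})$. Set $\mathbf{u}:=(\mathbf{x}\vee\mathbf{y})-\mathbf{x}$ and $\mathbf{v}:=\mathbf{x}-(\mathbf{x}\wedge\mathbf{y})$; then $\mathbf{u},\mathbf{v}\ge\mathbf{0}$ coordinatewise and $\mathbf{u}-\mathbf{v}=\mathbf{y}-\mathbf{x}$. Note also that $\mathbf{x}\wedge\mathbf{y}\le\mathbf{x}$ lies in $\mathcal{P}$ by down-closedness, a fact needed to invoke the stationarity inequality below.

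The core step is to apply Lemma~\ref{lemma 3} twice. Since $\mathbf{u}\ge\mathbf{0}$, the map $t\mapsto F(\mathbf{x}+t\mathbf{u})$ is concave on $[0,1]$; the concave gradient inequality based at $t=0$ evaluated at $t=1$ gives
\[
F(\mathbf{x}\vee\mathbf{y}) \;\le\; F(\mathbf{x}) + \langle \nabla F(\mathbf{x}),\, \mathbf{u}\rangle.
\]
Symmetrically, $t\mapsto F((\mathbf{x}\wedge\mathbf{y})+t\mathbf{v})$ is concave on $[0,1]$ with value $F(\mathbf{x})$ at $t=1$, so the concave gradient inequality based at $t=1$ and evaluated at $t=0$ yields
\[
F(\mathbf{x}\wedge\mathbf{y}) \;\le\; F(\mathbf{x}) - \langle \nabla F(\mathbf{x}),\, \mathbf{v}\rangle.
\]

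Adding the two bounds and substituting $\mathbf{u}-\mathbf{v}=\mathbf{y}-\mathbf{x}$ collapses the estimate into
\[
F(\mathbf{x}\vee\mathbf{y}) + F(\mathbf{x}\wedge\mathbf{y}) \;\le\; 2F(\mathbf{x}) + \langle \nabla F(\mathbf{x}),\, \mathbf{y}-\mathbf{x}\rangle,
\]
and the stationarity condition of Definition~\ref{definition 6} then closes the argument by forcing $\langle \nabla F(\mathbf{x}),\, \mathbf{y}-\mathbf{x}\rangle\le 0$. The main obstacle I anticipate is a quantifier subtlety: stationarity as stated only controls test directions pointing into $\mathcal{P}$, so the final step is clean exactly when $\mathbf{y}\in\mathcal{P}$, which is the only setting used in the downstream approximation analyses where $\mathbf{y}$ is an optimal feasible point. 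For $\mathbf{y}\in\mathcal{X}\setminus\mathcal{P}$ one half of the inner product, namely $\langle \nabla F(\mathbf{x}),\mathbf{v}\rangle\ge 0$, still follows from stationarity via $\mathbf{x}\wedge\mathbf{y}\in\mathcal{P}$, but controlling $\langle \nabla F(\mathbf{x}),\mathbf{u}\rangle$ in full generality would require an additional ingredient and is the genuinely delicate point.
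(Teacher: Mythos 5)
Your proof is correct and is essentially the standard argument for this fact (the paper itself gives no proof, deferring to Corollary 3.3 of the cited reference, whose proof is exactly your decomposition $\mathbf{y}-\mathbf{x}=\mathbf{u}-\mathbf{v}$ with $\mathbf{u},\mathbf{v}\ge\mathbf{0}$, the gradient inequalities from concavity along sign-uniform directions as in Lemma~\ref{lemma 3}, and stationarity). Your closing concern about the quantifier is not merely a delicate point but a genuine error in the statement as printed: for $\mathbf{y}\in\mathcal{X}\setminus\mathcal{P}$ the inequality can fail. Take $n=1$, $F(x)=x$ (non-negative and trivially DR-submodular), $\mathcal{P}=[0,1/2]$; then $\mathbf{x}=1/2$ is stationary, yet for $\mathbf{y}=1$ one has $F(\mathbf{x}\vee\mathbf{y})+F(\mathbf{x}\wedge\mathbf{y})=3/2>1=2F(\mathbf{x})$. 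The lemma should be read with $\forall\,\mathbf{y}\in\mathcal{P}$, which is how the cited source states it and the only way it is invoked downstream (e.g.\ in Lemma~\ref{lemma 10} the test point $\mathbf{z}'=\mathbf{z}\wedge\mathbf{w}\le\mathbf{z}$ lies in $\mathcal{P}$ by down-closedness), and under that reading your argument closes with no gap.
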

Lemma~\ref{lemma 7} immediately implies the following result by a simple convex combination of (\ref{eq:stat_ineq}) with another inequality resulting from choosing $\mathbf{y}:=\mathbf{x}\wedge \mathbf{y}\in \mathcal{P}$ due to down-closeness in (\ref{eq:stat_ineq}).
\begin{lemma}\label{lemma 8}
     If $\mathbf{x}$ is a stationary point for (\ref{eq:main_prob}), then we have
    \begin{equation}
        F(\mathbf{x}) \ge \gamma F(\mathbf{x} \vee \mathbf{y}) + (1-\gamma) F(\mathbf{x} \wedge \mathbf{y}), \forall \mathbf{y} \in \mathcal{X}, \gamma \in [0,0.5] 
    \end{equation}
\end{lemma}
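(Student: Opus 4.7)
The plan is to follow the hint in the text: derive two inequalities from Lemma~\ref{lemma 7} and then take a convex combination whose coefficients are chosen to produce the weights $\gamma$ and $1-\gamma$ in the claim.

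First, I would apply Lemma~\ref{lemma 7} directly and divide by $2$, obtaining the inequality
\begin{equation*}
   F(\mathbf{x}) \ge \tfrac{1}{2} F(\mathbf{x} \vee \mathbf{y}) + \tfrac{1}{2} F(\mathbf{x} \wedge \mathbf{y}).
\end{equation*}
Next, I would use the down-closedness of $\mathcal{P}$: since $\mathbf{x} \in \mathcal{P}$ and $\mathbf{x} \wedge \mathbf{y} \le \mathbf{x}$ coordinate-wise, the point $\mathbf{x} \wedge \mathbf{y}$ also lies in $\mathcal{P}$, so the stationary condition of Definition~\ref{definition 6} (and hence Lemma~\ref{lemma 7}) can be invoked with $\mathbf{y}$ replaced by $\mathbf{x} \wedge \mathbf{y}$. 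Because $\mathbf{x} \vee (\mathbf{x} \wedge \mathbf{y}) = \mathbf{x}$ and $\mathbf{x} \wedge (\mathbf{x} \wedge \mathbf{y}) = \mathbf{x} \wedge \mathbf{y}$, this substitution yields
\begin{equation*}
   2 F(\mathbf{x}) \ge F(\mathbf{x}) + F(\mathbf{x} \wedge \mathbf{y}), \qquad \text{i.e.,} \qquad F(\mathbf{x}) \ge F(\mathbf{x} \wedge \mathbf{y}).
\end{equation*}

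Finally, I would combine these two inequalities with weights $2\gamma$ and $1-2\gamma$, both of which are non-negative precisely because $\gamma \in [0, 0.5]$. Adding $2\gamma$ times the first inequality to $(1-2\gamma)$ times the second gives
\begin{equation*}
   F(\mathbf{x}) \ge \gamma F(\mathbf{x} \vee \mathbf{y}) + \gamma F(\mathbf{x} \wedge \mathbf{y}) + (1-2\gamma) F(\mathbf{x} \wedge \mathbf{y}) = \gamma F(\mathbf{x} \vee \mathbf{y}) + (1-\gamma) F(\mathbf{x} \wedge \mathbf{y}),
\end{equation*}
which is exactly the desired conclusion.

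There is essentially no obstacle beyond being careful with the two invocations of the stationary inequality. The one subtle point to verify is that $\mathbf{x} \wedge \mathbf{y} \in \mathcal{P}$ so that Lemma~\ref{lemma 7} may legitimately be applied to it; this is where the down-closedness assumption on $\mathcal{P}$ is used, and without it the second inequality would not be available. The restriction $\gamma \le 1/2$ is also dictated purely by the requirement that $1-2\gamma \ge 0$, i.e., that the combination is indeed convex.
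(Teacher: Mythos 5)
Your proof is correct and follows exactly the route the paper indicates: substitute $\mathbf{y}:=\mathbf{x}\wedge\mathbf{y}$ into Lemma~\ref{lemma 7} (justified by down-closedness) to get $F(\mathbf{x})\ge F(\mathbf{x}\wedge\mathbf{y})$, then take the convex combination with weights $2\gamma$ and $1-2\gamma$. Nothing is missing.
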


\section{Performance of Stationary Points} \label{sec:stationary point}
Although \cite{Hassani} show that the ratio between the values of the worst stationary point and the optimal solution for monotone DR-submodular objective functions is $1/2$, it is still an open problem whether the same approximation ratio still holds for the non-monotone case. We answer this question negatively by showing that stationary points can have arbitrarily bad approximation ratios. For this purpose, we construct an example to show that the value ratio between the worst stationary point and the optimal solution approaches $0$ for the non-monotone case. In addition, we observe an interesting phenomenon that the worst stationary point tends to be close to the boundary of the feasible domain, implying that removing stationary points near the boundary of the feasible domain can potentially improves the approximation ratio of stationary points, as have been done previously in the literature, such as \cite{Chekuri,Buchbinder2}. Our example may explain their success in obtaining improved approximation algorithms. 

The rest of this section provides a new analysis, which removes the reliance on the multilinear and the Lovasz extensions used in \cite{Chekuri}, to prove that their \emph{restricted local search} algorithm is also a $0.309$ approximation solution in the continuous domain. 

\subsection{Negative Result for Stationary Points}\label{subsection 3.1}
The conventional coverage function is one of the famous submodular set functions, which is monotone non-decreasing, and non-negative. Inspired by the coverage function, we construct a regular coverage function which is the difference between the coverage function and the size of the selected solution. The formal description is as follows.

\begin{definition}\label{definition 9}
    For a ground set $V = \{1,2,...,n\}$ and a subset collection $\{S_1, ... ,S_n\}$ where $S_i \subseteq V$, the regular coverage function is defined as $f(X) =  |\bigcup_{i \in X} S_i | - |X|$. 
\end{definition}

According to the above definition, although the regular coverage function is not always non-negative, we can construct a concrete example of the regular coverage function such that it is non-monotone and non-negative.

\begin{example}\label{example 1}
    Let $V=\{1,2,...,2k+1\}$, $S_i = \{i,2k+1\}$ for $i \in \{1,2,...,k\}$, $S_i = \{i\}$ for $i \in \{k+1,k+2,...,2k\}$, and $S_{2k+1} = \{1,2,...,k, 2k+1 \}$. It is not hard to show that the regular coverage function for this instance is non-monotone, non-negative, and submodular. 
\end{example}

\begin{example}
    As an illustration of the above example, choosing $k=1$ gives us $V=\{1,2,3\}$, $S_1=\{1,3\}$, $S_2 = \{2\}$, $S_3 = \{1,3\}$. Then, $f(\{1\}) = 1$, $f(\{2\}) = 0$, $f(\{3\})= 1$, $f(\{1,2\})=1$, $f(\{1,3\})=0$, $f(\{2,3\})=1$, $f(\{1,2,3\}) = 0$.
\end{example}

Based on the setup of Example \ref{example 1}, let $F:[0,1]^{2k+1} \rightarrow \mathbb{R}_{\ge 0}$ be the multilinear extension of the regular coverage function $f:2^V \rightarrow \mathbb{R}_{\ge 0}$. $F(\mathbf{x})$ and $\nabla F(\mathbf{x})$ have the following closed-form:

\begin{equation}
\begin{array}{c}
    \begin{aligned}
        F(\mathbf{x}) & = \sum_{S \subseteq V} f(S) \cdot \prod_{i \in S} x_i \cdot \prod_{i \notin S} (1-x_i) = \sum_{S \subseteq V} (|\cup_{i \in S} S_i | - |S| )\cdot \prod_{i \in S} x_i \cdot \prod_{i \notin S} (1-x_i) \\
        & = \sum_{S \subseteq V} |\cup_{i \in S} S_i | \cdot \prod_{i \in S} x_i \cdot \prod_{i \notin S} (1-x_i) - \sum_{S \subseteq V} |S| \cdot \prod_{i \in S} x_i \cdot \prod_{i \notin S} (1-x_i) \\
        & = k+1-\left(1-x_{2 k+1}\right) \prod_{i=1}^k\left(1-x_i\right)-\left(1-x_{2 k+1}\right)\left(k-\sum_{i=1}^k x_i\right)+\sum_{i=k+1}^{2 k} x_i - \sum_{i=1}^{2 k +1} x_i \\
        & = k+1-\left(1-x_{2 k+1}\right) \prod_{i=1}^k\left(1-x_i\right)-\left(1-x_{2 k+1}\right)\left(k-\sum_{i=1}^k x_i\right) - \sum_{i=1}^{k} x_i -x_{2k+1}
    \end{aligned}\\
    \cfrac{\partial F(\mathbf{x})}{\partial x_i} = \begin{cases} (1-x_{2k+1}) \cdot \prod_{j=1, j\neq i}^{k}(1-x_j) -x_{2k+1} & \text { if } i \le k\\ 0 & \text { if } k+1 \le i \le 2k\\ \prod_{j=1}^{k}(1-x_j) + k-\sum_{i=1}^k x_i - 1 & \text { if } i =2k+1\end{cases}
\end{array}
\end{equation}

Our goal is to calculate the ratio between the worst stationary point and the optimal solution. In other words, we want to minimize the ratio between stationary points in the feasible domain and the optimal solution, i.e., minimize the ratio $\frac{F(\mathbf{x})}{F(\mathbf{x}^*)}$ where $\mathbf{x}$ is an arbitrary stationary point in the feasible domain and $\mathbf{x}^*$ is the optimal solution. To achieve our objective, we construct the following minimization problem.

\begin{equation}\label{minimizing problem}
    \begin{aligned}
         \min & \quad \cfrac{F(\mathbf{x})}{F(\mathbf{y})}\\
        \text{s.t.}  \quad & \mathbf{x} \text{ is a stationary point} \\
        & \mathbf{x},\mathbf{y} \text{ are feasible} 
    \end{aligned}
\end{equation}

Note that the objective function in Eq.(\ref{minimizing problem}) is to minimize the ratio $\frac{F(\mathbf{x})}{F(\mathbf{y})}$ and $\mathbf{y}$ is in the denominator. It means that the ratio $\frac{F(\mathbf{x})}{F(\mathbf{y})}$ reaches the global minimum if and only if $\mathbf{y}$ is the optimal solution of the function $F$ in the feasible domain. It implies that constraints on the above minimization problem are that $\mathbf{x}$ is a stationary point and $\mathbf{x},\mathbf{y}$ are in the feasible domain. Because the multilinear extension $F(\mathbf{x})$ is defined in $[0,1]^{2k+1}$ which is a special down-closed convex constraint, an immediate observation is that $\mathbf{z}^* = \arg\max_{\mathbf{z} \in [0,1]^{2k+1}} \left\langle\nabla F(\mathbf{x}), \mathbf{z}-\mathbf{x}\right\rangle = \mathbf{1}_{\nabla F(\mathbf{x})_{\ge 0}}$, where the last term is the $n$-dimensional binary vector whose $i$-th coordinate ($ i\in [n]$) is 1 when $\frac{\partial F(\mathbf{x})}{\partial x_i} \ge 0$ and 0 otherwise. 
We can therefore simplify the constraint as follows: 
\begin{equation*}
    \max_{\mathbf{z} \in [0,1]^{2k+1}} \langle\nabla F(\mathbf{x}) , \mathbf{z}-\mathbf{x}\rangle \le 0 
     \iff \langle \nabla F(\mathbf{x}) , \mathbf{1}_{\nabla F(\mathbf{x})_{\ge 0}}-\mathbf{x}\rangle \le 0
\end{equation*}
Now, we have the following optimization problem to reveal the ratio between the worst stationary point and the optimal solution.
\begin{equation}\label{ratio problem}
    \begin{aligned}
         \min & \qquad \cfrac{F(\mathbf{x})}{F(\mathbf{y})}\\
        \text{s.t.}  &\left \langle \nabla F(\mathbf{x}) , \mathbf{1}_{\nabla F(\mathbf{x})_{\ge 0}}-\mathbf{x}\right\rangle \le 0 \\
        & \mathbf{x},\mathbf{y} \in [0,1]^{2k+1}
    \end{aligned}
\end{equation}

Specially, we choose two special points $\mathbf{x}$ and $\mathbf{y}$, where $x_i = 1$, $y_i = 0$ for $i= 1,...,2k$, and $x_{2k+1}=0$, $y_{2k+1} = 1$. In addition, we have $\frac{\partial F(\mathbf{x})}{\partial x_i} = 0$ for $i=1,...,2k$, $\frac{\partial F(\mathbf{x})}{\partial x_{2k+1}} = -1$, $\frac{\partial F(\mathbf{y})}{\partial y_i} = -1$ for $i=1,...,k$, $\frac{\partial F(\mathbf{y})}{\partial y_i} = 0$ for $i=k+1,...,2k$, and $\frac{\partial F(\mathbf{y})}{\partial y_{2k+1}} = k$. Therefore, 
\begin{eqnarray*}
\max_{\mathbf{z} \in [0,1]^{2k+1}} \langle\nabla F(\mathbf{x}) , \mathbf{z}-\mathbf{x}\rangle &=& \langle\nabla F(\mathbf{x}) , \mathbf{1}_{[2k]}-\mathbf{x}\rangle = 0\\
\max_{\mathbf{z} \in [0,1]^{2k+1}} \langle\nabla F(\mathbf{y}) , \mathbf{z}-\mathbf{y}\rangle &=& \langle\nabla F(\mathbf{x}) , \mathbf{1}_{[k+1,2k+1]}-\mathbf{y}\rangle = 0
\end{eqnarray*}
It implies that $\mathbf{x}$ and $\mathbf{y}$ are stationary points, i.e., $\frac{F(\mathbf{x})}{F(\mathbf{y})} = \frac{1}{k}$. In other words, the ratio between the worst stationary point and the optimal solution is not greater than $\frac{1}{k}$, which approaches $0$ with increasing $k$.

In terms of the numerical solution, solving Eq.(\ref{ratio problem}) under $k=2$, $3$, $5$, $10$, $20$, $30$, $50$, the ratio between the worst stationary point and the optimal solution decreases with $k$ and tends to $0$. Details are in Tab.\ref{Ratio}. Although Eq.(\ref{ratio problem}) is non-convex and non-concave, the solution returned by the optimization algorithm is a stationary point of Eq.(\ref{ratio problem}). It represents that the global minimum of Eq.(\ref{ratio problem}) is no more than that of the solution returned by the optimization algorithm. In addition, there is an interesting phenomenon that the maximal component of the bad stationary point tends to be close to $1$. Therefore, we have the following claim.

\begin{claim}\label{claim}
    For non-monotone continuous DR-submodular functions, the ratio $\frac{F(\mathbf{x})}{F(\mathbf{x}^*)}$ is $0$, where $\mathbf{x}$ is the worst stationary point and $\mathbf{x}^*$ is the optimal solution. It implies that the stationary point will be arbitrarily bad in non-monotone cases. Furthermore, the worst stationary point tends to be close to the boundary of the feasible domain. 
\end{claim}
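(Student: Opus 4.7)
The plan is to establish the claim constructively using the family of regular coverage functions introduced in Example~\ref{example 1}, parametrized by $k$, rather than attempting an abstract argument. The excerpt has already worked out the closed forms of $F(\mathbf{x})$ and $\nabla F(\mathbf{x})$ and hinted at the two special feasible points $\mathbf{x}$ (with $x_i=1$ for $i\in[2k]$, $x_{2k+1}=0$) and $\mathbf{y}$ (with $y_i=0$ for $i\in[2k]$, $y_{2k+1}=1$), so my job is to assemble these pieces into a verification that (i) both points are stationary in $[0,1]^{2k+1}$, (ii) the ratio $F(\mathbf{x})/F(\mathbf{y})$ is exactly $1/k$, and (iii) the worst stationary point inevitably pushes some coordinates to the boundary $\{0,1\}$.

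First, I would plug $\mathbf{x}$ into the closed-form expression of $F$: the term $(1-x_{2k+1})\prod_{i=1}^k(1-x_i)$ vanishes because each $1-x_i=0$ for $i\le k$; the term $(1-x_{2k+1})(k-\sum x_i)$ also vanishes because $\sum_{i=1}^k x_i = k$; and $\sum_{i=1}^k x_i + x_{2k+1}=k$, so $F(\mathbf{x}) = k+1-k=1$. Symmetrically, $F(\mathbf{y})=k+1-1-k\cdot 0 - 0 - 1 = k$. Next I would verify stationarity by evaluating $\nabla F(\mathbf{x})$: for $i\le k$ the partial is $(1-0)\prod_{j\ne i}(1-1)-0=0$, for $k+1\le i\le 2k$ it is $0$, and for $i=2k+1$ it is $\prod_{j=1}^k(1-1)+k-k-1=-1$. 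Thus the coordinates with nonnegative partial derivative are exactly $[2k]$, and since $\mathbf{x}$ already has $x_i=1$ there, the frank-Wolfe vertex $\mathbf{1}_{\nabla F(\mathbf{x})\ge 0}$ equals $\mathbf{x}$ on those coordinates and $0$ on the last, which gives $\langle\nabla F(\mathbf{x}),\mathbf{1}_{\nabla F(\mathbf{x})\ge 0}-\mathbf{x}\rangle = -1\cdot(0-0)=0$. An analogous computation applies to $\mathbf{y}$. Hence $\mathbf{x}$ is stationary, and $F(\mathbf{x})/F(\mathbf{x}^*)\le F(\mathbf{x})/F(\mathbf{y})=1/k\to 0$.

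For the boundary observation, I would argue as follows: if all coordinates of a stationary point $\mathbf{x}$ lie strictly inside $(0,1)$, then by Definition~\ref{definition 6} we need $\nabla F(\mathbf{x})=\mathbf{0}$ in every coordinate (since one may move infinitesimally in both the $+\mathbf{e}_i$ and $-\mathbf{e}_i$ directions while staying feasible). For the regular coverage instance, solving $\partial F/\partial x_i=0$ across the three blocks of coordinates produces a unique interior candidate whose value can be bounded away from $0$ relative to $F(\mathbf{y})$. Consequently, the extreme ratio $1/k$ can only be attained if at least one coordinate saturates at the boundary, which matches the numerical observation in Table~\ref{Ratio}. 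To supplement this heuristic reasoning, I would combine the construction with the monotonicity considerations used in Section~\ref{sec:stationary point}: restricting the feasible domain away from the boundary strictly improves the attainable ratio.

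The main obstacle I anticipate is the second half of the claim — formalizing ``tends to be close to the boundary.'' The first half is a clean construction with an explicit $1/k$ bound, but the boundary statement is inherently qualitative unless one either (a) proves the interior critical point has ratio bounded below by a constant, or (b) settles for numerical/empirical evidence as reported in Table~\ref{Ratio}. I would lean on option (a) for a clean argument, noting that by Lemma~\ref{lemma 4} any interior stationary point $\mathbf{x}$ with $\min_i x_i \ge \beta$ satisfies $F(\mathbf{x}\wedge\mathbf{y})\ge\beta F(\mathbf{y})$, which combined with Lemma~\ref{lemma 8} translates into a strictly positive approximation guarantee; the only way to drive the ratio to $0$ is to let $\min_i x_i\to 0$ or $\max_i x_i\to 1$, i.e., to approach the boundary. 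This would be enough to substantiate the qualitative claim while motivating the reduced-domain algorithm analyzed in the rest of Section~\ref{sec:stationary point}.
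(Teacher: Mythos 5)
Your construction is exactly the paper's: the same regular coverage instance, the same two points $\mathbf{x}=\mathbf{1}_{[2k]}$ and $\mathbf{y}=\mathbf{e}_{2k+1}$, the same verification of stationarity via $\mathbf{1}_{\nabla F(\mathbf{x})\ge 0}$, and the same conclusion $F(\mathbf{x})/F(\mathbf{y})=1/k\to 0$, so the first half of the claim is handled identically. (Two trivial slips: your intermediate arithmetic for $F(\mathbf{y})$ reads $k+1-1-0-0-1=k$, which should be $k+1-0-0-0-1=k$ since $(1-y_{2k+1})=0$ kills the first two terms; and for $i\le k$ the partial $\partial F/\partial x_i=(1-x_{2k+1})\prod_{j\ne i}(1-x_j)-x_{2k+1}$ vanishes only for $k\ge 2$, which is harmless since $k\to\infty$.) Where you genuinely diverge is the boundary assertion. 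The paper supports ``the worst stationary point tends to be close to the boundary'' purely empirically, by reporting $\max_i x_i$ in Table~\ref{Ratio}; you instead derive it from Lemma~\ref{lemma 8} with $\gamma=1/2$ and Lemma~\ref{lemma 4}, which give $F(\mathbf{x})\ge \frac{1}{2}\left[(1-\max_i x_i)+\min_i x_i\right]F(\mathbf{x}^*)$ for any stationary point, so the ratio can approach $0$ only if $\max_i x_i\to 1$ (and $\min_i x_i\to 0$) --- note it is ``and,'' not ``or,'' since both bracketed terms are nonnegative. This is a cleaner and strictly stronger justification of the qualitative part of the claim than the paper offers, and it also directly motivates the reduced-domain analysis of Theorem~\ref{thm:0.309}. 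Your separate remark that an interior stationary point of the box must satisfy $\nabla F=\mathbf{0}$ is correct but unnecessary once the Lemma~\ref{lemma 4}/\ref{lemma 8} bound is in place.
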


\begin{table}
    \centering
    \caption{Ratios between the worst stationary point and the optimal solution}
    \label{Ratio}
    \begin{tabular}{m{2cm} m{1.2cm} m{1.2cm} m{1.2cm} m{1.2cm} m{1.2cm} m{1.2cm} m{1.2cm}}
        \hline
        & $k=2$ & $k=3$ &$k=5$ & $k=10$ & $k=20$ & $k=30$ & $k=50$ \\
        \hline
        Ratio $\frac{F(\mathbf{x})}{F(\mathbf{y})}$& $0.4142$ & $0.3222$ & $0.1999$ & $0.1000$ & $0.0500$ & $0.0333$ & $ 0.0200$\\
        \hline
        $\max_{i \in [2k+1]} x_i$& $0.5858$ & $0.6778$ & $0.9999$ & $0.9999$ & $0.9804$ & $1.0000$ & $0.9942$\\
        \hline
    \end{tabular}
\end{table}

\subsection{Stationary Points in Reduced Feasible Domain}\label{subsection 3.2}
According to Tab.\ref{Ratio} and Claim \ref{claim}, if we want stationary points to yield better approximation ratio, removing some of the bad stationary points that are near the boundary of the feasible domain is a natural approach. To some extent, it explains why the restricted continuous local search by \cite{Chekuri} has a better approximation ratio than the continuous local search for maximizing the multilinear extension of submodular set functions. Moreover,  the proofs in \cite{Chekuri} heavily depend on the fact that the multilinear extension is no less than the Lovasz extension of any submodular set function. Therefore leaving open the question whether the approximation ratio $0.309$ can also be achieved in the continuous domain for DR-submodular functions that are not arising from the multilinear extension. 

In problem (\ref{eq:main_prob}), $\forall m \in [0,1]$, define the reduced feasible domain $\mathcal{P} \cap [0,m]^n$. We introduce a projection vector $\mathbf{w}$ that depends on the stationary point $\mathbf{x} \in \mathcal{P}\cap[0,m]^n$, such that, for each point $\mathbf{z}\in \mathcal{P}$ but $\mathbf{z}\notin \mathcal{P}\cap [0,m]^n$, there is a vector $\mathbf{z}' = \mathbf{z} \wedge \mathbf{w}\in \mathcal{P}$ but $\mathbf{z}' \notin \mathbf{int}(\mathcal{P}\cap [0,m]^n)$ satisfying the following lemma.


\begin{lemma}[A rephrased version of Lemma 3.16 in \cite{Chekuri}]\label{lemma 10}
Let $\mathbf{x}$ be a stationary point in $\mathcal{P}\cap[0,m]^n$ and $\mathbf{z}$ be any point in $\mathcal{P}$. Define $\mathbf{w} = (w_1,w_2,...,w_n) \in [0,1]^n$ by $w_i = m$ if $x_i \ge m-\varepsilon$ and $w_i =1$ if $x_i < m- \varepsilon$. Then, for $\mathbf{z}' = \mathbf{z} \wedge \mathbf{w}$, we have:
\begin{equation}
    2 F(\mathbf{x}) \ge F(\mathbf{x}\wedge \mathbf{z}') + F(\mathbf{x} \vee \mathbf{z}')
\end{equation}
\end{lemma}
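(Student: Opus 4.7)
The plan is to reproduce the two-line proof behind Lemma~\ref{lemma 7} for the new test point $\mathbf{z}'$, with one extra step in the middle to cope with the essential obstacle: $\mathbf{x}$ satisfies the stationarity condition only against the restricted domain $\mathcal{P}\cap[0,m]^n$, whereas $\mathbf{z}'$ itself can violate the cap $[0,m]^n$ on coordinates $i\in J:=\{i:x_i<m-\varepsilon\}$, where $w_i=1$ leaves $z'_i=z_i$ uncapped. The fix is a convex-combination trick: a sufficiently small step from $\mathbf{x}$ toward $\mathbf{z}'$ still lies in the reduced domain, which is enough to recover the gradient inequality once the step size is divided out.

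Concretely, I would first show \emph{feasibility of a small step}. Since $\mathbf{z}'\le\mathbf{z}\in\mathcal{P}$, down-closedness gives $\mathbf{z}'\in\mathcal{P}$, and convexity places $\mathbf{x}_\gamma:=(1-\gamma)\mathbf{x}+\gamma\mathbf{z}'$ in $\mathcal{P}$ for every $\gamma\in[0,1]$. For the box, on $i\in I:=\{i:x_i\ge m-\varepsilon\}$ we have $w_i=m$, hence $z'_i\le m$, and combined with $x_i\le m$ this gives $(\mathbf{x}_\gamma)_i\le m$; on $i\in J$, the bounds $x_i<m-\varepsilon$ and $z'_i\le 1$ yield
\[
(\mathbf{x}_\gamma)_i=x_i+\gamma(z'_i-x_i)\le(m-\varepsilon)+\gamma,
\]
which is $\le m$ for any $\gamma\in(0,\varepsilon]$. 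Fixing such a $\gamma$ and invoking Definition~\ref{definition 6} at the admissible test point $\mathbf{x}_\gamma$ gives $0\ge\langle\nabla F(\mathbf{x}),\mathbf{x}_\gamma-\mathbf{x}\rangle=\gamma\langle\nabla F(\mathbf{x}),\mathbf{z}'-\mathbf{x}\rangle$, so $\langle\nabla F(\mathbf{x}),\mathbf{z}'-\mathbf{x}\rangle\le 0$.

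Finally, I would apply Lemma~\ref{lemma 3} (concavity along non-negative directions) in the two directions $\mathbf{d}^+:=(\mathbf{x}\vee\mathbf{z}')-\mathbf{x}$ and $\mathbf{d}^-:=\mathbf{x}-(\mathbf{x}\wedge\mathbf{z}')$, both non-negative, noting $\mathbf{d}^+-\mathbf{d}^-=\mathbf{z}'-\mathbf{x}$. At base $\mathbf{x}$ in direction $\mathbf{d}^+$, concavity yields $F(\mathbf{x}\vee\mathbf{z}')-F(\mathbf{x})\le\langle\nabla F(\mathbf{x}),\mathbf{d}^+\rangle$; at base $\mathbf{x}\wedge\mathbf{z}'$ in direction $\mathbf{d}^-$ (with endpoint $\mathbf{x}$), the gradient-at-endpoint form of concavity yields $F(\mathbf{x})-F(\mathbf{x}\wedge\mathbf{z}')\ge\langle\nabla F(\mathbf{x}),\mathbf{d}^-\rangle$. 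Adding these and using the previous paragraph,
\[
F(\mathbf{x}\vee\mathbf{z}')+F(\mathbf{x}\wedge\mathbf{z}')-2F(\mathbf{x})\le\langle\nabla F(\mathbf{x}),\mathbf{z}'-\mathbf{x}\rangle\le 0,
\]
which is the claim.

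The main obstacle is the feasibility check, which is precisely where the tailored definition of $\mathbf{w}$ and the slack $\varepsilon$ earn their keep: without the strict inequality $x_i<m-\varepsilon$ on coordinates in $J$, no positive step toward $\mathbf{z}'$ would stay inside $[0,m]^n$, and the stationarity certificate over the reduced domain could not be brought to bear. Once feasibility is in hand, the remaining two steps are exactly the tools already used to prove Lemma~\ref{lemma 7}, re-run with the particular test point $\mathbf{z}'$ instead of an arbitrary $\mathbf{y}$.
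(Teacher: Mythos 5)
Your proof is correct: the paper itself gives no proof of this lemma (it defers entirely to Lemma 3.16 of \cite{Chekuri}), and your argument reconstructs exactly the standard reasoning behind that citation --- the $\varepsilon$-slack on coordinates with $x_i<m-\varepsilon$ makes a step of size $\gamma\le\varepsilon$ toward $\mathbf{z}'$ feasible in $\mathcal{P}\cap[0,m]^n$, which yields $\langle\nabla F(\mathbf{x}),\mathbf{z}'-\mathbf{x}\rangle\le 0$, and the two concavity inequalities along the non-negative directions $(\mathbf{x}\vee\mathbf{z}')-\mathbf{x}$ and $\mathbf{x}-(\mathbf{x}\wedge\mathbf{z}')$ then give the claim just as in Lemma~\ref{lemma 7}. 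No gaps.
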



To prove the approximation ratio of $\mathbf{x}$, we next prove the following lower bounds on $F(\mathbf{x} \vee \mathbf{z}')$ and $F(\mathbf{x} \wedge \mathbf{z}')$, respectively, without using the multilinear and Lovasz extensions as have been done in \cite{Chekuri}. Therefore the following lemmas hold not only for DR-submodular functions arising as the multilinear extensions of submodular set functions but also for general DR-submodular functions defined directly on the continuous domain. We propose a simplified and unified analysis by only using the DR-submodularity of $F$. 
\begin{lemma}\label{lemma 11}
    Let $\mathbf{x}\in\mathcal{P}\cap[0,m]^n$ be a stationary point in the reduced feasible domain and $\mathbf{x}^*\in \mathcal{P}$ be the optimal solution. Denote $A=\{i \in [n]: x_i \ge m-\varepsilon\}$ and $\hat{A} = [n]\setminus A$. For $\mathbf{z}' = \mathbf{x}^* \wedge \mathbf{w}$, we have: 
    \begin{equation}
        F(\mathbf{x} \vee \mathbf{z}') \ge (1-m)m F(\mathbf{x}^*) + (1-m)^2 F(\mathbf{x}^*_{\hat{A}})
    \end{equation}
    where $\mathbf{x}^*_{\hat{A}} $ is a vector that agrees with $\mathbf{x}^*$ on coordinates belonging to $\hat{A}$, and equals to 0 on other coordinates, i.e., $\mathbf{x}^*_A \wedge \mathbf{x}^*_{\hat{A}} = \mathbf{0}$, $\mathbf{x}^*_A \vee \mathbf{x}^*_{\hat{A}} = \mathbf{x}^*$.   
\end{lemma}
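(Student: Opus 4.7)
The plan is to chain two inequalities: $F(\mathbf{x}\vee \mathbf{z}') \ge (1-m) F(\mathbf{q})$ and $F(\mathbf{q}) \ge m F(\mathbf{x}^*) + (1-m) F(\mathbf{x}^*_{\hat{A}})$, where $\mathbf{q} := \mathbf{x}^*_{\hat{A}} + m\,\mathbf{x}^*_A$ is the ``downscaled optimum'' obtained from $\mathbf{x}^*$ by multiplying its $A$-coordinates by $m$ and leaving the $\hat{A}$-coordinates untouched. Combining these two inequalities immediately yields the claim.

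For the first inequality, set $\mathbf{v} := \mathbf{x}\vee \mathbf{z}'$ and check $\mathbf{v}\ge \mathbf{q}$ coordinatewise: on $\hat{A}$, $v_i \ge z'_i = x_i^* = q_i$; on $A$, $v_i \ge z'_i = \min(x_i^*,m) \ge m\, x_i^* = q_i$ since $m,x_i^*\in[0,1]$. In particular $\mathbf{v}\vee \mathbf{q} = \mathbf{v}$. I would then apply Lemma~\ref{lemma 4} to $F$ viewed on the sub-box $[\mathbf{x}^*_{\hat{A}},\mathbf{1}]$, on which $F$ inherits both DR-submodularity and non-negativity. The relevant normalized size of $\mathbf{v}$ is $\max_i (v_i-x^*_{\hat{A},i})/(1-x^*_{\hat{A},i})$: for $i\in A$ this equals $v_i\le m$, and for $i\in \hat{A}$, $v_i-x_i^* = \max(x_i-x_i^*,0)$ together with $x_i < m-\varepsilon$ gives $x_i - x_i^* < m(1-x_i^*)$, so the ratio is still at most $m$. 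Lemma~\ref{lemma 4} then delivers $F(\mathbf{v}) = F(\mathbf{v}\vee \mathbf{q}) \ge (1-m) F(\mathbf{q})$.

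For the second inequality, write $\mathbf{q} = (1-m)\mathbf{x}^*_{\hat{A}} + m\mathbf{x}^*$ and note that $\mathbf{x}^* - \mathbf{x}^*_{\hat{A}} = \mathbf{x}^*_A \ge \mathbf{0}$ is a non-negative direction. Applying Lemma~\ref{lemma 3} (concavity of $F$ along non-negative directions) to the segment from $\mathbf{x}^*_{\hat{A}}$ to $\mathbf{x}^*$, evaluated at parameter $m$, immediately yields $F(\mathbf{q}) \ge m F(\mathbf{x}^*) + (1-m) F(\mathbf{x}^*_{\hat{A}})$.

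The hardest step, as I see it, is the verification of the $\hat{A}$-side ratio bound: this is precisely where the $\varepsilon$-slack in the definition of $\mathbf{w}$ is used, since without the strict inequality $x_i < m-\varepsilon$ the quantity $(v_i-x_i^*)/(1-x_i^*)$ could approach $1$ and the application of Lemma~\ref{lemma 4} would collapse to a trivial bound. Conceptually, applying Lemma~\ref{lemma 4} on the judiciously chosen sub-box $[\mathbf{x}^*_{\hat{A}},\mathbf{1}]$ plays the role that the multilinear-versus-Lov\'asz-extension comparison plays in \cite{Chekuri}, and this is what allows the continuous-domain argument to avoid the Lov\'asz-extension machinery.
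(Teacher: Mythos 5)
Your proof is correct, but it takes a genuinely different route from the paper's. The paper first applies Lemma~\ref{lemma 4} on the full box to get $F(\mathbf{x}\vee\mathbf{z}')\ge(1-m)F(\mathbf{w}\wedge\mathbf{x}^*)$, then introduces the marginal function $G(\mathbf{y})=F(\mathbf{x}^*_{\hat{A}}+\mathbf{y})-F(\mathbf{x}^*_{\hat{A}})$, proves $G$ is DR-submodular \emph{and non-decreasing} --- the latter crucially using the optimality of $\mathbf{x}^*$ --- and then lower-bounds $G(\mathbf{w}_A\wedge\mathbf{x}^*_A)\ge mG(\mathbf{x}^*_A)$ by a convex-combination argument that discards a term whose non-negativity rests on that monotonicity. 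You sidestep the auxiliary function entirely by replacing $\mathbf{w}\wedge\mathbf{x}^*$ with the smaller point $\mathbf{q}=\mathbf{x}^*_{\hat{A}}+m\mathbf{x}^*_A=(1-m)\mathbf{x}^*_{\hat{A}}+m\mathbf{x}^*$, for which plain concavity along the non-negative direction $\mathbf{x}^*_A$ (Lemma~\ref{lemma 3}) immediately gives $F(\mathbf{q})\ge mF(\mathbf{x}^*)+(1-m)F(\mathbf{x}^*_{\hat{A}})$; the price is that you must then extract the factor $(1-m)$ against $F(\mathbf{q})$ rather than $F(\mathbf{w}\wedge\mathbf{x}^*)$, which you do by applying Lemma~\ref{lemma 4} on the shifted sub-box $[\mathbf{x}^*_{\hat{A}},\mathbf{1}]$ so that the $\hat{A}$-coordinates of $\mathbf{v}=\mathbf{x}\vee\mathbf{z}'$ normalize to at most $m$. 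Both normalized-ratio checks go through, and your argument buys something the paper's does not: it never uses the optimality of $\mathbf{x}^*$ (nor the stationarity of $\mathbf{x}$), so your version of the lemma holds for an arbitrary $\mathbf{x}^*\in\mathcal{P}$ and any $\mathbf{x}\in[0,m]^n$, which is strictly more general. Two small quibbles: your closing remark that the $\varepsilon$-slack is what saves the $\hat{A}$-side ratio is off --- the bound $(v_i-x_i^*)/(1-x_i^*)\le m$ already follows from $x_i\le m$ alone, since $x_i-x_i^*\le m-mx_i^*$; the $\varepsilon$ earns its keep in Lemmas~\ref{lemma 10} and~\ref{lemma 12}, not here. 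And coordinates with $x_i^*=1$ ($i\in\hat{A}$) make the sub-box degenerate in that direction, but those coordinates are then fixed ($v_i=q_i=1$) and can be dropped from the ratio, so this is only a cosmetic issue.
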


\begin{proof}
    Firstly, define a function $G$ on $[\mathbf{0}, \mathbf{x}^*_A]$; that is, $\forall \mathbf{y} \in [\mathbf{0},\mathbf{x}^*_A]$, $G(\mathbf{y}) := F(\mathbf{x}^*_{\hat{A}}+\mathbf{y}) - F(\mathbf{x}^*_{\hat{A}})$. Note that $G(\mathbf{0}) = 0$. We now show that $G$  is DR-submodular.   $\forall i \in A$ and $k_i \in \mathbb{R}_+$, for all $\mathbf{a} \le \mathbf{b} \in  [\mathbf{0},\mathbf{x}^*_A]$ and $\mathbf{b} + k_i \mathbf{e}_i \in [\mathbf{0},\mathbf{x}^*_A]$.   
    \begin{equation*}
        \begin{aligned}
            G(k_i \mathbf{e}_i + \mathbf{a}) - G(\mathbf{a}) & = F((k_i \mathbf{e}_i + \mathbf{a}) + \mathbf{x}^*_{\hat{A}}) - F(\mathbf{a} + \mathbf{x}^*_{\hat{A}}) = F(k_i \mathbf{e}_i|\mathbf{a} + \mathbf{x}^*_{\hat{A}} ) \\
            & \ge F(k_i \mathbf{e}_i|\mathbf{b} + \mathbf{x}^*_{\hat{A}} ) = F((k_i \mathbf{e}_i + \mathbf{b}) + \mathbf{x}^*_{\hat{A}}) - F(\mathbf{b} + \mathbf{x}^*_{\hat{A}})\\
            & = G(k_i \mathbf{e}_i + \mathbf{b}) - G(\mathbf{b}),
        \end{aligned}
    \end{equation*}
    where the inequality comes from the DR-submodularity of $F$. 
    
Next, we show that $G$ is non-decreasing on $ [0, \mathbf{x}^*_A]$.  If there exist $\mathbf{y} \in [0, \mathbf{x}^*_A] $ and $\mathbf{y}+\varepsilon \mathbf{e}_i \in [0, \mathbf{x}^*_A]$ such that $G(\mathbf{y} + \varepsilon \mathbf{e}_i) - G(\mathbf{y}) < 0$, i.e., $F(\mathbf{y}+ \varepsilon \mathbf{e}_i + \mathbf{x}^*_{\hat{A}}) - F(\mathbf{y} + \mathbf{x}^*_{\hat{A}}) < 0$, then $F(\varepsilon \mathbf{e}_i| \mathbf{y}+\mathbf{x}^*_{\hat{A}}) \le 0$. According to the DR-submodularity of $F$, we have $F(\varepsilon \mathbf{e}_i|\mathbf{x}^* - \varepsilon \mathbf{e}_i) \le F(\varepsilon \mathbf{e}_i| \mathbf{y}+\mathbf{x}^*_{\hat{A}}) < 0 $, implying that $F(\mathbf{x}^* - \varepsilon \mathbf{e}_i) > F(\mathbf{x}^*)$, which contradicts the optimality of $\mathbf{x}^*$.

From $\mathbf{w}_A \wedge \mathbf{x}^*_A \in [\mathbf{0}, \mathbf{x}^*_A]$ and $w_i = m, \forall i \in A$, we have $\mathbf{w}_A \wedge \mathbf{x}^*_A = m \mathbf{x}^*_A + (1-m) (\mathbf{x}^*_A - \frac{1}{1-m}(\mathbf{x}^*_A - \mathbf{x}^*_A \wedge \mathbf{w}_A))$. Also, $\mathbf{x}^*_A - \frac{1}{1-m}(\mathbf{x}^*_A - \mathbf{x}^*_A \wedge \mathbf{w}_A) = \frac{1}{1-m} \mathbf{x}^*_A \wedge \mathbf{w}_A -\frac{m}{1-m}\mathbf{x}^*_A \ge 0$. For $i \in A$, if $m \le x^*_i$, we have $\frac{m-m x^*_i}{1-m} \le x^*_i \iff m-m x^*_i \le x^*_i -mx^*_i$. For $i \in A$, if $m \ge x^*_i$, we have $\frac{x^*_i-mx^*_i}{1-m} \le x^*_i \iff x^*_i-m x^*_i \le x^*_i -mx^*_i$. It implies that $\mathbf{x}^*_A - \frac{1}{1-m}(\mathbf{x}^*_A - \mathbf{x}^*_A \wedge \mathbf{w}_A) \le \mathbf{x}^*_A$, i.e., $\mathbf{x}^*_A - \frac{1}{1-m}(\mathbf{x}^*_A - \mathbf{x}^*_A \wedge \mathbf{w}_A) \in [\mathbf{0}, \mathbf{x}^*_A]$. Then, we have
    \begin{equation*}
        \begin{aligned}
            G(\mathbf{w}_A \wedge \mathbf{x}^*_A) & \ge m G(\mathbf{x}^*_A) + (1-m) G\left(\mathbf{x}^*_A - \frac{1}{1-m}(\mathbf{x}^*_A - \mathbf{x}^*_A \wedge \mathbf{w}_A)\right)\\
            & \ge m G(\mathbf{x}^*_A) = m F(\mathbf{x}^*) - m F(\mathbf{x}^*_{\hat{A}}),
        \end{aligned}
    \end{equation*}
 where the first inequality comes from Lemma \ref{lemma 3}; and the second inequality comes from Lemma \ref{lemma 4}. 
 
 Finally we complete our proof.
    \begin{equation*}
        \begin{aligned}
            F(\mathbf{x} \vee \mathbf{z}') & = F(\mathbf{x} \vee (\mathbf{x}^* \wedge \mathbf{w}) ) \ge (1-m) F(\mathbf{w}\wedge \mathbf{x}^*) = (1-m) F((\mathbf{w}_A\wedge\mathbf{x}^*_A) \vee \mathbf{x}^*_{\hat{A}}) \\
            & = (1-m) \left[F(\mathbf{x}^*_{\hat{A}}) +  F(\mathbf{w}_A\wedge\mathbf{x}^*_A | \mathbf{x}^*_{\hat{A}}) \right] = (1-m) \left[F(\mathbf{x}^*_{\hat{A}}) + G(\mathbf{w}_A\wedge\mathbf{x}^*_A)\right] \\
            & \ge (1-m) \left[F(\mathbf{x}^*_{\hat{A}}) + m G (\mathbf{x}^*_A)\right]  = (1-m) \left[F(\mathbf{x}^*_{\hat{A}}) + m F(\mathbf{x}^*) - m F(\mathbf{x}^*_{\hat{A}})\right] \\
            & = (1-m)m F(\mathbf{x}^*) + (1-m)^2 F(\mathbf{x}^*_{\hat{A}}),
        \end{aligned}
    \end{equation*}
    where the inequality comes from the DR-submodularity of $F$ and $G$, and Lemma \ref{lemma 4}.
\end{proof}

\begin{lemma}\label{lemma 12}
    Let $\mathbf{x}\in \mathcal{P}\cap[0,m]^n$ be a stationary point in the reduced feasible domain and $\mathbf{x}^*\in \mathcal{P}$ be the optimal solution. Denote $A=\{i \in [n]: x_i \ge m-\varepsilon\}$ and $\hat{A} = [n]\setminus A$. For $\mathbf{z}' = \mathbf{x}^* \wedge \mathbf{w}$, we have:
    \begin{equation}
       F(\mathbf{x} \wedge \mathbf{z}') \ge (m- \varepsilon) \left[F(\mathbf{x}^*) - F(\mathbf{x}^*_{\hat{A}}) \right]
    \end{equation}
\end{lemma}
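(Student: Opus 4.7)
The plan is to reduce the target inequality to a statement about the auxiliary function
\[
G(\mathbf{y}) = F(\mathbf{y}+\mathbf{x}^*_{\hat{A}}) - F(\mathbf{x}^*_{\hat{A}}), \qquad \mathbf{y}\in[\mathbf{0},\mathbf{x}^*_A],
\]
which was shown inside the proof of Lemma~\ref{lemma 11} to be DR-submodular, non-decreasing in $\mathbf{y}$, and to satisfy $G(\mathbf{0})=0$. Since $G(\mathbf{x}^*_A)=F(\mathbf{x}^*)-F(\mathbf{x}^*_{\hat{A}})$, the bound in the lemma will follow from the two-step chain
\[
F(\mathbf{x}\wedge \mathbf{z}') \;\ge\; G\big((\mathbf{x}\wedge \mathbf{z}')_A\big) \;\ge\; (m-\varepsilon)\,G(\mathbf{x}^*_A).
\]

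First I would unfold $\mathbf{x}\wedge \mathbf{z}'$ coordinate-wise. Because $\mathbf{x}\in[0,m]^n$ and $\mathbf{z}'=\mathbf{x}^*\wedge \mathbf{w}$ with $w_i=m$ on $A$ and $w_i=1$ on $\hat{A}$, one immediately gets $(\mathbf{x}\wedge \mathbf{z}')_i = \min(x_i,x^*_i)$ for every $i$. A short case analysis on $A$ (splitting on whether $x^*_i\le m-\varepsilon$, and using $x_i\ge m-\varepsilon$ on $A$ together with $x^*_i\le 1$) then produces the key coordinate-wise lower bound $(\mathbf{x}\wedge \mathbf{z}')_i \ge (m-\varepsilon)\,x^*_i$ for $i\in A$; the upper bound $(\mathbf{x}\wedge \mathbf{z}')_A \le \mathbf{x}^*_A$ is immediate from $\min(x_i,x^*_i)\le x^*_i$.

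The main technical step, which I expect to be the real obstacle, is the first inequality in the displayed chain: I have to strip off the $\hat{A}$-coordinates of $\mathbf{x}\wedge \mathbf{z}'$ at a cost no worse than $F(\mathbf{x}^*_{\hat{A}})$. For this I would invoke the lattice submodularity of $F$ (a direct consequence of DR-submodularity) applied to the pair $\mathbf{a}=\mathbf{x}\wedge \mathbf{z}'$ and $\mathbf{b}=\mathbf{x}^*_{\hat{A}}$. A short bookkeeping step using that $\mathbf{x}^*_{\hat{A}}$ vanishes on $A$ and that $(\mathbf{x}\wedge \mathbf{z}')_i\le x^*_i$ on $\hat{A}$ gives the clean identifications $\mathbf{a}\vee \mathbf{b} = (\mathbf{x}\wedge \mathbf{z}')_A + \mathbf{x}^*_{\hat{A}}$ and $\mathbf{a}\wedge \mathbf{b} = (\mathbf{x}\wedge \mathbf{z}')_{\hat{A}}$; rearranging the submodularity inequality and discarding the non-negative term $F((\mathbf{x}\wedge \mathbf{z}')_{\hat{A}})$ then yields $F(\mathbf{x}\wedge \mathbf{z}') \ge G((\mathbf{x}\wedge \mathbf{z}')_A)$.

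To close the argument I would recycle the two properties of $G$ established in Lemma~\ref{lemma 11}. Monotonicity of $G$ on $[\mathbf{0},\mathbf{x}^*_A]$, combined with $(m-\varepsilon)\mathbf{x}^*_A \le (\mathbf{x}\wedge \mathbf{z}')_A \le \mathbf{x}^*_A$, gives $G((\mathbf{x}\wedge \mathbf{z}')_A) \ge G((m-\varepsilon)\mathbf{x}^*_A)$; and concavity of $G$ along the non-negative ray from $\mathbf{0}$ to $\mathbf{x}^*_A$ (Lemma~\ref{lemma 3}), together with $G(\mathbf{0})=0$, yields $G((m-\varepsilon)\mathbf{x}^*_A) \ge (m-\varepsilon)\,G(\mathbf{x}^*_A)$. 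Chaining these two inequalities with the peel-off bound from the previous paragraph produces $F(\mathbf{x}\wedge \mathbf{z}') \ge (m-\varepsilon)[F(\mathbf{x}^*)-F(\mathbf{x}^*_{\hat{A}})]$, which is exactly the claim.
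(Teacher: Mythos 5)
Your proof is correct, but it takes a genuinely different route from the paper's. The paper's argument first notes $\mathbf{x}\wedge\mathbf{z}'=\mathbf{x}\wedge\mathbf{x}^*$, writes $\mathbf{x}=(\mathbf{x}_A\vee\mathbf{1}_{\hat{A}})\wedge(\mathbf{x}_{\hat{A}}\vee\mathbf{1}_{A})$, and extracts the factor $(m-\varepsilon)$ immediately via the multiplicative shrinkage bound of Lemma \ref{lemma 4} (applied with the vector $\mathbf{x}_A\vee\mathbf{1}_{\hat{A}}$, whose smallest coordinate is at least $m-\varepsilon$); only afterwards does it apply one lattice-submodularity inequality to lower bound $F((\mathbf{x}_{\hat{A}}\wedge\mathbf{x}^*_{\hat{A}})\vee\mathbf{x}^*_{A})$ by $F(\mathbf{x}^*)-F(\mathbf{x}^*_{\hat{A}})$. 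You reverse the order: submodularity first, to peel off the $\hat{A}$-block and land on $G((\mathbf{x}\wedge\mathbf{z}')_A)$, and then the factor $(m-\varepsilon)$ comes from the coordinate-wise bound $(\mathbf{x}\wedge\mathbf{z}')_A\ge(m-\varepsilon)\mathbf{x}^*_A$ combined with monotonicity and ray-concavity of the auxiliary function $G$. Both are sound; I checked your lattice identities $\mathbf{a}\vee\mathbf{b}=(\mathbf{x}\wedge\mathbf{z}')_A+\mathbf{x}^*_{\hat{A}}$ and $\mathbf{a}\wedge\mathbf{b}=(\mathbf{x}\wedge\mathbf{z}')_{\hat{A}}$, and the case analysis giving $\min(x_i,x^*_i)\ge(m-\varepsilon)x^*_i$ on $A$. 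The main trade-off is that your route leans on the monotonicity of $G$, which in the proof of Lemma \ref{lemma 11} is derived from the optimality of $\mathbf{x}^*$ (and the down-closedness of $\mathcal{P}$), whereas the paper's proof of this lemma never uses optimality and therefore holds verbatim for an arbitrary $\mathbf{x}^*\in\mathcal{P}$; what your version buys in exchange is uniformity, since Lemmas \ref{lemma 11} and \ref{lemma 12} then run through the identical $G$-machinery.
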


\begin{proof}
    \begin{equation*}
        \begin{aligned}
            F(\mathbf{x} \wedge \mathbf{z}') & = F(\mathbf{x} \wedge \mathbf{x}^* \wedge \mathbf{w}) = F(\mathbf{x} \wedge \mathbf{x}^*) = F((\mathbf{x}_A \vee \mathbf{1}_{\hat{A}}) \wedge (\mathbf{x}_{\hat{A}} \vee \mathbf{1}_{A}) \wedge \mathbf{x}^* ) \\
            & \ge (m- \varepsilon) F((\mathbf{x}_{\hat{A}} \vee \mathbf{1}_{A}) \wedge \mathbf{x}^* ) = (m- \varepsilon) F((\mathbf{x}_{\hat{A}} \wedge \mathbf{x}^*_{\hat{A}}) \vee \mathbf{x}^*_{A} ) \\
            & \ge (m- \varepsilon) \left[F(\mathbf{x}^*) - F(\mathbf{x}^*_{\hat{A}}) \right].
        \end{aligned}
    \end{equation*}
    In the above, the first inequality is because of Lemma \ref{lemma 4}. The second inequality comes from the submodularity $F((\mathbf{x}_{\hat{A}} \wedge \mathbf{x}^*_{\hat{A}}) \vee \mathbf{x}^*_{A} )  + F(\mathbf{x}^*_{\hat{A}}) \ge F(\mathbf{x}^*) + F((\mathbf{x}_{\hat{A}} \vee \mathbf{x}^*_{A}) \wedge \mathbf{x}^*_{\hat{A}})$, where $(\mathbf{x}_{\hat{A}} \wedge \mathbf{x}^*_{\hat{A}}) \vee \mathbf{x}^*_A \vee \mathbf{x}^*_{\hat{A}} = \mathbf{x}^*$ and $((\mathbf{x}_{\hat{A}} \wedge \mathbf{x}^*_{\hat{A}}) \vee \mathbf{x}^*_A) \wedge \mathbf{x}^*_{\hat{A}}  = (\mathbf{x}_{\hat{A}} \vee \mathbf{x}^*_{A}) \wedge (\mathbf{x}^*_{\hat{A}} \vee \mathbf{x}^*_{A}) \wedge \mathbf{x}^*_{\hat{A}}$.
\end{proof}

In light of the above lemmas, next, we prove the approximation ratio of stationary points in $\mathcal{P}\cap [0,m]^n$. 

\begin{theorem}\label{thm:0.309}
    Any stationary point in $\mathcal{P}\cap \left[0,\frac{3-\sqrt{5}}{2}\right]^n$ produces a value that is at least $(0.309-O(\varepsilon))$ of the optimal solution for problem (\ref{eq:main_prob}).
\end{theorem}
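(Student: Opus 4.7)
The plan is to combine the three lemmas already established: Lemma \ref{lemma 10} gives the ``stationary point'' inequality $2F(\mathbf{x}) \ge F(\mathbf{x}\wedge \mathbf{z}') + F(\mathbf{x}\vee \mathbf{z}')$ for $\mathbf{z}' = \mathbf{x}^*\wedge \mathbf{w}$, and Lemmas \ref{lemma 11} and \ref{lemma 12} provide complementary lower bounds on the two right-hand terms in terms of $F(\mathbf{x}^*)$ and $F(\mathbf{x}^*_{\hat{A}})$. Substituting both bounds into Lemma \ref{lemma 10} and dividing by two yields
\begin{equation*}
    F(\mathbf{x}) \;\ge\; \tfrac{1}{2}\bigl[(1-m)m + (m-\varepsilon)\bigr]\, F(\mathbf{x}^*) \;+\; \tfrac{1}{2}\bigl[(1-m)^2 - (m-\varepsilon)\bigr]\, F(\mathbf{x}^*_{\hat{A}}).
\end{equation*}

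The second step is to tune the radius $m$ of the reduced box so that the coefficient of the ``bad'' term $F(\mathbf{x}^*_{\hat{A}})$ becomes nonnegative; then I can discard it using $F\ge 0$ and retain only the coefficient of $F(\mathbf{x}^*)$. The constraint is $(1-m)^2 \ge m - \varepsilon$, i.e., $m^2 - 3m + 1 \ge -\varepsilon$. Subject to this constraint I want to maximize the coefficient of $F(\mathbf{x}^*)$, which is $\tfrac{1}{2}(m(2-m)-\varepsilon)$, and this is increasing in $m$ on $[0,1]$. Hence the optimal choice (up to $O(\varepsilon)$) is the larger root of $(1-m)^2 = m$, namely $m = \tfrac{3-\sqrt{5}}{2}$. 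At this value $(1-m)^2 - m = 0$, so the coefficient of $F(\mathbf{x}^*_{\hat{A}})$ equals $\varepsilon/2 \ge 0$ and may be dropped.

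The third step is just an arithmetic check: with $m = \tfrac{3-\sqrt{5}}{2}$ one has $2-m = \tfrac{1+\sqrt{5}}{2}$, so $m(2-m) = \tfrac{(3-\sqrt{5})(1+\sqrt{5})}{4} = \tfrac{\sqrt{5}-1}{2}$, giving the approximation guarantee
\begin{equation*}
    F(\mathbf{x}) \;\ge\; \tfrac{1}{2}\bigl(m(2-m)-\varepsilon\bigr)\,F(\mathbf{x}^*) \;=\; \Bigl(\tfrac{\sqrt{5}-1}{4}-\tfrac{\varepsilon}{2}\Bigr)\,F(\mathbf{x}^*) \;=\; \bigl(0.309 - O(\varepsilon)\bigr)\,F(\mathbf{x}^*).
\end{equation*}

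The proof itself has no deep obstacle once Lemmas \ref{lemma 10}--\ref{lemma 12} are in hand, since it reduces to a one-parameter optimization in $m$. The conceptual subtlety --- and what justified writing out those three lemmas in the continuous setting in the first place --- is that the coefficient on $F(\mathbf{x}^*_{\hat{A}})$ from Lemma \ref{lemma 11} is $(1-m)^2$ while Lemma \ref{lemma 12} contributes $-(m-\varepsilon)$ to it; these exactly balance at the golden-section value $m = \tfrac{3-\sqrt{5}}{2}$, which is what allows the unwanted term to be discarded without ever needing an \emph{upper} bound on $F(\mathbf{x}^*_{\hat{A}})$ (and in particular without invoking the Lovász extension as in \cite{Chekuri}).
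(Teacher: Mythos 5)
Your proof is correct and follows essentially the same route as the paper: combine Lemmas \ref{lemma 10}--\ref{lemma 12}, observe that for $m\le\frac{3-\sqrt{5}}{2}$ the coefficient of $F(\mathbf{x}^*_{\hat{A}})$ is nonnegative so that term can be dropped by nonnegativity of $F$, and then maximize $\frac{(2-m)m}{2}$ at $m=\frac{3-\sqrt{5}}{2}$ to get $\frac{\sqrt{5}-1}{4}\approx 0.309$. The only nit is terminological: $\frac{3-\sqrt{5}}{2}$ is the \emph{smaller} root of $m^2-3m+1=0$ (the larger root exceeds $1$), though the value you use is the correct one.
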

\begin{proof}
    According to Lemmas \ref{lemma 10}, \ref{lemma 11}, and \ref{lemma 12}, we have: 
    \begin{equation*}
        \begin{aligned}
            F(\mathbf{x}) & \ge \cfrac{1}{2} \left[F(\mathbf{x}\vee \mathbf{z}') + F(\mathbf{x} \wedge \mathbf{z}')\right] \\
            & \ge \cfrac{1}{2} \left[(1-m)m F(\mathbf{x}^*) + (1-m)^2 F(\mathbf{x}^*_{\hat{A}}) + (m- \varepsilon) F(\mathbf{x}^*) - (m -\varepsilon)F(\mathbf{x}^*_{\hat{A}}) \right] \\
            & = \cfrac{(2-m)m -\varepsilon}{2} F(\mathbf{x}^*) + \cfrac{(1-m)^2-m +\varepsilon}{2} F(\mathbf{x}^*_{\hat{A}})
        \end{aligned}
    \end{equation*}
    Note that $m \in\left[0,\frac{3-\sqrt{5}}{2}\right]$ implies that $\frac{(1-m)^2-m}{2} F(\mathbf{x}^*_{\hat{A}}) \ge 0$. So choosing 
    \[m^*= \arg\max_{m \in \left[0,\frac{3-\sqrt{5}}{2}\right]}\frac{(2-m)m}{2} = \frac{3-\sqrt{5}}{2}\]
     leads to $\frac{(2-m^*)m^*}{2} \ge 0.309$.
\end{proof}

Efficient algorithms exist to return a stationary point in $\mathcal{P} \cap [0,m]^n$. For example, the Frank-Wolfe algorithm with the adaptive step (\cite{Lacoste-Julien}) returns a stationary point with a gap smaller than $\varepsilon$ in at most $O\left(\frac{1}{\varepsilon^2}\right)$ iterations. Hence Theorem~\ref{thm:0.309} with this Frank-Wolfe algorithm offers us a $(0.309-O(\varepsilon))$ approximation algorithm for problem (\ref{eq:main_prob}). 

To understand why this restricted local search offers better approximation ratio than the vanilla version of the local search, note that points stationary in the interior of $\mathcal{P} \cap [0,m]^n$ are also  stationary in $\mathcal{P}$; but not the other way around. Therefore, the reduced feasible domain removes some boundary stationary points in $\mathcal{P}$, and changes some non-stationary points in $\mathcal{P}$ to stationary points in the reduced feasible domain, leading to better performances. 

Recall our previous Example \ref{example 1}.  In Tab.\ref{Ratio1}, we report the value ratios for the worst stationary points in $\mathcal{P}$ and $\mathcal{P}\cap \left[0,\frac{3-\sqrt{5}}{2}\right]^n$, respectively. The results show that the approximation ratio of stationary points in $\mathcal{P} \cap \left[0,\frac{3-\sqrt{5}}{2}\right]^n$ performs better than those in $\mathcal{P}$. Also, for this particular example, the numerical ratios are  much better than the proven approximation ratio of $0.309$ when $m=\frac{3-\sqrt{5}}{2}$, and leaving open the question on what the exact approximation ratio of the restricted local search is.


\begin{table}
    \centering
    \caption{Ratios between the worst stationary point and the optimal solution}
    \label{Ratio1}
    \begin{tabular}{m{4.5cm} m{1.2cm} m{1.2cm} m{1.2cm} m{1.2cm} m{1.2cm} m{1.2cm} m{1.2cm}}
        \hline
        & $k=2$ & $k=3$ &$k=5$ & $k=10$ & $k=20$ & $k=30$ & $k=50$ \\
        \hline
        Restricted Continuous Local Search with $m=1$ & $0.4142$ & $0.3222$ & $0.1999$ & $0.1000$ & $0.0500$ & $0.0333$ & $ 0.0200$\\
        \hline
        Restricted Continuous Local Search with $m=\frac{3-\sqrt{5}}{2}$& $0.4271$ & $0.4062$ & $0.3945$ & $0.3877$ & $0.3847$ & $0.3838$ & $0.3830$\\
        \hline
    \end{tabular}
\end{table}

\section{Aided Frank-Wolfe Variant}\label{section 4}
Because the proof of the approximation \emph{ratio of the aided measured continuous greedy} algorithm in \cite{Buchbinder2} strongly depends on the discrete optimal solution and the inequality that the multilinear extension of any submodular set function is bounded from below by its Lovasz extension, whether the same algorithm yields a $0.385$ approximation ratio for problem (\ref{eq:main_prob}) on the continuous domain admits is still an open problem. In this section, we answer this question affirmatively. Because our proof only utilizes DR-submodularity instead of the multilinear extension, the Lovasz extension, and the discrete optimal solution, it brings out more flexibility in tuning the parameters used in the algorithm. To design and discretize our algorithm, we adopt the systematic framework proposed in \cite{Du-submodular-book2022}, namely Lyapunov framework. 

Firstly, we prove the following essential lemma which will be used to derive an upper bound on the optimal solution to clear the way to adopt the aforementioned Lyapunov framework. 
Note that the lemma below extends Lemma 4.4 in \cite{Buchbinder2} from binary optimal solution to any solution in $\mathcal{P}$ and its proof only uses DR-submodularity and non-negativity of $F$. Denote
\begin{eqnarray*}
\theta_A(t) &:=&\left\| \mathbf{x}_A(t) \right\|_{\infty}=\max_{i\in A} \mathbf{x}_i(t);\quad \theta_{\hat{A}}(t):=\left\| \mathbf{x}_{\hat{A}}(t) \right\|_{\infty}=\max_{i\in \hat{A}} \mathbf{x}_i(t)
\end{eqnarray*}
\begin{lemma}\label{lemma 14}
    For any $\mathbf{z} \in \mathcal{P}$ and a partition $A \cap \hat{A} = \emptyset$, $A \cup \hat{A} = [n]$, assume $\theta_A(t)\le \theta_{\hat{A}}(t) $, we have:
    \begin{equation}
        F(\mathbf{x}(t) \vee \mathbf{z}) \ge \left(1-\theta_A(t)\right) F(\mathbf{z}) - \left(\theta_{\hat{A}}(t) -\theta_A(t)\right) F(\mathbf{z}\vee \mathbf{1}_A)
    \end{equation} 
\end{lemma}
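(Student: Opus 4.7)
Write $a=\theta_A(t)$ and $b=\theta_{\hat A}(t)$, so by hypothesis $a\le b$. Decompose $\mathbf{x}(t)=\mathbf{x}_A(t)\vee\mathbf{x}_{\hat A}(t)$, where $\mathbf{x}_A(t)$ is supported on $A$ with sup-norm $a$ and $\mathbf{x}_{\hat A}(t)$ is supported on $\hat A$ with sup-norm $b$. The plan is to combine one submodularity step with two (refined) applications of Lemma~\ref{lemma 4}.

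First, I would apply submodularity of $F$ to the pair $\mathbf{x}(t)\vee\mathbf{z}$ and $\mathbf{z}\vee\mathbf{1}_A$. A direct computation exploits the disjoint supports of $\mathbf{x}_{\hat A}(t)$ and $\mathbf{1}_A$: the meet reduces to $\mathbf{x}_A(t)\vee\mathbf{z}$ (on $A$ it equals $\max(x_i,z_i)$ and on $\hat A$ it equals $z_i$) while the join equals $\mathbf{x}_{\hat A}(t)\vee\mathbf{z}\vee\mathbf{1}_A$. Submodularity then yields
\[
F(\mathbf{x}(t)\vee\mathbf{z}) \;\ge\; F(\mathbf{x}_A(t)\vee\mathbf{z}) \;+\; F(\mathbf{x}_{\hat A}(t)\vee\mathbf{z}\vee\mathbf{1}_A) \;-\; F(\mathbf{z}\vee\mathbf{1}_A).
\]

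Next, I would bound the first right-hand term via Lemma~\ref{lemma 4} with $\alpha=a$ (taking $\mathbf{x}=\mathbf{x}_A(t)$ and $\mathbf{y}=\mathbf{z}$), giving $F(\mathbf{x}_A(t)\vee\mathbf{z})\ge(1-a)F(\mathbf{z})$. For the second term a naive Lemma~\ref{lemma 4} with $\alpha=b$ only delivers $(1-b)F(\mathbf{z}\vee\mathbf{1}_A)$, which falls short of the target coefficient $1-(b-a)$ by $a\,F(\mathbf{z}\vee\mathbf{1}_A)$. To close this gap I would split $\mathbf{x}_{\hat A}(t)$ into a lower piece bounded coordinatewise by $a\mathbf{1}_{\hat A}$ and a residual piece with sup-norm at most $b-a$, then absorb the lower piece into the base $\mathbf{z}\vee\mathbf{1}_A$ by a concavity-along-non-negative-directions step (Lemma~\ref{lemma 3}), and finally apply Lemma~\ref{lemma 4} with $\alpha=b-a$ to the residual. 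The hypothesis $a\le b$ is exactly what ensures this decomposition yields non-negative coordinates.

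Combining the three pieces collapses the $-F(\mathbf{z}\vee\mathbf{1}_A)$ arising from submodularity with the refined $(1-(b-a))F(\mathbf{z}\vee\mathbf{1}_A)$ lower bound to leave $-(b-a)F(\mathbf{z}\vee\mathbf{1}_A)$, which together with $(1-a)F(\mathbf{z})$ reproduces the claimed inequality.

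The main obstacle will be the refined second bound. The difficulty is that a single direct application of Lemma~\ref{lemma 4} to $\mathbf{x}_{\hat A}(t)\vee(\mathbf{z}\vee\mathbf{1}_A)$ necessarily pays the full factor $(1-b)$, because $\mathbf{x}_{\hat A}(t)$ has sup-norm $b$; and attempting to lower-bound $F(\mathbf{x}_{\hat A}(t)\vee\mathbf{z}\vee\mathbf{1}_A)$ pointwise by $(1-(b-a))F(\mathbf{z}\vee\mathbf{1}_A)$ is in fact false in general for non-monotone $F$. The saving grace must be that the slack in the Lemma~\ref{lemma 4} bound on $F(\mathbf{x}_A(t)\vee\mathbf{z})$ (via the retained $\alpha F(\mathbf{r})$ term in the sharp form of Lemma~\ref{lemma 4}) is exactly large enough to compensate, and one must track both bounds simultaneously rather than in isolation — using the common role of $\theta_A(t)$ as both the sup-norm of $\mathbf{x}_A(t)$ and the threshold below which coordinates of $\mathbf{x}_{\hat A}(t)$ are absorbed — to recover the tight coefficient $(b-a)$.
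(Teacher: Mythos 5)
Your opening submodularity split (pairing $\mathbf{x}(t)\vee\mathbf{z}$ with $\mathbf{z}\vee\mathbf{1}_A$ so that the meet is $\mathbf{x}_A(t)\vee\mathbf{z}$ and the join is $\mathbf{x}_{\hat A}(t)\vee\mathbf{z}\vee\mathbf{1}_A$) is valid, and the bound $F(\mathbf{x}_A(t)\vee\mathbf{z})\ge(1-a)F(\mathbf{z})$ is fine. But the argument does not close at the step you yourself flag: you need $F(\mathbf{x}_{\hat A}(t)\vee\mathbf{z}\vee\mathbf{1}_A)\ge(1-(b-a))F(\mathbf{z}\vee\mathbf{1}_A)$, you correctly observe that this is false in general, and the proposed repair cannot produce it. Splitting $\mathbf{x}_{\hat A}(t)$ into a piece of sup-norm $a$ and a residual of sup-norm $b-a$ and applying Lemma~\ref{lemma 3}/Lemma~\ref{lemma 4} in two stages only composes the loss factors multiplicatively, giving $(1-a)\cdot\bigl(1-\tfrac{b-a}{1-a}\bigr)=1-b$ again; concavity along non-negative directions never lets you ``absorb'' an increment for free --- that would be monotonicity, which is exactly what fails here. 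Your fallback --- that the slack retained in the sharp bound on $F(\mathbf{x}_A(t)\vee\mathbf{z})$ exactly compensates --- is the entire content of the lemma and is asserted rather than proved: the retained term is $a\,F\bigl(\mathbf{z}+\tfrac{1}{a}(\mathbf{x}_A(t)\vee\mathbf{z}-\mathbf{z})\bigr)$, and for a non-monotone $F$ there is no a priori comparison between that value and $F(\mathbf{z}\vee\mathbf{1}_A)$ (the argument of the retained term lies coordinatewise below $\mathbf{z}\vee\mathbf{1}_A$, and $F$ can be smaller there). So as written there is a genuine gap at the decisive inequality.

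The paper avoids this block-by-block accounting by coupling the two scales in a single step: it writes $\mathbf{x}(t)\vee\mathbf{z}=(1-\theta_{\hat A}(t))\mathbf{z}+\theta_{\hat A}(t)\,\mathbf{q}$ with $\mathbf{q}=\mathbf{z}+\frac{1}{\theta_{\hat A}(t)}(\mathbf{x}(t)\vee\mathbf{z}-\mathbf{z})$ and applies Lemma~\ref{lemma 3} once, globally. After this rescaling, the coordinates where $\mathbf{q}$ exceeds $\mathbf{z}$ and which come from $A$ are all at most $\theta_A(t)/\theta_{\hat A}(t)$, so Lemma~\ref{lemma 4} applied to exactly those coordinates yields the factor $\theta_{\hat A}(t)\bigl(1-\theta_A(t)/\theta_{\hat A}(t)\bigr)=\theta_{\hat A}(t)-\theta_A(t)$ in front of the remaining term, which is then handled by a single submodularity-plus-non-negativity comparison against $\mathbf{z}$ and $\mathbf{z}\vee\mathbf{1}_A$. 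The coefficient $\theta_{\hat A}(t)-\theta_A(t)$ arises precisely from this coupling and cannot be recovered by adding two bounds derived separately on the $A$-block and the $\hat A$-block, which is why your route stalls. If you want to salvage your decomposition, you would have to prove the joint inequality $F(\mathbf{x}_A(t)\vee\mathbf{z})+F(\mathbf{x}_{\hat A}(t)\vee\mathbf{z}\vee\mathbf{1}_A)\ge(1-a)F(\mathbf{z})+(1+a-b)F(\mathbf{z}\vee\mathbf{1}_A)$ directly, which is not a routine consequence of the lemmas you cite.
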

\begin{proof}
    For convenience and without ambiguity, denote $\mathbf{x} = \mathbf{x}(t)$, $\theta_A =\theta_A(t)$, and $\theta_{\hat{A}} = \theta_{\hat{A}}(t)$. First, we have: 
    \[
    \mathbf{x} \vee \mathbf{z} = \left(1-\theta_{\hat{A}}\right) \mathbf{z} + \theta_{\hat{A}} \left[\mathbf{z}+\frac{1}{\theta_{\hat{A}}}(\mathbf{x}\vee \mathbf{z}-\mathbf{z})\right].
    \] Then,
    \begin{equation*}
    \mathbf{q} = \mathbf{z}+\cfrac{1}{\theta_{\hat{A}}}(\mathbf{x}\vee \mathbf{z}-\mathbf{z}) = \cfrac{\mathbf{x}\vee \mathbf{z}-\left(1-\theta_{\hat{A}}\right)\mathbf{z}}{\theta_{\hat{A}}}  \begin{cases} = z_i &  z_i \ge x_i \\ \le \cfrac{\theta_A}{\theta_{\hat{A}}} &  x_i > z_i, i \in A \\ \le \cfrac{\theta_A}{\theta_{\hat{A}}} &  x_i > z_i, \text{ some } i \in \hat{A} \\ > \cfrac{\theta_A}{\theta_{\hat{A}}}  &  x_i > z_i, \text{ others }i \in \hat{A} \end{cases}
    \end{equation*}
    According to the above inequalities, define three sets 
    \begin{eqnarray*}
    B_1&=&\left\{i \in [n] : z_i \ge x_i\right\}\\
    B_2&=&\left\{i\in [n] : x_i > z_i, q_i \le \frac{\theta_A}{\theta_{\hat{A}}}\right\}\\
    B_3 &=&\left\{i \in [n] : x_i > z_i, 1 \ge q_i > \frac{\theta_A}{\theta_{\hat{A}}}\right\}
    \end{eqnarray*}
    Obviously, $B_3 \subseteq \hat{A}$ and $\mathbf{q}_{B_2} \ge \mathbf{z}_{B_2}$, implying that $\mathbf{q} = \mathbf{q}_{B_1} \vee \mathbf{q}_{B_2} \vee \mathbf{q}_{B_3} \vee \mathbf{z}_{B_2}$. Then, we have:
    \begin{equation*}
        \begin{aligned}
            F(\mathbf{x}(t) \vee \mathbf{z}) & \ge \left(1-\theta_{\hat{A}}\right) F(\mathbf{z}) + \theta_{\hat{A}} F\left(\mathbf{z}+\frac{1}{\theta_{\hat{A}}}(\mathbf{x}\vee \mathbf{z}-\mathbf{z})\right) \\
            & = \left(1-\theta_{\hat{A}}\right) F(\mathbf{z}) + \theta_{\hat{A}} F(\mathbf{q}_{B_1} \vee \mathbf{q}_{B_2} \vee \mathbf{q}_{B_3} \vee \mathbf{z}_{B_2}) \\
            & \ge \left(1-\theta_{\hat{A}}\right) F(\mathbf{z}) + \theta_{\hat{A}}\cfrac{\theta_{\hat{A}}-\theta_A}{\theta_{\hat{A}}} F(\mathbf{q}_{B_1} \vee \mathbf{q}_{B_3} \vee \mathbf{z}_{B_2}) \\
            & \ge \left(1-\theta_{\hat{A}}\right) F(\mathbf{z}) + \left(\theta_{\hat{A}}-\theta_A\right) (F(\mathbf{z}) - F(\mathbf{z} \vee \mathbf{1}_A) ) \\
            & = \left(1-\theta_A\right) F(\mathbf{z}) - \left(\theta_{\hat{A}}-\theta_A\right) F(\mathbf{z}\vee \mathbf{1}_A),
        \end{aligned}
    \end{equation*}
    where the first inequality comes from Lemma \ref{lemma 3}; the second inequality comes from Lemma \ref{lemma 4}; and the third inequality comes from the DR-submodularity and non-negativity of $F$, i.e., $F(\mathbf{q}_{B_1} \vee \mathbf{z}_{B_2} \vee \mathbf{q}_{B_3}) + F(\mathbf{z} \vee \mathbf{1}_A) \ge F(\mathbf{z}) + F(\mathbf{z}\vee \mathbf{1}_A \vee \mathbf{q}_{B_1} \vee \mathbf{z}_{B_2} \vee \mathbf{q}_{B_3})$ .
\end{proof}

\subsection{Lyapunov Function}\label{subsection 4.1}
In this subsection, we utilize the following parametric Lyapunov function $E(\mathbf{x}(t))$ to design our algorithm for problem (\ref{eq:main_prob}). For convenience, we denote $\dot{E}(\mathbf{x}(t)) = \frac{d E(\mathbf{x}(t))}{d t}$.
\begin{definition} \emph{(\cite{Du-submodular-book2022})}
    The Lyapunov function associated with a continuous-time algorithm $\mathbf{x}(t)$ and a feasible solution $\mathbf{y}\in\mathcal{P}$ for problem (\ref{eq:main_prob}) is as follows: 
    \begin{equation}\label{lyapunov}
        E(\mathbf{x}(t)) = a(t) F(\mathbf{x}(t)) + h(t) F(\mathbf{y}) - d(t) F(\mathbf{x}^*)
    \end{equation}
    And the upper bound of the optimal solution $F(\mathbf{x}^*)$ is as follows:
    \begin{equation}\label{eq:up-bound}
        F(\mathbf{x}^*) \le p(\mathbf{x}(t)) F(\mathbf{x}(t)) + q(\mathbf{x}(t)) \langle \nabla F(\mathbf{x}(t)) , v(\mathbf{x}(t)) \rangle + r(\mathbf{x}(t)) F(\mathbf{y})
    \end{equation}
    where $a(t)$, $h(t)$, $d(t)$ are non-negative and non-decreasing, $p(\mathbf{x}(t))$, $q(\mathbf{x}(t))$, $r(\mathbf{x}(t))$ are non-negative. 
\end{definition}

We recall the main ingredients of the Lyapunov framework introduced by \cite{Du1}. The key idea is to enforce monotonicity of the Lyapunov function $E(\mathbf{x}(t))$, implying that $E(\mathbf{x}(t)) \ge E(\mathbf{x}(0))$, from which the approximation ratio of the algorithm $\mathbf{x}(t)$ follows:
\begin{equation}\label{approximation ratio}
   \cfrac{a(t) F(\mathbf{x}(t)) }{a(t) + h(t) - h(0)} +  \cfrac{\left[h(t)-h(0)\right]F(\mathbf{y})}{a(t) + h(t) - h(0)}  \ge \cfrac{d(t)-d(0)}{a(t)+h(t) - h(0)} F(\mathbf{x}^*) +\cfrac{a(0)}{a(t)+h(t) - h(0)} F(\mathbf{x}(0))
\end{equation}
It implies that the continuous algorithm returns the solution $\mathbf{x}(t)$ with probability $p = \frac{a(t)}{a(t)+h(t)-h(0)}$ and returns the solution $\mathbf{y}$ with probability $1-p$.

To achieve non-decreasing $E(\mathbf{x}(t))$ without using any information on the optimal value, we backtrack sufficient conditions for the following to hold:
\begin{eqnarray*}
         \dot{E}(\mathbf{x}(t)) & =& \dot{a}(t) F(\mathbf{x}(t)) + a(t) \langle \nabla F(\mathbf{x}(t)), \dot{\mathbf{x}}(t)) \rangle +\dot{h}(t) F(\mathbf{y}) - \dot{d}(t) F(\mathbf{x}^*) \ge 0
 \end{eqnarray*}
Replacing $F(\mathbf{x}^*)$ with an upper bound in (\ref{eq:up-bound}) to obtain a lower bound of $\dot{E}(\mathbf{x}(t))$. A sufficient condition for the above is that this resultant lower bound is non-negative. 
\begin{eqnarray*}
         \dot{E}(\mathbf{x}(t)) &\overset{(\ref{eq:up-bound})}{\ge} &(\dot{a}(t)- \dot{d}(t) p(\mathbf{x}(t)) ) F(\mathbf{x}(t)) + (\dot{h}(t)- \dot{d}(t) r(\mathbf{x}(t))) F(\mathbf{y})\\
         && \langle \nabla F(\mathbf{x}(t)), a(t) \dot{\mathbf{x}}(t)- \dot{d}(t) q(\mathbf{x}(t)) v(\mathbf{x}(t))\rangle  \ge 0
   \end{eqnarray*}
Because $F(\mathbf{x}(t))$ and $F(\mathbf{y})$ are non-negative, and $F(\mathbf{x}(t))$ is non-monotone, a sufficient condition for the above is that the coefficients of the first two terms to be non-negative and the coefficient of the third term is equal to zero:
\begin{eqnarray*}
        \dot{a}(t) & \ge&  \dot{d}(t) p(\mathbf{x}(t)), \dot{h}(t) \ge \dot{d}(t) r(\mathbf{x}(t)),\\
         \dot{\mathbf{x}}(t) & =& \cfrac{\dot{d}(t) q(\mathbf{x}(t)) v(\mathbf{x}(t))}{a(t)}
   \end{eqnarray*}
   
%
%
%

Hence, we can construct the following variation problem to design our continuous algorithm $\mathbf{x}(t)$ such that the approximation ratio is maximized.
\begin{equation} \label{variation problem in lyapunov}
    \begin{aligned}
        \max \quad & \cfrac{d(t)-d(0)}{a(t) + h(t) - h(0)} \\
        \text{s.t.} \quad & \dot{a}(t)  \ge  \dot{d}(t) p(\mathbf{x}(t))\\
        & \dot{h}(t) \ge \dot{d}(t) r(\mathbf{x}(t)) \\
        & \dot{\mathbf{x}}(t))  = \cfrac{\dot{d}(t) q(\mathbf{x}(t)) v(\mathbf{x}(t))}{a(t)} \\
        & a(t),h(t),d(t) \text{ are non-negative and non-decreasing}
    \end{aligned}
\end{equation}
One of the biggest advantages of this Lyapunov framework is that the algorithm design and analysis process can be put on autopilot mode once we specify the parametric forms of the Lyapunov function and the upper bound on the optimal value.




\subsection{Extending the algorithm in \cite{Buchbinder2} via Lyapunov framework}
Lemma \ref{lemma 14} holds for all feasible solutions. In other words, given any map $\mathcal{P}\ni \mathbf{x}^*\mapsto g(\mathbf{x}^*) \in \mathcal{P}$, we obtain an upper bound variant on the optimal solution as follows:
\begin{equation} 
    F(g(\mathbf{x}^*)) \le \cfrac{F(\mathbf{x}(t)) + \langle \nabla F(\mathbf{x}(t)) , \mathbf{x}(t) \vee g(\mathbf{x^*}) - \mathbf{x}(t) \rangle  + \left(\theta_{\hat{A}}(t) -\theta_A(t)\right) F(g(\mathbf{x}^*)\vee \mathbf{1}_{A})}{1-\theta_A(t)} 
\end{equation}
Therefore, the main challenge to obtain a good upper bound is to construct a proper map $g(\mathbf{x}^*)$. The down-closeness of the convex $\mathcal{P}$ suggests a natural map: $g(\mathbf{x}^*) = \pi(t) \wedge \mathbf{x}^*$,  for any given vector $\pi(t) \in [0,1]^n$. Note that $g(\mathbf{x}^*) \le \mathbf{x}^* \in \mathcal{P}$. Different choices of $\pi(t)$ will produce different algorithms. The one in \cite{Buchbinder2} is as follows:
\begin{equation}\label{eq:pi}
    \pi(t) = \pi_A(t) \mathbf{1}_A + \pi_{\hat{A}}(t) \mathbf{1}_{\hat{A}}
\end{equation}
where $\pi_A(t) = 0$ if $t \le \theta$ and $\pi_A(t) = 1$ if $t > \theta$; and  $\pi_{\hat{A}}(t)=1, \forall t\in [0,1]$, where $A\subseteq [n]$ is a random set that depends on the stationary point $\mathbf{y}$: namely, each $i \in [n]$ is included independently in $A$ with probability $y_i$. If we fix both the parametric form of $\pi(t)$ as in (\ref{eq:pi}) and the choice of $A$, but allow $\pi_A(t)$ and $\pi_{\hat{A}}(t)$ to choose other continuous functions, can we get better approximation ratio? It turns out the answer is negative as will be explained now via the Lyapunov framework. From this discussion, we propose the following Algorithm~\ref{algorithm 1}.

\begin{algorithm}
\caption{}
	{\bf Input:} $F$: objective function, $\mathcal{P}$: feasible domain, $\pi(t) \in [0,1]$: parameter function.\\
    {\bf Output:} the solution $\mathbf{x}(1)$ with probability $1-p$ and the solution $\mathbf{y}$ with probability $p$
 
	\begin{algorithmic}[1]
        
		\STATE $\mathbf{y} \leftarrow$ Frank-Wolfe Algorithm $(F,\mathcal{P})$
        \STATE $A$ is a random subset of $[n]$. For all $i\in [n]$, $i \in A$  with probability of $y_i$ independently.\\
        \text{/* \quad Frank-Wolfe Variant Stage \quad */}
        \STATE Initialize $\mathbf{x}(0) = \mathbf{0}$ 
		\FOR{ $t \in [0,1]$}
        \STATE  $\mathbf{v}(t) = \arg\max\limits_{\mathbf{v} \in \mathcal{P}, \mathbf{v} \le (1-\mathbf{x}(t))\odot \pi(t)}  \langle\nabla F(\mathbf{x}(t)), \mathbf{v} \rangle$
        \STATE $\frac{d \mathbf{x}(t)}{d t} =  \mathbf{v}(t)$
        \ENDFOR
	\end{algorithmic}
    \label{algorithm 1}
\end{algorithm}

The critical thing to employ the Lyapunov framework is to find a good upper bound with a parametric form as in (\ref{eq:up-bound}). Next, we will take on this job.

\begin{lemma}\label{lemma 16}
    For each $t \in [0,1]$, we have: 
    \begin{equation}\label{variation problem bound}
        \begin{aligned}
            F(\mathbf{x}(t) \vee (\pi(t) \wedge \mathbf{x}^*))&  \ge \pi_{\hat{A}}(t)\cdot (1-\theta_A(t)) F(\mathbf{x}^*) - (1-\pi_A(t)) \cdot (1-\theta_A(t)) F(\mathbf{x}^* \wedge \mathbf{1}_A)\\
            & \quad -  (\theta_{\hat{A}}(t) - \theta_A(t)) F( (\pi(t)\wedge \mathbf{x}^*) \vee \mathbf{1}_A)
        \end{aligned}
    \end{equation}
\end{lemma}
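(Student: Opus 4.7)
I will apply Lemma~\ref{lemma 14} with the feasible point $\mathbf{z} := \pi(t)\wedge \mathbf{x}^*$, then supply an auxiliary lower bound on $F(\pi(t)\wedge \mathbf{x}^*)$ using Lemma~\ref{lemma 4}. Since $\mathbf{x}^* \in \mathcal{P}$ and $\pi(t)\wedge \mathbf{x}^* \le \mathbf{x}^*$, down-closedness of $\mathcal{P}$ makes $\mathbf{z}$ feasible, so Lemma~\ref{lemma 14} gives
\[
F(\mathbf{x}(t)\vee(\pi(t)\wedge\mathbf{x}^*)) \ge (1-\theta_A(t)) F(\pi(t)\wedge\mathbf{x}^*) - (\theta_{\hat{A}}(t)-\theta_A(t)) F((\pi(t)\wedge\mathbf{x}^*)\vee \mathbf{1}_A),
\]
whose second term matches the third term of the target RHS verbatim. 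Hence what remains is to prove the intermediate inequality
\[
F(\pi(t)\wedge\mathbf{x}^*) \ge \pi_{\hat{A}}(t) F(\mathbf{x}^*) - (1-\pi_A(t)) F(\mathbf{x}^*\wedge \mathbf{1}_A). \qquad (\star)
\]

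To establish $(\star)$, I will decompose $\pi(t)\wedge \mathbf{x}^* = \mathbf{u} + \mathbf{v}$, where $\mathbf{u} := \pi_A(t)\wedge \mathbf{x}^*_A$ is supported on $A$ and $\mathbf{v} := \pi_{\hat{A}}(t)\wedge \mathbf{x}^*_{\hat{A}}$ is supported on $\hat{A}$, writing $\mathbf{x}^*_A := \mathbf{x}^*\wedge \mathbf{1}_A$ and similarly for $\mathbf{x}^*_{\hat{A}}$. Three ingredients then combine. First, submodularity applied to the pair $(\mathbf{u}+\mathbf{v},\mathbf{x}^*_A)$, whose coordinate-wise join is $\mathbf{x}^*_A+\mathbf{v}$ and meet is $\mathbf{u}$, gives $F(\mathbf{u}+\mathbf{v}) + F(\mathbf{x}^*_A) \ge F(\mathbf{x}^*_A+\mathbf{v}) + F(\mathbf{u})$. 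Second, Lemma~\ref{lemma 4} with $\mathbf{x} = \pi_A(t)\mathbf{1}$ and $\mathbf{y} = \mathbf{x}^*_A$, noting that $\mathbf{x}\wedge \mathbf{y} = \mathbf{u}$ and the maximum feasible $\beta$ equals $\min_i x_i = \pi_A(t)$, yields $F(\mathbf{u}) \ge \pi_A(t) F(\mathbf{x}^*_A)$. Third, Lemma~\ref{lemma 4} with $\mathbf{x} = \mathbf{1}_A + \pi_{\hat{A}}(t)\mathbf{1}_{\hat{A}}$ and $\mathbf{y} = \mathbf{x}^*$, using the key identity $\mathbf{x}\wedge \mathbf{x}^* = \mathbf{x}^*_A + \mathbf{v}$ together with $\min_i x_i = \pi_{\hat{A}}(t)$, gives $F(\mathbf{x}^*_A+\mathbf{v}) \ge \pi_{\hat{A}}(t) F(\mathbf{x}^*)$. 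Substituting the second and third bounds into the first immediately yields $(\star)$.

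The crux is the third ingredient: the identity $\mathbf{x}^*_A + \mathbf{v} = (\mathbf{1}_A + \pi_{\hat{A}}(t)\mathbf{1}_{\hat{A}})\wedge \mathbf{x}^*$ converts what looks like a gnarly non-monotone lower bound into a direct application of Lemma~\ref{lemma 4}. A naive attempt via concavity of $F$ along the non-negative direction $\mathbf{x}^*_{\hat{A}}$ (Lemma~\ref{lemma 3}) stumbles because $\mathbf{v}$ is a coordinate-wise minimum rather than a scalar multiple of $\mathbf{x}^*_{\hat{A}}$; bridging this gap would require monotonicity of $F$, which is absent. Lemma~\ref{lemma 4} is tailor-made for coordinate-wise minima and sidesteps the issue entirely, which is the insight that lets the argument go through for general DR-submodular $F$ rather than only multilinear extensions, thereby avoiding the Lovász-extension machinery used in \cite{Buchbinder2}.
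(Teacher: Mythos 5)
Your proposal is correct and follows essentially the same route as the paper: apply Lemma~\ref{lemma 14} with $\mathbf{z}=\pi(t)\wedge\mathbf{x}^*$ (feasible by down-closedness), then lower-bound $F(\pi(t)\wedge\mathbf{x}^*)$ via the submodularity inequality $F(\pi(t)\wedge\mathbf{x}^*)+F(\mathbf{x}^*\wedge\mathbf{1}_A)\ge F(\mathbf{x}^*\wedge(\mathbf{1}_A\vee\pi(t)))+F(\pi(t)\wedge\mathbf{x}^*\wedge\mathbf{1}_A)$ and two applications of Lemma~\ref{lemma 4} with $\beta=\pi_{\hat{A}}(t)$ and $\beta=\pi_A(t)$ respectively. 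Your explicit $\mathbf{u}+\mathbf{v}$ decomposition and verification of the join/meet identities just spells out what the paper states compactly; the argument is identical.
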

\begin{proof}
    \begin{equation*}
        \begin{aligned}
            F(\mathbf{x}(t) \vee (\pi(t) \wedge \mathbf{x}^*)) & \ge (1-\theta_A(t)) F(\pi(t) \wedge \mathbf{x}^*) - (\theta_{\hat{A}}(t)-\theta_A(t)) F((\pi(t) \wedge \mathbf{x}^*) \vee \mathbf{1}_A)  \\
            & \ge (1-\theta_A(t)) [F(\mathbf{x}^*\wedge \pi(t) \wedge \mathbf{1}_A) + F(\mathbf{x}^* \wedge (\mathbf{1}_A \vee \pi_{\hat{A}}(t) \mathbf{1}_{\hat{A}})) - F(\mathbf{x}^* \wedge \mathbf{1}_A)] \\
            & \quad - (\theta_{\hat{A}}(t) - \theta_A(t))F( (\pi(t) \wedge \mathbf{x}^*) \vee \mathbf{1}_A) \\
            & \ge \pi_{\hat{A}}(t)\cdot (1-\theta_A(t)) F(\mathbf{x}^*) - (1-\pi_A(t)) \cdot (1-\theta_A(t)) F(\mathbf{x}^* \wedge \mathbf{1}_A) \\
            & \quad - (\theta_{\hat{A}}(t) - \theta_A(t)) F( (\pi(t) \wedge \mathbf{x}^*) \vee \mathbf{1}_A)
        \end{aligned}
    \end{equation*}
    The first inequality is because of Lemma \ref{lemma 14}. The second inequality comes from submodularity, i.e., $F(\pi(t) \wedge \mathbf{x}^*) + F(\mathbf{x}^* \wedge \mathbf{1}_A) \ge F(\mathbf{x}^* \wedge \mathbf{1}_A \wedge \pi(t)) + F(\mathbf{x}^* \wedge (\mathbf{1}_A\vee \pi(t)))$. The last inequality is due to Lemma \ref{lemma 4}. 
\end{proof}

Lemma \ref{lemma 16} implies the following upper bound on the optimal value $F(\mathbf{x}^*)$.
\begin{equation}\label{ upper bound of x^*}
\begin{array}{c}
 \begin{aligned}
        F(\mathbf{x}^*) & \le \cfrac{F(\mathbf{x}(t))}{(1-\theta_A(t))\pi_{\hat{A}}(t)} + \cfrac{\langle \nabla F(\mathbf{x}(t)) , \mathbf{x}(t) \vee (\pi(t) \wedge \mathbf{x}^*)   - \mathbf{x}(t) \rangle}{(1-\theta_A(t))\pi_{\hat{A}}(t)} \\
        & \quad + \cfrac{\left(\theta_{\hat{A}}(t) -\theta_A(t)\right) }{(1-\theta_A(t))\pi_{\hat{A}}(t)} F((\mathbf{x}^*\wedge \pi(t))\vee \mathbf{y}) + \cfrac{1-\pi_A(t)}{\pi_{\hat{A}}(t)} F(\mathbf{x}^*\wedge \mathbf{y}) \\
        & \le  \cfrac{\langle \nabla F(\mathbf{x}(t)) , \mathbf{x}(t) \vee (\pi(t) \wedge \mathbf{x}^*)   - \mathbf{x}(t) \rangle}{(1-\theta_A(t)) \pi_{\hat{A}}(t)} +\cfrac{F(\mathbf{x}(t))}{(1-\theta_A(t))\pi_{\hat{A}}(t)}+ r(\mathbf{x}(t))F(\mathbf{y}) \\ 
        & = \cfrac{\langle \nabla F(\mathbf{x}(t)) ,\pi(t) \wedge \mathbf{x}^*  - \mathbf{x}(t) \wedge \pi(t) \wedge \mathbf{x}^*\rangle}{(1-\theta_A(t)) \pi_{\hat{A}}(t)} +\cfrac{F(\mathbf{x}(t))}{(1-\theta_A(t))\pi_{\hat{A}}(t)}+ r(\mathbf{x}(t))F(\mathbf{y}) \\
        & \le \max_{\mathbf{y} \in \mathcal{P}, \mathbf{v} \le \pi(t) \wedge \mathbf{y}  - \mathbf{x}(t) \wedge \pi(t) \wedge \mathbf{y}} \cfrac{\langle \nabla F(\mathbf{x}(t)), \mathbf{v} \rangle}{(1-\theta_A(t)) \pi_{\hat{A}}(t)} +\cfrac{F(\mathbf{x}(t))}{(1-\theta_A(t)) \pi_{\hat{A}}(t)}+ r(\mathbf{x}(t))F(\mathbf{y})  \\
        & \le \max_{\mathbf{v} \in \mathcal{P}, \mathbf{v} \le (1-\mathbf{x}(t))\odot  \pi(t)} \cfrac{\langle \nabla F(\mathbf{x}(t)), \mathbf{v} \rangle}{(1-\theta_A(t)) \pi_{\hat{A}}(t)} +\cfrac{F(\mathbf{x}(t))}{(1-\theta_A(t)) \pi_{\hat{A}}(t)}+ r(\mathbf{x}(t))F(\mathbf{y}) 
    \end{aligned} \\
   \\ 
   \bigg \Downarrow \\
    p(\mathbf{x}(t))= q(\mathbf{x}(t))= \cfrac{1}{(1-\theta_A(t)) \pi_{\hat{A}}(t)}\\
    v(\mathbf{x}(t)) = \arg\max\limits_{\mathbf{v} \in \mathcal{P}, \mathbf{v} \le (1-\mathbf{x}(t))\odot  \pi(t)} \langle \nabla F(\mathbf{x}(t)), \mathbf{v} \rangle \\
    r(\mathbf{x}(t)) = \cfrac{\left(\theta_{\hat{A}}(t) -\theta_A(t)\right)}{(1-\theta_A(t))\pi_{\hat{A}}(t)} + \cfrac{1- \pi_A(t)}{\pi_{\hat{A}}(t) \left\| \pi(t) \right \|_{-\infty}  }
\end{array}
\end{equation}
The first inequality comes from Eq.(\ref{variation problem bound}), Lemmas \ref{lemma 3}, and \ref{lemma 5}, i.e., $F((\mathbf{x}^* \wedge \pi(t) )\vee \mathbf{x}(t)) \le F(\mathbf{x}(t)) + \langle \nabla F(\mathbf{x}(t)), (\mathbf{x}^* \wedge \pi(t) )\vee \mathbf{x}(t) - \mathbf{x}(t) \rangle$. The second inequality is from Lemma \ref{lemma 8}, i.e., $F(\mathbf{y}) = \int_0^1 F(\mathbf{y}) d \lambda(t) \ge \gamma \int_0^1 F((\pi(t) \wedge \mathbf{x}^*) \vee \mathbf{y}) d \lambda(t) + (1-\gamma)\int_0^1 F(\pi(t) \wedge \mathbf{x}^* \wedge \mathbf{y}) d \lambda(t) \ge \gamma \int_0^1 F((\pi(t) \wedge \mathbf{x}^*) \vee \mathbf{y}) d \lambda(t) + (1-\gamma)F(\mathbf{x}^* \wedge \mathbf{y})\int_0^1 \left\|\pi(t)\right\|_{-\infty} d \lambda(t) $ for every $\int_0^1 d\lambda(t) = 1$. The last inequality is because of $\pi(t) \wedge \mathbf{y}  - \mathbf{x}(t) \wedge \pi(t) \wedge \mathbf{y} \le (1-\mathbf{x}(t)) \odot (\pi(t) \wedge \mathbf{y}) \le (1-\mathbf{x}(t)) \odot \pi(t)$. Therefore, for the given constraints, we can decide $a(t)$, $d(t)$, $h(t)$, $\theta_A(t)$, $\theta_{\hat{A}}(t)$ by the following equations. 

\begin{equation}\label{eq:para}
     \begin{cases}
        &  \dot{a}(t)  =  \dot{d}(t) p(\mathbf{x}(t))\\
        & \dot{h}(t) = \dot{d}(t) r(\mathbf{x}(t)) \\
        & \dot{\mathbf{x}}(t))  = \cfrac{\dot{d}(t) q(\mathbf{x}(t)) v(\mathbf{x}(t))}{a(t)} \\
    \end{cases}  
    \Rightarrow
    \begin{cases}
        &  a(t)  = e^t\\
        & \dot{\mathbf{x}}(t)  =  v(\mathbf{x}(t)) \\
        & \theta_A(t) = 1- e^{-\int_0^t \pi_A(s)ds} \\
        & \theta_{\hat{A}}(t) = 1- e^{-\int_0^t \pi_{\hat{A}}(s)ds} \\
        & d(t) - d(0) = \int_0^t g_A(s) \pi_{\hat{A}}(s)  ds \\
        & h(t) - h(0) = \int_0^t g_A(s) - g_{\hat{A}}(s) + \frac{1-\pi_A(s)}{\left\| \pi(s)\right\|_{-\infty}} g_A(s) ds
    \end{cases}
\end{equation}
Because of $p(\mathbf{x}(t)) = q(\mathbf{x}(t))$, we have $\dot{\mathbf{x}}(t) = \frac{\dot{a}(t)}{a(t)} v(\mathbf{x}(t))$. To ensure that the final output $\mathbf{x}(1) = \int_0^1 v(\mathbf{x}(s)) d (\ln a(s))$ is a feasible solution in $\mathcal{P}$, we fix the stopping time at $t=1$ and hence $\mathbf{x}(t)$ is a convex combination of feasible solutions $v(\mathbf{x}(s))$. It induces that we have $\ln (a(0)) = 0$ and $\ln(a(1)) = 1$, i.e., $a(t)=e^t$. Also, we have $\dot{\mathbf{x}}(t) \le \frac{\dot{a}(t)}{a(t)} \odot (\mathbf{1}-\mathbf{x}(t)) \odot \pi(t)$. It implies $x_i(t) \le 1- {e^{-\int_0^t \pi_A(s)ds}}$ for $i \in A$ and $x_i(t) \le 1- {e^{-\int_0^t \pi_{\hat{A}}(s)ds}}$ for $i \in \hat{A}$. Therefore, we have the explicit form of $\theta_A(t),\theta_{\hat{A}}(t), d(t), h(t)$. For convenience, we denote $g_A(t) = e^{t - \int_0^t \pi_A(s) ds}$ and $g_{\hat{A}}(t) = e^{t - \int_0^t \pi_{\hat{A}}(s) ds}$. 

Plugging Eq.(\ref{ upper bound of x^*}) and Eq.(\ref{eq:para}) into Eq.(\ref{variation problem in lyapunov}), we construct the following optimization problem to maximize the approximation ratio of Algorithm~\ref{algorithm 1}. The first constraint comes from the condition of Lemma \ref{lemma 14}, i.e., $\theta_A(t) \le \theta_{\hat{A}}(t)$. The second constraint comes from Lemma \ref{lemma 8}, i.e., $\gamma \in [0,1/2]$.   

\begin{equation}\label{variation problem}
    \begin{aligned}
     \max  \quad &\cfrac{\int_0^1 \pi_{\hat{A}}(q)g_A(q) dq}{e + \int_0^1 g_A(q) - g_{\hat{A}}(q) + \frac{(1-\pi_A(q))g_A(q)}{\left\|\pi(q)\right\|_{-\infty}} dq}   \\
     \text{ s.t. } & e^{- \int_0^t \pi_A(s) ds} \ge e^{- \int_0^t \pi_{\hat{A}}(s) ds} \\
    &  \int_0^1 \cfrac{\theta_{\hat{A}}(t) -\theta_A(t)}{(1-\theta_A(t))\pi_{\hat{A}}(t)} dt \le \int_0^1 \cfrac{1- \pi_A(t)}{\pi_{\hat{A}}(t) \left\| \pi(t) \right \|_{-\infty}  } dt \\
    & \pi_A(t), \pi_{\hat{A}}(t) \in [0,1]
    \end{aligned}
\end{equation}
Note that $1\le g_{\hat{A}}(t) \le g_A(t) \le e^{t}$, $g_{\hat{A}}(t)$ and $g_A(t)$ are non-decreasing. It means that the increasing rate of $\int_0^1 g_A(q) -g_{\hat{A}}(q) dq$ with increasing $\pi_{\hat{A}}(q)$ is less than the that of $\int_0^1 \pi_{\hat{A}}(q) g_A(q) dq $ with increasing $\pi_{\hat{A}}(q)$. In addition, a larger $\pi_{\hat{A}}(q)$ implies a larger $\int_0^1 \pi_{\hat{A}}(q)g_A(q) dq$ and a smaller $\int_0^1\frac{(1-\pi_A(q))g_A(q)}{ \left\|\pi(q)\right\|_{-\infty} }d q$. Hence, a larger $\pi_{\hat{A}}(t)$ induces a larger approximation ratio. Therefore, for all $t \in [0,1]$, we have: 
\begin{equation}\label{pi2}
    \pi_{\hat{A}}(t) = 1 \Rightarrow g_{\hat{A}}(t) = 1
\end{equation}

From $\pi_{\hat{A}}(t) = 1$, it means $F((\pi(t) \wedge \mathbf{x}^*)\vee \mathbf{1}_A) = F(\mathbf{x}^*\vee \mathbf{1}_A)$. In other words, we can replace $F((\pi(t) \wedge \mathbf{x}^*) \vee \mathbf{y})$ by $F(\mathbf{x}^* \vee \mathbf{y})$. Therefore, Eq.(\ref{variation problem bound}) can be rewritten as:

\begin{equation}
    \begin{aligned}
        F(\mathbf{x}(t) \vee (\pi(t) \wedge \mathbf{x}^*))&  \ge (1-\theta_A(t)) F(\mathbf{x}^*) - (1-\pi_A(t)) \cdot (1-\theta_A(t)) F(\mathbf{x}^* \wedge \mathbf{1}_A)\\
        & \quad -  (\theta_{\hat{A}}(t) - \theta_A(t)) F( \mathbf{x}^* \vee \mathbf{1}_A)
    \end{aligned}
\end{equation}

Similar as Eq.(\ref{ upper bound of x^*}) and Eq.(\ref{eq:para}), we can simplify $r(\mathbf{x}(t)) = \frac{\theta_{\hat{A}}(t) -\theta_A(t)}{1-\theta_A(t)} + 1- \pi_A(t)$ and $h(t) - h(0) = \int_0^t g_A(t) - g_{\hat{A}}(t) dt + g_A(t) -g_A(0) $. Therefore, we have a simplified variation problem as follows:

\begin{equation}
    \begin{aligned}
         \max & \quad \cfrac{\int_0^1 g_A(q) dq}{e+ g_A(1)-1 + \int_0^1 g_A(q) - 1 dq } = 1 - \cfrac{1}{1+\frac{\int_0^1 g_A(q) dq}{e+g_A(1)-2}} \\
         \text{ s.t. } & e^{-\int_0^t \pi_A(s)d s} \ge e^{-t}, g_A(t) = e^{t-\int_0^t \pi_A(s) ds} \\ 
         & \int _0^1 \cfrac{1-e^{-t} -\theta_A(t)}{1-\theta_A(t)} dt \le  \int_0^1 1- \pi_A(t) dt   \\
        & \pi_A(t)\in [0,1]
    \end{aligned} 
\end{equation}

It implies that we should make $\int_0^1 g_A(q) dq$ as large as possible and $g_A(1)$ as small as possible. In other words, we hope that $g_A(t)$ is as large as possible for all $t \in [0,1)$ but $g_A(1)$ is as small as possible. Therefore, it represents that there is a $\theta \in [0,1]$ such that $g_A(t)$ has the following form, which uniquely decides the form of $\pi_A(t)$ as follows:
\begin{equation}\label{pi1}
    g_A(t) = e^{t-\int_0^t \pi_A(s) ds}= \begin{cases}
        e^t & t \in [0,\theta]\\
        e^{\theta} & t \in (\theta,1]
    \end{cases} \Rightarrow 
    \pi_A(t)= \begin{cases}
        0 & t \in [0,\theta]\\
        1 & t \in (\theta,1]
    \end{cases}
\end{equation}

\begin{table}[htb]
    \centering
    \caption{Parameter functions in the Lyapunov function}
    \label{Parameter Function in Lyapunov Function}
    \begin{tabular}{m{1.5cm} m{1cm} m{3cm} m{2cm} m{1.5cm} m{1.5cm} m{1.5cm}}
        \hline
        & $a(t)$ & $h(t)$  & $d(t)$ &$p(\mathbf{x}(t))$ & $q(\mathbf{x}(t))$ & $r(\mathbf{x}(t))$ \\
        \hline
        $t \in [0,\theta]$ & $e^t$ & $2 e^t - t$ & $e^{t}$ & $1$ & $1$ &$2-e^{-t}$ \\
        \hline
         $t \in (\theta,1]$ & $e^t$ & $e^{\theta}\cdot (t + 2-\theta) - t$ &$e^\theta\cdot (t+1-\theta)$ &$e^{t-\theta}$ &$e^{t-\theta}$ & $1-e^{-\theta}$\\
        \hline
    \end{tabular}
\end{table}
It is easy to verify $\int _0^1 \frac{1-e^{-t} -\theta_A(t)}{1-\theta_A(t)} dt \le  \int_0^1 1- \pi_A(t) dt$. Then, we obtain the following parametric coefficient functions used in the Lyapunov framework (Eq.(\ref{lyapunov})) in Tab.\ref{Parameter Function in Lyapunov Function}. Therefore, we can reduce the variation problem to a univariate optimization problem:
\begin{equation}
    \max_{\theta \in [0,1]} \cfrac{(2-\theta) \cdot e^{\theta} -1}{ e+3e^\theta -3 -\theta e^{\theta}}
\end{equation}

Solving the above maximization problem, we have $\theta\approx 0.372$, $p\approx 0.77$, and the approximation ratio is at least $0.385$. Therefore, the Lyapunov framework recovers the \emph{aided measured continuous greedy algorithm} in \cite{Buchbinder2} and extended it from the multilinear extension of submodular set functions to general DR-submodular functions. Note that the optimal $\pi_A(t)$ and $\pi_{\hat{A}}(t)$ from the variation problem agree with the ones used in \cite{Buchbinder2}, showing that their choice is optimal under the form of the lower bound in Lemma \ref{lemma 14} and the assumption that components in $A$ are the same, and components in $\hat{A}$ are the same.

\begin{algorithm}
\caption{Discretized Algorithm~\ref{algorithm 1}}
	{\bf Input:} $F$: objective function, $\mathcal{P}$: feasible domain, $\theta \in [0,1]$: parameter, $\varepsilon$: convergence rate, $D$: diameter of the feasible domain, $L$: constant of $L$-smooth for the objective function. \\
    {\bf Output:} the solution $\mathbf{x}(1)$ with probability $1-p$ and the solution $\mathbf{y}$ with probability $p$
    
	\begin{algorithmic}[1]
		\STATE $\mathbf{y} \leftarrow$ Frank-Wolfe with adaptive step sizes $(F,\mathcal{P})$
        \STATE $A$ is a random subset of $[n]$. For all $i\in [n]$, $i \in A$  with probability of $y_i$ independently. \\
        \text{/* \quad Frank-Wolfe Variant Stage \quad */}
        \STATE Initialize $\mathbf{x}(0) = \mathbf{0}$, $t_0=0$, $N_1 = O\left(\frac{\theta DL}{\varepsilon}\right)$, $N_2= O\left(\frac{(1-\theta)DL}{\varepsilon}\right)$.
		\FOR{ $j =0$ to $N_1+N_2-1$}
        \IF{$j \le N_1-1$}
        \STATE $\mathbf{v}(t_j) = \arg\max_{\mathbf{v} \in \mathcal{P}, \mathbf{v} \le (1-\mathbf{x}(t_j))\odot \mathbf{1}_{\hat{A}}} \langle\nabla F(\mathbf{x}(t_j)), \mathbf{v}\rangle$
        \STATE $\mathbf{x}(t_{j+1}) = \mathbf{x}(t_j) + (1-e^{\frac{-1}{N_1+N_2}})  \mathbf{v}(t_j)$
        \ELSE
        \STATE  $\mathbf{v}(t_j) = \arg\max_{\mathbf{v} \in \mathcal{P},\mathbf{v}\le 1-\mathbf{x}(t_j)} \langle\nabla F(\mathbf{x}(t_j)),\mathbf{v} \rangle$
        \STATE $\mathbf{x}(t_{j+1}) = \mathbf{x}(t_j) + \frac{1}{(N_1+N_2)e^{\frac{1}{N_1+N_2}}} \mathbf{v}(t_j)$
        \ENDIF
        \ENDFOR
	\end{algorithmic}
    \label{algorithm 2}
\end{algorithm}

\subsection{Discretization via the Lyapunouv Function}\label{subsection 4.3}
In this subsection, we construct a discrete-time implementation of the above continuous-time algorithm that keeps the same approximation ratio with polynomial-time oracle complexity by following the framework proposed in ~\cite{Du1}. Based on the Lyapunov function in Subsection \ref{subsection 4.1}, we first introduce the discrete version of the Lyapunov function, namely the potential function. From the potential function, the discretized implementation of Algorithm~\ref{algorithm 1} is naturally obtained.

\begin{definition} (\cite{Du-submodular-book2022})
    The potential function associated with a continuous-time algorithm $\mathbf{x}(t)$ is as follows: 
    \begin{equation}\label{potential function}
        E(\mathbf{x}(t_j)) = a(t_j) F(\mathbf{x}(t_j)) + h(t_j) F(\mathbf{y}) - d(t_j) F(\mathbf{x}^*)
    \end{equation}
    And the upper bound of the optimal solution $F(\mathbf{x}^*)$ is as follows:
    \begin{equation}
        F(\mathbf{x}^*) \le p(\mathbf{x}(t_j)) F(\mathbf{x}(t_j)) + q(\mathbf{x}(t_j)) \langle \nabla F(\mathbf{x}(t_j)), v(\mathbf{x}(t_j)) \rangle + r(\mathbf{x}(t_j))F(\mathbf{y})
    \end{equation}
    where $a(t_j)$, $h(t_j)$, $d(t_j)$ are non-negative, non-decreasing sequences and $p(\mathbf{x}(t_j))$, $q(\mathbf{x}(t_j))$, $r(\mathbf{x}(t_j))$ are non-negative sequences.
\end{definition}

Similar as Subsection \ref{subsection 4.1}, if $E(\mathbf{x}(t_j))$ is a non-decreasing sequence, i.e., $E(\mathbf{x}(t_{j+1}))-E(\mathbf{x}(t_j)) \ge 0$, it implies $E(\mathbf{x}(t_N)) - E(\mathbf{x}(t_0)) = \sum_{j=0}^{N-1} E(\mathbf{x}(t_{j+1}))-E(\mathbf{x}(t_j)) \ge 0$, where $N$ is the number of iterations. However, there are some errors in the process of discretization. The first error arises from the continuous algorithm to the discrete algorithm, which is bounded by $L$-smooth, i.e., $F(\mathbf{x}(t_{j+1}))- F(\mathbf{x}(t_j)) \ge \langle \nabla F(\mathbf{x}(t_j)), \mathbf{x}(t_{j+1})-\mathbf{x}(t_j) \rangle -\frac{L}{2} \left\|\mathbf{x}(t_{j+1})-\mathbf{x}(t_j)\right\|^2$. The second error comes from the differential equation by using the finite difference to replace the derivative. Therefore, we introduce some error measures that implies $E(\mathbf{x}(t_{j+1}))-E(\mathbf{x}(t_j)) \ge -Err_1(t_j) F(\mathbf{x}(t_j)) - Err_2(t_j) F(\mathbf{y}) - Err_3(t_j)$. If $Err_1(t_j)$ and $Err_2(t_j)$ are non-positive, it means that we just consider the error comes from the $L$-smooth, i.e., $Err_3(t_j)$. If the solution $\mathbf{x}(t_j)$ in each iteration is feasible and $Err_1(t_j) \ge 0$, we can substitute $F(\mathbf{x}(t_j))$ by $F(\mathbf{x}^*)$. Hence, the approximation ratio has the following closed-form, $Err_1^+(t_j) = \max\{Err_1(t_j),0 \}$, similarly $Err_2^+(t_j)$.
\begin{equation}
    \begin{aligned}
         & \cfrac{a(t_N)}{a(t_N) + h(t_N)-h(t_0)+\sum_{j=0}^{N-1} Err_2^+(t_j)} F(\mathbf{x}(t_N))  + \cfrac{h(t_N)-h(t_0)+\sum_{j=0}^{N-1} Err_2^+(t_j)}{a(t_N) + h(t_N)-h(t_0)+\sum_{j=0}^{N-1} Err_2^+(t_j)} F(\mathbf{y}) \\
         & \quad \ge \cfrac{d(t_N)-d(t_0)-\sum_{j=0}^{N-1} Err_1^+(t_j)}{a(t_N) + h(t_N)-h(t_0)+\sum_{j=0}^{N-1} Err_2^+(t_j)} F(\mathbf{x}^*) -\cfrac{\sum_{j=0}^{N-1} Err_3(t_j)} {a(t_N) + h(t_N)-h(t_0)+\sum_{j=0}^{N-1} Err_2^+(t_j)}
    \end{aligned}
\end{equation}

To achieve our target, plugging the discrete upper bound of $F(\mathbf{x}^*)$ into $E(\mathbf{x}(t_{j+1}))- E(\mathbf{x}(t_j))$, we have the lower bound of $E(\mathbf{x}(t_{j+1}))- E(\mathbf{x}(t_j))$. In light of the lower bound, it induces a sufficient condition to guarantee the non-decreasing $E(\mathbf{x}(t_j))$ with some errors.

\begin{equation}\label{discrete lower bound of E}
\begin{array}{c}
    \begin{aligned}
        E(\mathbf{x}(t_{j+1})) - E(\mathbf{x}(t_j)) & = a(t_{j+1}) [F(\mathbf{x}(t_{j+1}))- F(\mathbf{x}(t_j))] + [a(t_{j+1})- a(t_j)] F(\mathbf{x}(t_j))\\
        & \quad + [h(t_{j+1}) - h(t_j) ] F(\mathbf{y})- [d(t_{j+1})-d(t_j)] F(\mathbf{x}^*) \\
        & \ge \langle \nabla F(\mathbf{x}(t_j)), a(t_{j+1}) [\mathbf{x}(t_{j+1})-\mathbf{x}(t_j)] - [d(t_{j+1})-d(t_j)] q(\mathbf{x}(t_j)) v(\mathbf{x}(t_j))\rangle \\
        & \quad -\frac{a(t_{j+1})L}{2} \left\|\mathbf{x}(t_{j+1})-\mathbf{x}(t)_j\right\|^2\\ 
        & \quad + [a(t_{j+1})- a(t_j) - [d(t_{j+1})-d(t_j)] p(\mathbf{x}(t_j))] F(\mathbf{x}(t_j)) \\
        & \quad + [h(t_{j+1}) - h(t_j) - [d(t_{j+1})-d(t_j)] r(\mathbf{x}(t_j)) ] F(\mathbf{y})\\
        & \ge -Err_1(t_j) F(\mathbf{x}(t_j)) - Err_2(t_j) F(\mathbf{y})- Err_3(t_j)
    \end{aligned} \\
    \bigg \Downarrow \\
   \begin{cases}
        a(t_{j+1})- a(t_j) - [d(t_{j+1})-d(t_j)] p(\mathbf{x}(t_j)) & = -Err_1(t_j) \\
        h(t_{j+1}) - h(t_j) - [d(t_{j+1})-d(t_j)] r(\mathbf{x}(t_j)) & = -Err_2(t_j)\\
        \cfrac{a(t_{j+1})L}{2} \left\|\mathbf{x}(t_{j+1})-\mathbf{x}(t_j)\right\|^2 & = Err_3(t_j) \\
        \cfrac{[d(t_{j+1})-d(t_j)] q(\mathbf{x}(t_j)) v(\mathbf{x}(t_j))}{a(t_{j+1})} & =\mathbf{x}(t_{j+1})-\mathbf{x}(t_j)
    \end{cases}
\end{array}
\end{equation}

Note that $Err_3(t_j)$ is non-negative, which implies that the error caused by the discrete algorithm is unavoidable. If there are some special sequences, which ensure that $Err_1(t_j)$ or $Err_2(t_j)$ is non-positive, we can ignore $F(\mathbf{x}(t_j))$ or $F(\mathbf{y})$, accordingly. Therefore, we construct a large-scale numerical optimization problem to maximize the approximation ratio, whose decision variables are the sequence $a(t_j)$, $h(t_j)$, $d(t_j)$, and $v(\mathbf{x}(t_j))$.

\begin{equation}\label{potential optimization problem}
    \begin{aligned}
        \max \quad & \left \{ \cfrac{d(t_N)-d(t_0)-\sum_{j=0}^{N-1} Err_1^+(t_j)}{a(t_N) + h(t_N)-h(t_0)+\sum_{j=0}^{N-1} Err_2^+(t_j)}, \cfrac{-\sum_{j=0}^{N-1} Err_3(t_j)} {a(t_N) + h(t_N)-h(t_0)+\sum_{j=0}^{N-1} Err_2^+(t_j)} \right \}\\
        \text{s.t.} \quad & a(t_j), b(t_j), c(t_j), d(t_j) \text{ satisfies Eq.(\ref{discrete lower bound of E})}
    \end{aligned}
\end{equation}

In addition to solve the above optimization problem, we can also identify sequence $a(t_j)$, $h(t_j)$, $d(t_j)$ by using parameter functions in the Lyapunov function. It implies that the objective is to determine the sequence of $t_j$. If we choose a proper sequence $t_j$ such that $\sum_{j=0}^{N-1} Err_1(t_j)$, $\sum_{j=0}^{N-1} Err_2(t_j)$, and $\sum_{j=0}^{N-1} \frac{a(t_{j+1})L}{2a_{t_N}} \left\|\mathbf{x}(t_{j+1})-\mathbf{x}(t)_j\right\|^2$ are small enough, the approximation ratio of the discrete implementation has a small loss on the approximation ratio of the Lyapunov function. Based on Tab.\ref{Parameter Function in Lyapunov Function}, we calculate errors that comes from the differential equation.
\begin{equation}
\begin{aligned}
    Err_1(t_j) & = \begin{cases}
        0  &  t_j \le \theta \\
        e^{t_j}(1+t_{j+1}-t_j)-e^{t_{j+1}} \le 0 & t_j > \theta
    \end{cases}\\
    Err_2(t_j) & = \begin{cases}
        1+t_{j+1}-t_j-e^{t_{j+1}-t_j} \le 0 & t_j \le \theta \\
        0 & t_j > \theta
    \end{cases}
\end{aligned}
\end{equation}
where inequalities are due to $ e^{t_{j+1}-t_j}\ge1+t_{j+1}-t_j$. It implies that errors of differential equations are negligible, that is, we can release $F(\mathbf{x}(t_j))$ and $F(\mathbf{y})$ in $E(\mathbf{x}(t_{j+1}))-E(\mathbf{x}(t_j))$. In other words, $E(\mathbf{x}(t_{j+1}))-E(\mathbf{x}(t_j)) \ge -\frac{a(t_{j+1})L}{2} \left\|\mathbf{x}(t_{j+1})-\mathbf{x}(t_j)\right\|^2$. Furthermore, we have $t_N =1 $ and $t_0 = 0$. Therefore, the approximation ratio of the discrete implementation of the continuous algorithm $\mathbf{x}(t)$ is as follows:
\begin{equation}\label{approximation ratio in potential function}
    \begin{aligned}
         & \cfrac{e}{e+e^{\theta}\cdot (3-\theta)-3}F(\mathbf{x}(t_N)) + \cfrac{e^{\theta}\cdot (3-\theta)-3}{e+e^{\theta}\cdot (3-\theta)-3} F(\mathbf{y})\\
         & \quad \ge \cfrac{e^\theta\cdot (2-\theta)-1}{e+e^{\theta}\cdot (3-\theta)-3} F(\mathbf{x}^*) - \sum_{j=0}^{N-1} \cfrac{a(t_{j+1})L}{2 (e+e^{\theta}\cdot (3-\theta)-3)}\left\|\mathbf{x}(t_{j+1})-\mathbf{x}(t_j)\right\|^2 
    \end{aligned}
\end{equation}

Note that if we can restrict the last term small enough in the above inequality, it implies that the discrete implementation is a $0.385$ approximation algorithm to maximize general continuous DR-submodular functions subject to down-closed convex constraints. Additionally, the last term is often called convergence rate. Next, we analyze the complexity of the discrete implementation of the Algorithm~\ref{algorithm 1}. For convenience, let $N_1$ be the number of iterations in $t \in [0,\theta]$, $N_2$ be the number of iterations in $t \in (\theta,1]$, $D$ be the diameter of the feasible domain, i.e., $D= \max_{\mathbf{x},\mathbf{y} \in \mathcal{P}} \left\|\mathbf{x}-\mathbf{y}\right \|^2$, and $t_{j+1}-t_j = \frac{1}{N_1+N_2}=\frac{1}{N}$.

\begin{equation}
    \begin{aligned}
       & \sum_{j=0}^{N-1} \cfrac{a(t_{j+1})L}{2 (e+e^{\theta}\cdot (3-\theta)-3)} \left\|\mathbf{x}(t_{j+1})-\mathbf{x}(t_j)\right\|^2 = \sum_{j=0}^{N-1} \cfrac{[d(t_{j+1})-d(t_j)]^2 q(t_j)^2L}{2a(t_{t+1}) (e+e^{\theta}\cdot (3-\theta)-3)} \left\|v(\mathbf{x}(t_j)\right\|^2\\
        & \le \sum_{j=0}^{N_1-1} \cfrac{[e^{t_{j+1}}-e^{t_j}]^2 LD}{2e^{t_{j+1}}(e+e^{\theta}\cdot (3-\theta)-3)} + \sum_{j=N_1-1}^{N-1} \cfrac{e^{2t_j}(t_{j+1}-t_j)^2 LD}{2e^{t_{j+1}}(e+e^{\theta}\cdot (3-\theta)-3)} \\
        & = \cfrac{DL}{2 (e+e^{\theta}\cdot (3-\theta)-3)}\cdot \left[ (1-e^{\frac{-1}{N_1+N_2}}) \cdot (e^{\theta}-1) +\cfrac{e^{\theta}-e}{(N_1+N_2)^2 e^{\frac{1}{N_1+N_2}} (1-e^{\frac{1}{N_1+N_2}}) } \right] \\
        & \le \cfrac{DL}{2 (e+e^{\theta}\cdot (3-\theta)-3)} \cdot \left[\cfrac{e^\theta-1}{N_1+N_2} + \cfrac{e-e^{\theta}}{(N_1+N_2)^2 \frac{1+N_1+N_2}{N_1+N_2}\frac{1}{N_1+N_2}   } \right] = O\left(\cfrac{DL}{N_1+N_2}\right) = \varepsilon
    \end{aligned}  
\end{equation}
The first inequality comes from the definition of $D$, i.e., $\left\|v(\mathbf{x}(t_j))\right\|^2 \le D$. The second equality is due to $e^{t_0} = 1$, $e^{t_{N_1}} = e^{\theta}$, $e^{t_{N}} = e$, $t_{j+1}-t_j = \frac{1}{N_1+N_2}$ and $1+x\le e^{x}$. Therefore, we have $N_1+N_2 = O\left(\frac{DL}{\varepsilon}\right)$. In addition, we have $\frac{N_1}{N_1+N_2} = \theta$. It implies $N_1 = O\left(\frac{\theta DL}{\varepsilon}\right)$ and $N_2 = O\left(\frac{(1-\theta)DL}{\varepsilon}\right)$. Therefore, we have the following conclusion.

\begin{theorem}
    Algorithm~\ref{algorithm 2} outputs a solution that is at least $0.385$ times of the optimal value for problem (\ref{eq:main_prob}) with a gap smaller than $\varepsilon$ with at most $O\left(\frac{DL}{\varepsilon}\right)$ iterations, where $D$ is the diameter of the feasible domain and $L$ is the L-smooth constant of the DR-submodular function $F$.
\end{theorem}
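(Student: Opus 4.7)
The plan is to package the discrete-time analysis already built up in Subsection~4.3 into a clean telescoping argument and then verify that the step-size schedule in Algorithm~2 keeps the cumulative smoothness error within $\varepsilon$. Specifically, I would take the parametric coefficient sequences $a(t_j),h(t_j),d(t_j),p(\mathbf{x}(t_j)),q(\mathbf{x}(t_j)),r(\mathbf{x}(t_j))$ from Table~2 (evaluated on the discrete grid with step $1/N$, $N=N_1+N_2$) and plug them into the potential function $E(\mathbf{x}(t_j))=a(t_j)F(\mathbf{x}(t_j))+h(t_j)F(\mathbf{y})-d(t_j)F(\mathbf{x}^*)$. The update rule of Algorithm~2 is precisely the discrete analog of $\dot{\mathbf{x}}(t)=\tfrac{[d(t_{j+1})-d(t_j)]q(\mathbf{x}(t_j))v(\mathbf{x}(t_j))}{a(t_{j+1})}$, so the first three bracketed terms in Eq.~(\ref{discrete lower bound of E}) collapse and only the $L$-smoothness term $\tfrac{a(t_{j+1})L}{2}\|\mathbf{x}(t_{j+1})-\mathbf{x}(t_j)\|^2$ remains.

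Next I would verify the sign conditions $\mathrm{Err}_1(t_j)\le 0$ and $\mathrm{Err}_2(t_j)\le 0$ already computed in the excerpt (both inequalities reduce to $e^{t_{j+1}-t_j}\ge 1+(t_{j+1}-t_j)$), so these error terms can be dropped from the lower bound of $E(\mathbf{x}(t_{j+1}))-E(\mathbf{x}(t_j))$. Telescoping from $j=0$ to $N-1$, together with $F(\mathbf{x}(0))=F(\mathbf{0})\ge 0$, $t_0=0$, $t_N=1$, yields
\begin{equation*}
a(1)F(\mathbf{x}(1))+[h(1)-h(0)]F(\mathbf{y})\ \ge\ [d(1)-d(0)]F(\mathbf{x}^*)-\sum_{j=0}^{N-1}\frac{a(t_{j+1})L}{2}\|\mathbf{x}(t_{j+1})-\mathbf{x}(t_j)\|^2,
\end{equation*}
which, after dividing by $a(1)+h(1)-h(0)$ and substituting the closed forms from Table~2, gives exactly inequality~(\ref{approximation ratio in potential function}) with ratio $\frac{e^\theta(2-\theta)-1}{e+e^\theta(3-\theta)-3}\ge 0.385$ at $\theta\approx 0.372$.

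To show the cumulative smoothness error is at most $\varepsilon$ I would reproduce the bound already sketched in the paragraph preceding the theorem: using $\|v(\mathbf{x}(t_j))\|^2\le D$, splitting the sum across the two phases, and using $e^{t_{j+1}}-e^{t_j}=e^{t_j}(e^{1/N}-1)$ on the first phase and $t_{j+1}-t_j=1/N$ on the second, one obtains $\sum_j\mathrm{Err}_3(t_j)=O(DL/N)$. Choosing $N_1=O(\theta DL/\varepsilon)$ and $N_2=O((1-\theta)DL/\varepsilon)$ (so that $N_1/(N_1+N_2)=\theta$, which is needed for the grid to cross $t=\theta$ at the right iteration) makes this sum at most $\varepsilon$.

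Finally, the randomized output of Algorithm~2 -- returning $\mathbf{x}(1)$ with probability $p=a(1)/(a(1)+h(1)-h(0))$ and $\mathbf{y}$ otherwise -- has expected value exactly the left-hand side of the telescoped inequality divided by $a(1)+h(1)-h(0)$, so the expected value is at least $(0.385-O(\varepsilon))F(\mathbf{x}^*)$, with total iteration count $N_1+N_2=O(DL/\varepsilon)$. The main obstacle is the bookkeeping in step three: one must align the step-size choices in Algorithm~2 (namely $1-e^{-1/N}$ on $[0,\theta]$ and $\tfrac{1}{Ne^{1/N}}$ on $(\theta,1]$) with the two different forms of $d(t_{j+1})-d(t_j)$ from Table~2, and verify that the error telescopes cleanly despite the change of regime at $t=\theta$; everything else reduces to substitutions and the $e^x\ge 1+x$ inequality.
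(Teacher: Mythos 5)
Your proposal is correct and follows essentially the same route as the paper: the paper's own argument (laid out in the text of Subsection~4.3) likewise plugs the Table~3 coefficient sequences into the discrete potential function, verifies via $e^{x}\ge 1+x$ that $Err_1(t_j)$ and $Err_2(t_j)$ are non-positive so only the $L$-smoothness term survives, telescopes to obtain the ratio $\frac{e^{\theta}(2-\theta)-1}{e+e^{\theta}(3-\theta)-3}\ge 0.385$, and bounds the cumulative error by $O(DL/(N_1+N_2))=\varepsilon$ with $N_1/(N_1+N_2)=\theta$. No substantive differences to report.
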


Note that a rephrased version of Lemma 2.3.7 in \cite{Feldman2} implies that $O(DL) = O\left(n^3\right) F(\mathbf{x}^*)$ for submodular set functions. Therefore, Algorithm~\ref{algorithm 2}  is a $0.385-O(n^{-1})$ approximation algorithm for problem (\ref{eq:main_prob}) with $O\left(\frac{DL}{\varepsilon}\right) = O\left(n^4\right)$ iterations, i.e., $N_1= O\left(\theta n^4\right)$ and $N_2=O\left((1-\theta) n^4\right)$. In addition, it is easy to verify that approximating $e^{t_j}$ with $\left(1-\frac{1}{N_1+N_2}\right)^{-j}$ leads to step size $\frac{1}{N_1+N_2}$ and $\left(1-\frac{1}{N_1+N_2}\right)\frac{1}{N_1+N_2}$, which mathces the results in \cite{Buchbinder2}. 

\subsection{Discussions}
In general, the partition $(A,\hat{A})$ in Lemma \ref{lemma 14} is an adaptive partition, which can change over time $t$ subject to $\theta_A(t) \le \theta_{\hat{A}}(t)$. Especially, the partition $(A,\hat{A})$ can also be pre-determined before starting the continuous algorithm $\mathbf{x}(t)$. Therefore, we should guarantee the following inequality over each time $t$ for the pre-determined partition.
\begin{equation*}
    \int_0^t \dot{\mathbf{x}}_A(t) dt \le \int_0^t \dot{\mathbf{x}}_{\hat{A}}(t) dt
\end{equation*}

Although the above analysis proves the optimality of the algorithm to some extent under a partition that is independent of time, whether there are suitable time-dependent partitions to lead to a better approximation is still an open question. 

Claim \ref{claim} illustrates that the worst stationary point without any approximation ratio is probably close to the boundary of the feasible domain in some dimensions. This phenomenon inspires a natural idea to obtain a potentially improved solution $\mathbf{x}(t)$, i.e., $\theta_A(t) \le \theta_{\hat{A}}(t)$, where $A$ is a dimension set which are the dimensions near the boundary of the feasible domain in $\mathbf{y}$. Therefore, we choose the pre-determined partition $A, \hat{A}$ depending on the stationary point $\mathbf{y}$. However, there is another important remaining question, whether there are better choices for $\mathbf{y}$ except stationary points. If yes, it potentially improves the $0.385$ approximation algorithm for the problem (\ref{eq:main_prob}).

\section{Numerical Experiments}
In this section, we test five different algorithms on some applications mentioned in Section \ref{section 1}. Two of them serve as baseline algorithms to contrast algorithms in our work.
\begin{itemize}
    \item Baseline algorithms: \textsc{Two-Phase Frank-Wolfe} and \textsc{Frank-Wolfe Variant} have been proposed by \cite{Bian1} to maximize continuous DR-submodular functions subject to down-closed convex constraints. The first one is a $1/4$ approximation algorithm, which finds a stationary point $\mathbf{x}$ in the feasible domain $\mathcal{P}$ and a stationary point $\mathbf{y}$ in the feasible domain $\mathcal{P}\wedge (\mathbf{u}-\mathbf{x})$ and returns the best of them. The second one is an $1/e$ approximation algorithm, which returns a solution $\mathbf{x}(1)$ in feasible domain $\mathcal{P}$ where $x_i(1)-x_i(0) \le (1-e^{-1}) (u_i-l_i)$ for all $i \in N$.  
    \item Algorithms in this work: we adopt \textsc{Frank-Wolfe} to return a stationary point, which is proved in Subsection \ref{subsection 3.1} by a concrete example that the stationary point may be arbitrarily bad. In addition, we adopt \textsc{Frank-Wolfe} to return a stationary point in the reduce feasible domain, namely \textsc{Reduced Frank-Wolfe}, which has the $0.309$ approximation guarantee. \textsc{Aided Frank-Wolfe Variant} is a $0.385$ approximation algorithm analyzed in Section \ref{section 4}. Although it returns the stationary point $\mathbf{y}$ with probability $0.23$ and the solution $\mathbf{x}(1)$ returned by \textsc{Frank-Wolfe Variant} stage with probability $0.77$, we make a slight change, which returns the best one in $\mathbf{x}(1)$ and $\mathbf{y}$. It does not cause any loss to the approximation ratio.   
\end{itemize}
Because both \textsc{Two-Phase Frank-Wolfe} and \textsc{Aided Frank-Wolfe Variant} need to find a stationary point in the feasible domain $\mathcal{P}$, in the following numerical experiments, we adopt the stationary point returned by \textsc{Frank-Wolfe} as their stationary points. Additionally, in the following tables and figures, without causing ambiguity, \textsc{Two-Phase Frank-Wolfe-2} represents the second stationary point returned by \textsc{Two-Phase Frank-Wolfe}, and \textsc{Aided Frank-Wolfe Variant-2} represents the solution returned by \textsc{Frank-Wolfe variant} stage in \textsc{Aided Frank-Wolfe Variant}. In other words, the solution returned by \textsc{Two-Phase Frank-Wolfe} is the best one of \textsc{Frank-Wolfe} and \textsc{Two-Phase Frank-Wolfe-2}; the solution returned by \textsc{Aided Frank-Wolfe Variant} is the best one of the \textsc{Frank-Wolfe} and \textsc{Aided Frank-Wolfe Variant-2}. In our experiments, the number of iterations is $N=100$. Because there is randomness in all examples, the number of repeated trails is $T=10$. 

\begin{table}[htb]
    \centering
    \caption{Average function value with standard deviation in NQP}
    \begin{tabular}{m{1.2cm} m{3.3cm} m{3.1cm} m{3.1cm} m{3.1cm}}
        & Algorithm & $m=\lfloor 0.1n \rfloor$ & $m=\lfloor 0.3n \rfloor$ & $m=\lfloor 0.5n \rfloor$ \\
         \hline
       \multirow{9}{*}{$n=100$} & \textsc{Frank-Wolfe} & $843.2299 \pm 5.4480$ & $836.0062 \pm 3.3833$ & $759.4658 \pm 9.2937$\\
       \cline{2-5}
       & \textsc{Two-Phase Frank-Wolfe-2} & $837.0060 \pm 4.6215$ & $833.4710 \pm 3.5045$ & $759.2584 \pm 8.7699$\\
       \cline{2-5}
       & \textsc{Reduced Frank-Wolfe}& $836.5517 \pm 5.2064$ & $833.1673 \pm 3.4484$ & $759.3279 \pm 8.9670$\\
       \cline{2-5}
       & \textsc{Frank-Wolfe Variant}& $840.1015 \pm 5.1331$ & $834.7916 \pm 3.6254$ & $759.4359 \pm 9.2176$\\
       \cline{2-5}
       & \textsc{Aided Frank-Wolfe Variant-2}& $838.0243 \pm 4.9410$ & $833.6677 \pm 3.6423$ & $759.1773 \pm 9.0697$\\
       \hline
       \multirow{9}{*}{$n=500$} & \textsc{Frank-Wolfe} & $20806.15 \pm 20.6679$ & $20699.98 \pm 23.9101$ & $18818.92 \pm 35.9268$\\
       \cline{2-5}
       & \textsc{Two-Phase Frank-Wolfe-2} & $20759.12 \pm 20.2773$ & $20675.61 \pm 21.4277$ & $18815.96 \pm 33.5124$ \\
       \cline{2-5}
       & \textsc{Reduced Frank-Wolfe}& $20750.54 \pm 19.6014 $ & $20674.48 \pm 21.4754$ & $18816.51 \pm 34.1660$\\
       \cline{2-5}
       & \textsc{Frank-Wolfe Variant}& $20783.49 \pm 18.9631$ & $20689.62 \pm 21.6035$ & $18817.98 \pm 34.9901$\\
       \cline{2-5}
       & \textsc{Aided Frank-Wolfe Variant-2}& $20760.85 \pm 19.2939$ & $20672.05 \pm 21.0048$ & $18815.81 \pm 33.9950$\\
       \hline 
       \multirow{9}{*}{$n=1000$} & \textsc{Frank-Wolfe} & $82979.21 \pm 60.7779$ & $82762.57 \pm 69.0859$ & $75125.82 \pm 65.5320$\\
       \cline{2-5}
       & \textsc{Two-Phase Frank-Wolfe-2} & $82857.82 \pm 72.6572$ & $82694.05 \pm 63.6619$ & $75125.25 \pm 64.8827$ \\
       \cline{2-5}
       & \textsc{Reduced Frank-Wolfe}& $82833.04 \pm 67.5125$ & $82690.23 \pm 64.5268$ & $75125.12 \pm 64.8164$\\
       \cline{2-5}
       & \textsc{Frank-Wolfe Variant}& $82921.56 \pm 69.3485$ & $82736.38 \pm 66.0542$ & $75125.74 \pm 65.4271$\\
       \cline{2-5}
       & \textsc{Aided Frank-Wolfe Variant-2}& $82850.28 \pm 69.7793$ & $82678.13 \pm 62.3230$ & $75123.11 \pm 64.1905$\\
       \hline 
    \end{tabular}
    \label{NQP}
\end{table}

\subsection{Non-monotone Quadratic Programming}
Our first instance is the synthetic function of non-monotone quadratic programming. The objective function is a non-monotone DR-submodular function $F(\mathbf{x})= \frac{1}{2} \mathbf{x}^T \mathbf{H} \mathbf{x} + \mathbf{h}^T \mathbf{x} + c$ and the constraints are linear inequalities $\mathcal{P} = \{\mathbf{x} \in [\mathbf{0},\mathbf{1}]| \mathbf{A}\mathbf{x} \le \mathbf{b}, \mathbf{A} \in \mathbb{R}^{m\times n}_{+}, \mathbf{b} \in \mathbb{R}^m_+ \}$. We randomly generate the matrix $\mathbf{H}$, whose entries are uniformly distributed in $[-1,0]$, and the vector $\mathbf{h}$, whose components are uniformly distributed in $[0,1]$. To ensure that $F(\mathbf{x})$ is non-negative, we choose $c$ to be a large enough number. As for constraints, we randomly generate the matrix $\mathbf{A}$, whose entries are uniformly distributed in $[0,1]$, and the vector $\mathbf{b}$, whose components are uniformly distributed in $[0,0.05n]$, where we set $m=\lfloor 0.1n \rfloor, \lfloor 0.3n \rfloor, \lfloor 0.5n \rfloor$, respectively; and $n=100, 500, 1000$, respectively.  

Tab.\ref{NQP} shows the averages and standard deviations of the objective value over $10$ repeated trails. Although the stationary point is probably arbitrarily bad in the worst cases, it is the best solution in our numerical experiments. In other words, \textsc{Two-Phase Frank-Wolfe} and \textsc{Aided Frank-Wolfe Variant} have the best performance in \textsc{Two-Phase Frank-Wolfe}, \textsc{Aided Frank-Wolfe Variant}, \textsc{Reduced Frank-Wolfe}, and \textsc{Frank-Wolfe Variant}. This is consistent with the results in \cite{Bian1}, who illustrate that the \textsc{Two-Phase Frank-Wolfe} performs better than \textsc{Frank-Wolfe Variant}. However, our results more clearly show that \textsc{Two-Phase Frank-Wolfe} has good performances because the stationary point is good enough. In other words, \textsc{Two-Phase Frank-Wolfe-2} and \textsc{Aided Frank-Wolfe Variant-2} do not improve the quality of our solution in our experiments. In addition, \textsc{Reduced Frank-Wolfe} may only have theoretical value. Despite the approximation ratio of \textsc{Reduced Frank-Wolfe} being larger than that of \textsc{Two-Phase Frank-Wolfe}, the actual performance of the latter is better than that of the former.

\begin{table}[htb]
    \centering
    \caption{Average function value with standard deviation in regular coverage}
    \begin{tabular}{m{1.2cm} m{3.3cm} m{3.5cm} m{3.5cm} m{3.5cm}}
        & Algorithm & $m=\lfloor 0.1(2k+1) \rfloor$ & $m=\lfloor 0.3(2k+1) \rfloor$ & $m=\lfloor 0.5(2k+1) \rfloor$ \\
         \hline
       \multirow{9}{*}{$k=10$} & \textsc{Frank-Wolfe} & $ 9.9997\pm 0.0001$ & $9.9998 \pm 6.15e-05$ & $9.9999 \pm 4.13e-05$\\
       \cline{2-5}
       & \textsc{Two-Phase Frank-Wolfe-2} & $1.0000 \pm 4.68e-09$ & $0.9999 \pm 1.41e-05$ & $0.9895 \pm 0.0285$\\
       \cline{2-5}
       & \textsc{Reduce Frank-Wolfe} & $3.8851 \pm 0.0037$ & $3.8945 \pm 0.0095$ & $3.9011 \pm 0.0087$\\
       \cline{2-5}
       & \textsc{Frank-Wolfe Variant}& $6.0734 \pm 0.0518$ & $6.1365 \pm 0.0291$ & $6.1546 \pm 0.0214$\\
       \cline{2-5}
       & \textsc{Aided Frank-Wolfe Variant-2}& $4.3887 \pm 0.0559$ & $4.4674 \pm 0.0259$ & $4.4884 \pm 0.0178$\\
       \hline
       \multirow{9}{*}{$k=100$} & \textsc{Frank-Wolfe} & $99.9957 \pm 7.64e-05$ & $99.9961 \pm 4.09e-05$ & $99.9963 \pm 5.22e-05$\\
       \cline{2-5}
       & \textsc{Two-Phase Frank-Wolfe-2} & $1.0000 \pm 2.15e-14$ & $1.0000 \pm 2.43e-14$ & $1.0000 \pm 1.21e-14$ \\
       \cline{2-5}
       & \textsc{Reduce Frank-Wolfe}& $38.2350 \pm 0.0043 $ & $38.2389 \pm 0.0006$ & $38.2431 \pm 0.0021$\\
       \cline{2-5}
       & \textsc{Frank-Wolfe Variant}& $62.2366 \pm 0.0257$ & $62.3166 \pm 0.0134$ & $62.2949 \pm 0.0151$\\
       \cline{2-5}
       & \textsc{Aided Frank-Wolfe Variant-2}& $42.5358 \pm 0.0565$ & $42.8831 \pm 0.0324$ & $43.0078 \pm 0.0380$\\
       \hline 
       \multirow{9}{*}{$k=500$} & \textsc{Frank-Wolfe} & $499.9791 \pm 8.35e-05$ & $499.9794 \pm 6.37e-05$ & $499.9803 \pm 7.86e-05$\\
       \cline{2-5}
       & \textsc{Two-Phase Frank-Wolfe-2} & $1.0000 \pm 6.14e-14$ & $1.0000 \pm 7.1e-14$ & $1.0000 \pm 4.63e-14$ \\
       \cline{2-5}
       & \textsc{Reduced Frank-Wolfe}& $190.5986 \pm 0.0068$ & $190.6434 \pm 0.0049$ & $190.6910 \pm 0.0087$\\
       \cline{2-5}
       & \textsc{Frank-Wolfe Variant}& $314.6732 \pm 0.0105$ & $314.7026 \pm 0.0093$ & $314.7211 \pm 0.0106$\\
       \cline{2-5}
       & \textsc{Aided Frank-Wolfe Variant-2}& $218.0404 \pm 0.0510$ & $218.1358 \pm 0.0237$ & $218.1503 \pm 0.0346$\\
       \hline 
    \end{tabular}
    \label{regular coverage}
\end{table}

\subsection{Regular Coverage Maximization}
Our second instance is the multilinear extension of the regular coverage function in Example \ref{example 1}, where the constraints are linear inequalities $\mathcal{P} = \{\mathbf{x} \in [\mathbf{0},\mathbf{1}]| \mathbf{A}\mathbf{x} \le \mathbf{b}, \mathbf{A} \in \mathbb{R}^{m\times (2k+1)}_{+}, \mathbf{b} \in \mathbb{R}^{2k+1}_+ \}$, $k=10, 100, 500$, and $m=\lfloor 0.1(2k+1) \rfloor, \lfloor 0.3(2k+1) \rfloor, \lfloor 0.5(2k+1) \rfloor$, respectively. Each entry in the matrix $\mathbf{A}$ is uniformly distributed in $[0,1]$. Each component in the vector $\mathbf{b}$ is uniformly distributed in $[0,0.05(2k+1)]$.

Tab.\ref{regular coverage} shows that the stationary point returned by \textsc{Frank-Wolfe} is the best among all algorithms for this problem; that is, \textsc{Two-Phase Frank-Wolfe} and \textsc{Aided Frank-Wolfe Variant} have the best performance. In addition, Tab.\ref{regular coverage} is consist with our theoretical analyses in Subsection \ref{subsection 3.1}. The \textsc{Two-Phase Frank-Wolfe-2} row in Tab.\ref{regular coverage} illustrates that \textsc{Frank-Wolfe} can indeed return a poor solution in numerical experiments, which implies that \textsc{Aided Frank-Wolfe Variant} potentially returns a better solution than \textsc{Two-Phase Frank-Wolfe} for this problems. Similar to NQP, \textsc{Reduced Frank-Wolfe} seems to have only theoretical values.

\begin{table}[htb]
    \centering
    \caption{Average function value with standard deviation in DPPs}
    \begin{tabular}{m{1.2cm} m{3.3cm} m{3.1cm} m{3.1cm} m{3.1cm}}
        & Algorithm & $m=\lfloor 0.1n \rfloor$ & $m=\lfloor 0.3n \rfloor$ & $m=\lfloor 0.5n \rfloor$ \\
         \hline
       \multirow{9}{*}{$n=10$} & \textsc{Frank-Wolfe} & $ 0.8767\pm 0.5306$ & $0.6857 \pm 0.3032$ & $0.4604 \pm 0.1822$\\
       \cline{2-5}
       & \textsc{Two-Phase Frank-Wolfe-2} & $0.4678 \pm 0.2265$ & $0.5341 \pm 0.1589$ & $0.4499 \pm 0.1799$\\
       \cline{2-5}
       & \textsc{Reduced Frank-Wolfe}& $0.6342 \pm 0.3337$ & $0.6033 \pm 0.2212$ & $0.4468 \pm 0.1809$\\
       \cline{2-5}
       & \textsc{Frank-Wolfe Variant}& $0.7709 \pm 0.4389$ & $0.6473 \pm 0.2624$ & $0.4581 \pm 0.1830$\\
       \cline{2-5}
       & \textsc{Aided Frank-Wolfe Variant-2}& $0.6795 \pm 0.3778$ & $0.6005 \pm 0.2302$ & $0.4327 \pm 0.1754$\\
       \hline
       \multirow{9}{*}{$n=100$} & \textsc{Frank-Wolfe} & $1.2722 \pm 0.4443$ & $4.2817 \pm 0.6281$ & $4.4901 \pm 0.5403$\\
       \cline{2-5}
       & \textsc{Two-Phase Frank-Wolfe-2} & $1.0346 \pm 0.3638$ & $4.1647 \pm 0.6557$ & $4.4551 \pm 0.5859$ \\
       \cline{2-5}
       & \textsc{Reduced Frank-Wolfe}& $1.1455 \pm0.3779 $ & $4.2070 \pm 0.6327$ & $4.4636 \pm 0.5658$\\
       \cline{2-5}
       & \textsc{Frank-Wolfe Variant}& $1.2216 \pm 0.4090$ & $4.2695 \pm 0.6336$ & $4.4838 \pm 0.5475$\\
       \cline{2-5}
       & \textsc{Aided Frank-Wolfe Variant-2}& $1.1309 \pm 0.3730$ & $4.0572 \pm 0.6179$ & $4.3070 \pm 0.5600$\\
       \hline 
       \multirow{9}{*}{$n=500$} & \textsc{Frank-Wolfe} & $2.3437 \pm 0.8585$ & $20.5876 \pm 1.1112$ & $19.4039 \pm 1.0967$\\
       \cline{2-5}
       & \textsc{Two-Phase Frank-Wolfe-2} & $2.1179 \pm 0.8734$ & $20.5734 \pm 1.1179$ & $19.3901 \pm 1.1071$ \\
       \cline{2-5}
       & \textsc{Reduced Frank-Wolfe}& $2.2202 \pm 0.8642$ & $20.5634 \pm 1.1218$ & $19.3834 \pm 1.1049$\\
       \cline{2-5}
       & \textsc{Frank-Wolfe Variant}& $2.2986 \pm 0.8698$ & $20.5818 \pm 1.1126$ & $19.3961 \pm 1.0962$\\
       \cline{2-5}
       & \textsc{Aided Frank-Wolfe Variant-2}& $2.1657 \pm 0.8352$ & $19.8884 \pm 1.1097$ & $18.7706 \pm 1.0875$\\
       \hline 
    \end{tabular}
    \label{DPPs}
\end{table}

\subsection{Determinantal Point Processes}
The third instance is the softmax extension of determinantal point processes. We randomly generate a diagonal matrix $\mathbf{D}$ whose $n$ eigenvalues are uniformly distributed in $[-0.5,1.0]$ and take the power of $e$. Then, we construct a positive semidefinite matrix by $\mathbf{L} = \mathbf{V} \mathbf{D} \mathbf{V}^T$ where $\mathbf{V}$ is a random unitary matrix. Similarly, the constraints are linear inequalities $\mathcal{P} = \{\mathbf{x} \in [\mathbf{0},\mathbf{1}]| \mathbf{A}\mathbf{x} \le \mathbf{b}, \mathbf{A} \in \mathbb{R}^{m\times n}_{+}, \mathbf{b} \in \mathbb{R}^{n}_+ \}$ where entries of the matrix $\mathbf{A}$ are uniformly distributed in $[0,1]$, components of the vector $\mathbf{b}$ are uniformly distributed in $[0,0.05n]$, $m=\lfloor 0.1n \rfloor, \lfloor 0.3n \rfloor, \lfloor 0.5n \rfloor$, and $n=10, n=100, n=500$, respectively.  

In Tab.\ref{DPPs}, the results also imply that the stationary point is a good enough solution, that is, \textsc{Two-Phase Frank-Wolfe} and \textsc{Aided Frank-Wolfe Variant} perform better than \textsc{Reduced Frank-Wolfe} and \textsc{Frank-Wolfe Variant}. The solution returned by \textsc{Aided Frank-Wolfe Variant-2} is better than the solution returned by \textsc{Two-Phase Frank-Wolfe-2} in small-scale problems. Similarly, there is the same relationship between \textsc{Aided Frank-Wolfe Variant-2} and \textsc{Reduced Frank-Wolfe}. This phenomenon may be because the number of stationary points increases with the scale of the problem but the number of poor stationary points increases slowly with the scale of the problem. In other words, there is less chance of obtaining a bad stationary point in large-scale problems. 

\subsection{Revenue/Profit Maximization}
The last instance is the revenue/profit maximization problem in online social networks. We perform our algorithms in three networks. One is the synthetic network, which has $2708$ nodes and $5208$ edges. The remaining two networks come from the Stanford Large Network Dataset Collection (\cite{snapnets}). The dataset of Facebook has $4039$ nodes and $88234$ edges and the dataset of Twitter has $81306$ nodes and $1768149$ edges. We set the weight of each edge $(u,v)$ uniformly distributed from $\{0.1,0.01,0.001\}$. Specially, in Eq.(\ref{revenue function}), we adopt the same parameter functions as \cite{Bian}, i.e., $R_s(\mathbf{x}) = \sqrt{\sum_{t:x_t \neq 0} x_t w_{st}}$, $\phi(x_t) = w_{tt} x_t$, $\bar{R}_t(\mathbf{x}) = -x_t$, where $w_{st}$ is the weight of the edge $(s,t)$, $\alpha=\beta=10$, and $\gamma= 0.5$. Because the edge from node $u$ to itself is not included in the three networks, we set $w_{tt} = 0.1$ for each node $t$. In addition, we test our algorithms under two constraints. One is the cardinality constraint, which is commonly known as the profit maximization problem, i.e., $\sum_{i} x_i \le b$. The other one is a system of linear inequalities $\mathcal{P} = \{\mathbf{x} \in [\mathbf{0},\mathbf{1}]| \mathbf{A}\mathbf{x} \le \mathbf{b}, \mathbf{A} \in \mathbb{R}^{m\times n}_{+}, \mathbf{b} \in \mathbb{R}^{n}_+ \}$ where entries of the matrix $\mathbf{A}$ are uniformly distributed in $[0,1]$ and $\mathbf{b}$ is $0.1n$, $0.3n$, and $0.5n$ respectively.

Fig.\ref{cardinality constraint} is the average profit value under cardinality constraints. Fig.\ref{linear system} is the average revenue value under a system of linear inequalities. They show that the stationary point $\mathbf{x}$ in $\mathcal{P}$ and the stationary point $\mathbf{y}$ in $\mathcal{P} \wedge (\mathbf{1}-\mathbf{x})$ perform better than other solutions. This means \textsc{Two-Phase Frank-Wolfe} and \textsc{Aided Frank-Wolfe Variant} have better practical performances than other algorithms. Because the revenue/profit maximization problem in online social networks is usually a large-scale problem, which has thousands or even millions of nodes, it supports our conjecture that there is less chance for algorithms to return a bad stationary point. In addition, Fig.\ref{cardinality constraint} and Fig.\ref{linear system} also illustrate that \textsc{Reduced Frank-Wolfe} is not suitable for practical problems and hence only interesting from a theoretical perspective.

\begin{figure}[htb]
    \begin{minipage}{0.31\linewidth}
        \includegraphics[width=5.2cm]{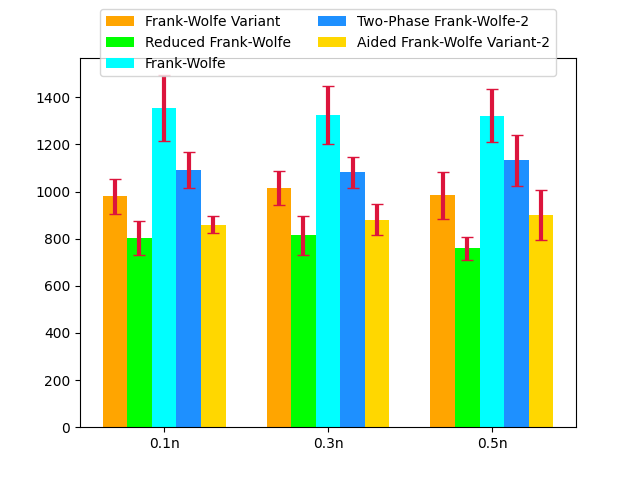}
    \end{minipage}
    \begin{minipage}{0.31\linewidth}
        \includegraphics[width=5.2cm]{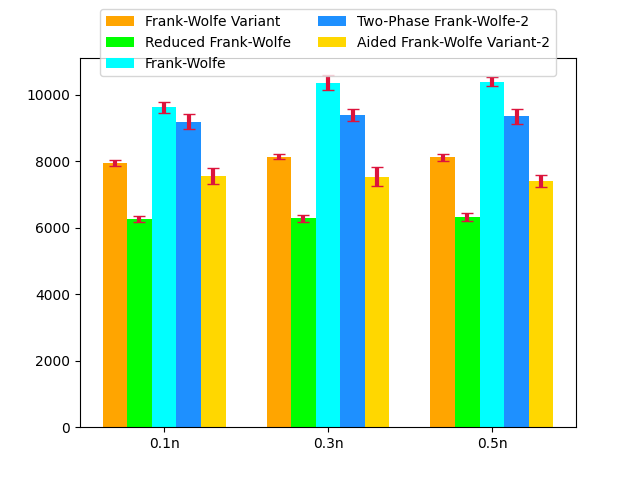}
    \end{minipage}
    \begin{minipage}{0.31\linewidth}
        \includegraphics[width=5.2cm]{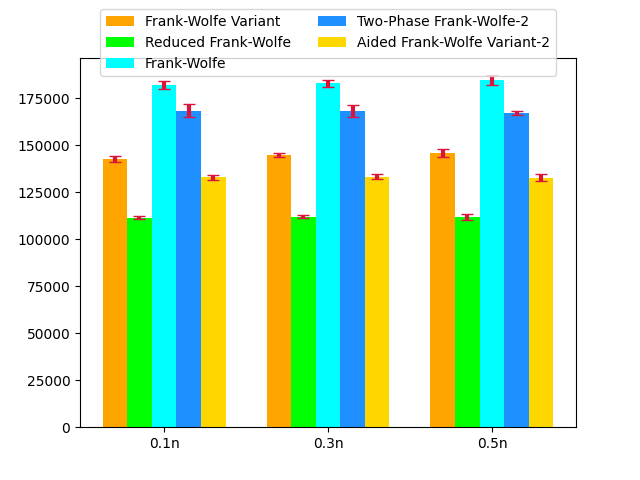}
    \end{minipage}
    \caption{Average function value with standard deviation in revenue maximization under cardinality constraints. From left to right are Synthesis, Facebook, and Twitter.}
    \label{cardinality constraint}
\end{figure}

\begin{figure}[htb]
    \centering
    \begin{minipage}{0.45\linewidth}
        \includegraphics[width=7cm]{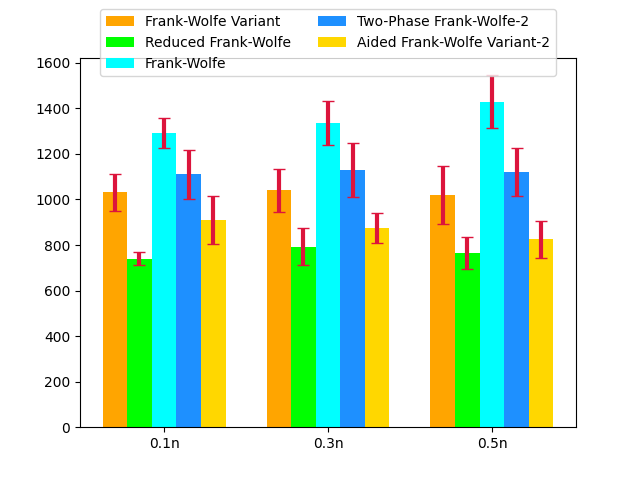}
    \end{minipage}
    \begin{minipage}{0.45\linewidth}
        \includegraphics[width=7cm]{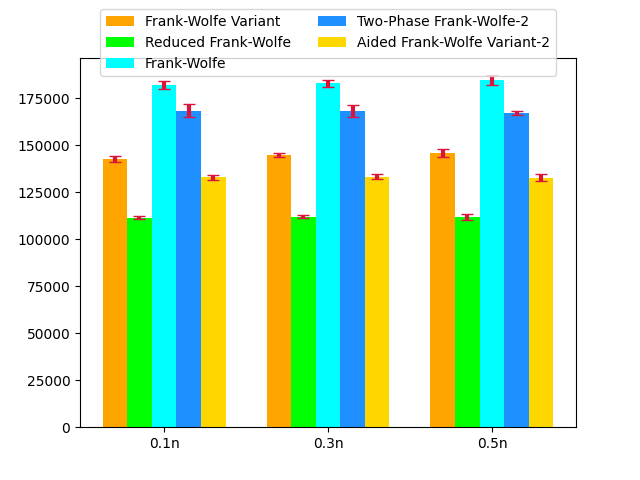}
    \end{minipage}
    \caption{Average function value with standard deviation in revenue maximization under a system of linear inequalities. From left to right are Synthesis and Facebook.}
    \label{linear system}
\end{figure}

\section{Conclusion}
We study the continuous non-monotone DR-submodular maximization problem subject to down-closed convex constraints. Firstly, we illustrate that stationary points can have an arbitrarily bad approximation ratio, a sharp contrast to the monotone DR-submodular case, which always admits a constant approximation~\cite{Hassani}. In particular, we extend the 0.309-approximation algorithm (\cite{Chekuri}) and the $0.385$-approximation algorithm (\cite{Buchbinder2}) to the continuous domain via new analyses without using any properties of the multilinear extension. The most important open question is to close the approximation gap for this problem, where the best lower bound is $0.385$ in continuous domains as shown in this paper and the best upper bound is $0.478$ as shown in \cite{qi2022maximizing}. Our numerical experiments show that both the $0.385$ approximation algorithm and \textsc{two-phase Frank-Wolfe} have excellent practical performances. However, the latter is only shown to be at least $1/4$ (\cite{Bian}). An interesting open question is whether the latter has a better theoretical performance.

\acks{Donglei Du's research is partially supported by NSERC grant (No.283106) and NSFC grants (Nos. 11771386 and 11728104). Wenguo Yang's research is partially supported by NSFC grants (Nos. 12071459 and 11991022). Dachuan Xu's research is partially supported by NSFC grant (No. 12131003).  }











\vskip 0.2in
\bibliography{ref}

\end{document}